\documentclass[11pt]{article}

\usepackage[margin=1in]{geometry}

\usepackage[utf8]{inputenc}
\usepackage[T1]{fontenc}  
\usepackage{multirow}
\usepackage{amssymb,amsmath,amsthm,amsfonts}
\usepackage{bm}
\usepackage{textgreek}
\usepackage{mathtools}
\usepackage{enumitem}
\usepackage{authblk}
\usepackage{graphicx}
\usepackage[font=small]{caption}
\usepackage[labelformat=simple]{subcaption}
\usepackage{float}
\usepackage[linesnumbered,ruled,vlined,boxed,algo2e]{algorithm2e}
\usepackage{physics}
\usepackage{footnote}
\usepackage{xcolor}
\usepackage{mathrsfs}
\usepackage{bbm}
\usepackage{makecell}

\usepackage[colorlinks,linkcolor=blue,citecolor=blue,anchorcolor=blue]{hyperref}
\usepackage{tikz}
\usetikzlibrary{trees}
\usetikzlibrary{calc}

\newtheorem{theorem}{Theorem}
\newtheorem{definition}{Definition}
\newtheorem{lemma}{Lemma}
\newtheorem{proposition}{Proposition}

\newtheorem{corollary}{Corollary}

\newtheorem{problem}{Problem}

\newcommand{\eqn}[1]{(\ref{eqn:#1})}

\newcommand{\thm}[1]{\hyperref[thm:#1]{Theorem~\ref*{thm:#1}}}
\newcommand{\cor}[1]{\hyperref[cor:#1]{Corollary~\ref*{cor:#1}}}
\newcommand{\defn}[1]{\hyperref[defn:#1]{Definition~\ref*{defn:#1}}}
\newcommand{\lem}[1]{\hyperref[lem:#1]{Lemma~\ref*{lem:#1}}}
\newcommand{\prop}[1]{\hyperref[prop:#1]{Proposition~\ref*{prop:#1}}}
\newcommand{\assum}[1]{\hyperref[assum:#1]{Assumption~\ref*{assum:#1}}}
\newcommand{\fig}[1]{\hyperref[fig:#1]{Figure~\ref*{fig:#1}}}
\newcommand{\tab}[1]{\hyperref[tab:#1]{Table~\ref*{tab:#1}}}
\newcommand{\algo}[1]{\hyperref[algo:#1]{Algorithm~\ref*{algo:#1}}}
\renewcommand{\sec}[1]{\hyperref[sec:#1]{Section~\ref*{sec:#1}}}
\newcommand{\append}[1]{\hyperref[append:#1]{Appendix~\ref*{append:#1}}}
\newcommand{\fac}[1]{\hyperref[fac:#1]{Fact~\ref*{fac:#1}}}
\newcommand{\lin}[1]{\hyperref[lin:#1]{Line~\ref*{lin:#1}}}
\newcommand{\prob}[1]{\hyperref[prob:#1]{Problem~\ref*{prob:#1}}}

\def\>{\rangle}
\def\<{\langle}

\newcommand{\vect}[1]{\ensuremath{\mathbf{#1}}}
\newcommand{\x}{\ensuremath{\mathbf{x}}}
\newcommand{\y}{\ensuremath{\mathbf{y}}}
\newcommand{\z}{\ensuremath{\mathbf{z}}}
\newcommand{\g}{\ensuremath{\mathbf{g}}}
\renewcommand{\v}{\ensuremath{\mathbf{v}}}
\newcommand{\xag}{\ensuremath{\mathbf{x}^{ag}}}
\newcommand{\xmd}{\ensuremath{\mathbf{x}^{md}}}

\newcommand{\N}{\mathbb{N}}
\newcommand{\B}{\mathbb{B}}

\newcommand{\R}{\mathbb{R}}

\newcommand{\E}{\mathbb{E}}

\newcommand{\tO}{\widetilde{\mO}}
\newcommand{\mO}{\mathcal{O}}
\newcommand{\mC}{\mathcal{C}}

\DeclareMathOperator{\Var}{Var}

\DeclareMathOperator{\out}{out}
\DeclareMathOperator{\sls}{\mathtt{StochasticLineSearch}}

\newcommand{\search}{\mathsf{Search}}
\newcommand{\parity}{\mathsf{Parity}}
\renewcommand{\d}{\mathrm{d}}
\renewcommand{\x}{\vect{x}}

\newcommand{\e}{\vect{e}}
\newcommand{\0}{\mathbf{0}}

\def\Tr{\operatorname{Tr}}\def\:{\hbox{\bf:}}

\SetKwInput{KwInput}{Input} 
\SetKwInput{KwParameter}{Parameters}                %
\SetKwInput{KwOutput}{Output}              %
\SetKwInput{KwReturn}{Return}

\let\oldnl\nl
\newcommand{\nonl}{\renewcommand{\nl}{\let\nl\oldnl}}

\newcommand{\defeq}{:=}

\title{Quantum speedups for stochastic optimization}

\author{Aaron Sidford and Chenyi Zhang\\
	\texttt{\{\href{mailto:sidford@stanford.edu}{sidford},%
		\href{mailto:chenyiz@stanford.edu}{chenyiz}\}@stanford.edu}}

\date{}

\begin{document}

\maketitle

\begin{abstract}
We consider the problem of minimizing a continuous function given given access to a natural quantum generalization of a stochastic gradient oracle. We provide two new methods for the special case of minimizing a Lipschitz convex function. Each method obtains a dimension versus accuracy trade-off which is provably unachievable classically and we prove that one method is asymptotically optimal in low-dimensional settings. Additionally, we provide quantum algorithms for computing a critical point of a smooth non-convex function at rates not known to be achievable classically. To obtain these results we build upon the quantum multivariate mean estimation result of Cornelissen et al.~\cite{cornelissen2022near} and provide a general quantum variance reduction technique of independent interest. 
\end{abstract}

\setcounter{tocdepth}{2}

\setcounter{tocdepth}{2}

\tableofcontents

\section{Introduction}
\label{sec:intro}

Stochastic optimization is central to modern machine learning. Stochastic gradient descent (SGD), and its many variants, are used broadly for solving challenges in data science and learning theory. In theory, SGD and stochastic optimization, have been the subject of decades of extensive study. \cite{cesa2004generalization,nemirovski2009robust,duchi2018introductory} established that SGD achieves optimal rates for minimizing Lipschitz convex functions\footnote{We assume all objective functions are differentiable. Similar to related work, see e.g., \cite{bubeck2019complexity}, our results can be generalized to non-differentiable settings since convex functions are almost everywhere differentiable and our algorithms and corresponding convergence analysis are robust to polynomially small numerical errors.
} (even in one dimension) and stochastic optimization methods have been established for a range of 
problems~\cite{polyak1987introduction,nesterov2003introductory,shalev2007online,shalev2014understanding}. More recently, the complexity of stochastic gradient methods for smooth non-convex optimization, e.g., critical point computation, were established~\cite{allen2018neon2,fang2018spider,fang2019sharp,jin2017escape,jin2019stochastic}.

Given the foundational nature of stochastic optimization and the potential promise and increased study of quantum algorithms, it is natural to ask whether quantum computation could enable improved rates for solving these problems. There has been work studying whether access to quantum counterparts of classic optimization oracle can yield faster rates for semidefinite programs~\cite{brandao2017quantum,brandao2017SDP,gong2022robustness,van2019improvements,van2020quantum,kerenidis2018interior}, convex optimization~\cite{chakrabarti2020optimization,vanApeldoorn2020optimization,chakrabarti2023quantum}, and non-convex optimization~\cite{childs2022quantum,liu2023quantum,zhang2021quantum}. Notably, \cite{jordan2005fast} showed that with just access to the quantum analog of an \emph{zeroth-order oracle}, i.e., an oracle that when queried at a point outputs the value of the function at that point, it is possible to simulate access to a classic gradient oracle with a single query. This tool immediately yields improved rates for, e.g., convex function minimization, with a zeroth-order oracle.

Unfortunately, despite this progress and the established power of quantum evaluation oracles, obtaining further improvements has been challenging. A line of work established a variety of striking lower bounds ruling out quantum speedups for fundamental optimization problems~\cite{garg2020no,garg2021near,zhang2022quantum}. For example, \cite{garg2020no} showed that when given access to the quantum analog of a \textit{first-order oracle}, i.e., an oracle that when queried at a point outputs the value of the function as well as the gradient at that point, quantum algorithms have no improved rates for non-smooth convex optimization over GD and SGD when the dimension is large. \cite{zhang2022quantum} extended this result to the non-convex setting and showed that given access to the quantum analog of a stochastic gradient oracle, quantum algorithms have no improved rates for finding critical points over SGD when the dimension is large.

In spite of these negative results, we nevertheless ask, \emph{for stochastic optimization are quantum-speedups obtainable?} Our main result is an answer to this question in the affirmative for dimension-dependent algorithms. We provide two different quantum algorithms for stochastic convex optimization (SCO) which provably outperform optimal classic algorithms. Furthermore, we provide a quantum algorithm for computing the critical point of a smooth non-convex function which improves upon the state-of-the-art. We obtain these results through a new general quantum-variance reduction technique built upon the quantum multivariate mean estimation result of Cornelissen et al.~\cite{cornelissen2022near} and the multilevel Monte Carlo (MLMC) technique~\cite{giles2015multilevel,blanchet2015unbiased,asi2021stochastic}. We complement these results with lower bounds showing that one of them is asymptotically optimal in low-dimensional settings. 

\paragraph{General notation.}
We use $\|\cdot\|$ to denote the Euclidean norm and let $\B_R(\x) \defeq \{\y\in\R^d\colon\|\y-\x\|\leq R\}$ and $[T] \defeq \{1,\ldots,T\}$. We use bold letters, e.g., $\x,\y$, to denote vectors and capital letters, e.g., $A,B$, to denote matrices. For a $d$-dimensional random variable $X$, we refer to the trace of the covariance matrix of $X$ as its variance, denoted by $\mathrm{Var}[X]$. For $f\colon\R^d\to\R$, we let $f^* \defeq \inf_{\x}f(\x)$ and call $\x \in \R^d$ \emph{$\epsilon$-(sub)optimal} if $f(\x) \leq f^* + \epsilon$ and \emph{$\epsilon$-critical} if $\norm{\nabla f(\x)} \leq \epsilon$. Moreover, we call a random point $\x\in\R^d$ \emph{expected $\epsilon$-(sub)optimal} if $\E f(\x) \leq f^* + \epsilon$ and \emph{expected $\epsilon$-critical} if $\E\norm{\nabla f(\x)} \leq \epsilon$. $f\colon\R^d\to\R$ is \emph{$L$-Lipschitz} if
$f(\x)-f(\y)\leq L\|\x-\y\|$ for all $\x,\y\in\R^d$ and \emph{$\ell$-smooth} if 
$\|\nabla f(\x)-\nabla f(\y)\|\leq \ell\|\x-\y\|$ for all $\x,\y\in\R^d$. For two matrices $A,B\in\R^{d\times d}$, $A \preceq B$ denotes that $\x^T A \x \leq \x^T B \x$ for all $\x\in\R^d$. We use $\tO$ to denote the big-$\mO$ notation omitting poly-logarithmic factors in $\epsilon,\epsilon',d,\sigma,\hat{\sigma},R,$ and $L$. When applicable, we use $\ket{\mathrm{garbage}(\cdot)}$ to denote possible garbage states\footnote{The garbage state is a quantum analogue of classical garbage information that arises when preparing the classical stochastic gradient oracle which cannot be erased or uncomputed in general. In this work, we consider a general model where we make no assumption on the garbage state. See e.g.,~\cite{gilyen2020distributional} for a similar discussion of this standard use of garbage quantum states.} that arise during the implementation of a quantum oracle (e.g., in Definitions \eqn{OX-defn} and \eqn{quantum-SG-oracle}).

\subsection{Quantum stochastic optimization oracles}
\label{sec:intro:quantum}

Here we formally define quantum stochastic optimization oracles that we study in this work.

\paragraph{Qubit notation.}
We use $\ket{\cdot}$ to represent input or output registers made of qubits that could be in \textit{superpositions}. In particular, given $m$ points $\x_1,\ldots,\x_m\in\R^d$ and a coefficient vector $\vect{c}\in\mathbb{C}^m$ with $\sum_{i \in [m]} |c_i|^2=1$, the quantum register could be in the quantum state $\ket{\psi}=\sum_{i \in [m]} c_i\ket{\x_i}$, which is superposition over all these $m$ points at the same time. If we measure this state, we will get each $\x_i$ with probability $|c_i|^2$. Furthermore, to model a classical probability distribution $p$ over $\R^d$ quantumly, we can prepare the quantum state $\int_{\x\in\R^d}\sqrt{p(\x)\d\x}\ket{\x}$. If we measure this state, the measurement outcome would follow the probability density function $p$.

\paragraph{Quantum random variable access.}
We say that we have \emph{quantum access to a $d$-dimensional random variable $X$} if we can query the following \textit{quantum sampling oracle} of $X$ that returns a quantum superposition over the probability distribution of $X$ defined as follows.\footnote{Throughout this paper, whenever we have access to a quantum oracle $O$, we assume that it is a unitary operation and that we also have access to its corresponding inverse operation, denoted as $O^{-1}$, that satisfies $O^{-1}O=OO^{-1}=I$. These are a standard assumption, either explicitly or implicitly, in prior work on quantum algorithms, see e.g.,~\cite{brassard2002quantum,hamoudi2021quantum,cornelissen2022near}.}
\begin{definition}[Quantum sampling oracle]\label{defn:OX}
For a $d$-dimensional random variable $X$, its quantum sampling oracle $O_X$ is defined as
\begin{align}\label{eqn:OX-defn}
O_X\ket{0}\rightarrow\int_{\x\in\R^d}\sqrt{p_X(\x)\d\x}\ket{\x}\otimes\ket{\mathrm{garbage}(\x)},
\end{align}
where $p_X(\cdot)$ represents the probability density function of $X$.
\end{definition}

The garbage state in \defn{OX} is a quantum analogue of classical garbage information that arises when preparing the classical sampling oracle of $X$. When implementing the quantum sampling oracle in quantum superpositions however, this garbage information will appear in a quantum state and cannot be erased or uncomputed in general. In this work, we consider a general model where we make no assumption on the garbage state. See e.g.,~\cite{gilyen2020distributional} for a similar discussion of this standard use of garbage quantum states.

Observe that if we directly measure the output of $O_X$, it will collapse to a classical sampling access to $X$ that returns random vectors with respect to the probability distribution $p_X$.

\paragraph{Quantum stochastic gradient oracle.}
When considering the problem of optimizing a function $f\colon\R^d\to\R$, we are often given access to a stochastic gradient oracle that returns a random vector from the probability distribution of the stochastic gradient.

\begin{definition}[Stochastic gradient oracle (SGO)]\label{defn:Cg}
For $f\colon\R^d\to\R$, its \emph{stochastic gradient oracle (SGO)}
$\mC_\g$ is defined as a random function that when queried at $\x$, samples a vector $\g(\x)$ from a probability distribution $p_{f,\x}(\cdot)$ over $\R^d$ that satisfies 
\begin{align*}
\underset{\g(\x)\sim p_{f,\x}}{\E}\g(\x)=\nabla f(\x),\quad\forall\x\in\R^d.
\end{align*}
We say the oracle is \emph{$L$-bounded} if
\begin{align*}
\underset{\g(\x)\sim p_{f,\x}}{\E}\|\g(\x)\|^2\leq L^2,\quad\forall\x\in\R^d,
\end{align*}
and we say the oracle has \emph{variance $\sigma^2$} if
\begin{align*}
\underset{\g(\x)\sim p_{f,\x}}{\E}\|\g(\x)-\nabla f(\x)\|^2\leq \sigma^2,\quad\forall\x\in\R^d.
\end{align*}
\end{definition}

In this paper, we further assume quantum access to a stochastic gradient oracle, or access to a \textit{quantum stochastic gradient oracle} for brevity, that upon query returns a quantum superposition over the probability distribution $p_{f,\x}(\v)$. 

\begin{definition}[Quantum stochastic gradient oracle (QSGO)]\label{defn:Og}
For $f\colon\R^d\to\R$, its quantum stochastic gradient oracle (QSGO) is defined as
\begin{align}\label{eqn:quantum-SG-oracle}
O_\g\ket{\x}\otimes\ket{0}\to\ket{\x}\otimes\int_{\v\in\R^d}\sqrt{p_{f,\x}(\v)\d \v}\ket{\v}\otimes\ket{\mathrm{garbage}(\v)},
\end{align}
where $p_{f,\x}(\cdot)$ is as defined in \defn{Cg}.
\end{definition}
Observe that if we directly measure the output of $O_\g$, it will collapse to a classical stochastic gradient oracle that randomly returns a stochastic gradient at $\x$. 

Another standard assumption in previous works~\cite{agarwal2017finding,fang2018spider,fang2019sharp,jin2019stochastic} that the stochastic gradient can be queried \textit{simultaneously}, which means that the algorithm can choose the random seed $\omega$ that is being queried. We assume that in such setting there is an explicit probability distribution $\omega$ such that $\underset{\omega}{\E}[\g(\x,\omega)]=\nabla f(\x)$. Similarly, we define the quantum access to stochastic gradients allowing simultaneous queries, or access to a \textit{quantum stochastic gradient oracle with simultaneous queries} for brevity, that upon query returns $\g(\x,\omega)$ in a quantum state.

\begin{definition}[$\sigma$-SQ-QSGO]\label{defn:OgS}
For $f\colon\R^d\to\R$ with its stochastic gradient $\g(\x,\omega)$ indexed by random seed $\omega$ that satisfies
\begin{align*}
\underset{\omega}{\E}\,\g(\x,\omega)=\nabla f(\x)\quad\text{,}\quad\underset{\omega}{\E}\|\g(\x,\omega)-\nabla f(\x)\|^2\leq\sigma^2,\quad\forall\x\in\R^d,\quad\text{and}
\end{align*}
\begin{align}\label{eqn:mss}
\underset{\omega}{\E}\|\g(\x,\omega)-\g(\y,\omega)\|^2\leq\ell^2\|\x-\y\|^2 ,\quad\forall\x,\y\in\R^d\,
\end{align}
its simultaneously queriable, \emph{$\sigma$-mean-squared smooth quantum stochastic gradient oracle ($\sigma$-SQ-QSGO)} is defined as
\begin{align}\label{eqn:OgS}
O_{\g}^{S}\ket{\x}\otimes\ket{\omega}\otimes\ket{0}\to\ket{\x}\otimes\ket{\omega}\otimes\ket{\g(\x,\omega)}\otimes\ket{\mathrm{garbage}(\x,\omega)}.
\end{align}
\end{definition}

\subsection{Results}
\label{sec:intro:results}
Here we present our main results on new quantum algorithms for stochastic optimization. Our results and the prior state-of-the-art are summarized in \tab{SCO-algorithms}. Further, we discuss new quantum lower bounds that we establish for quantum variance reduction and stochastic convex optimization. 
Our algorithmic results leverage a common technique for quantum variance reduction introduced in the next \sec{QVR}. This technique uses a combination of the quantum multivariate mean estimation result of Cornelissen et al.~\cite{cornelissen2022near} and multilevel Monte Carlo (MLMC)~\cite{giles2015multilevel,blanchet2015unbiased,asi2021stochastic}.

\begin{table}[ht]
\centering
\resizebox{1.0\columnwidth}{!}{
\begin{tabular}{cccc}
\hline
Setting & Queries & Output & Method \\ \hline \hline
Convex & $\epsilon^{-2}$ & $\epsilon$-optimal & SGD \\ \hline
Convex & $d^{5/8} \epsilon^{-3/2}$ & $\epsilon$-optimal & \textbf{Our Result} (Q-AC-SA, \algo{Q-AC-SA}) \\ \hline
Convex & $d^{3/2} \epsilon^{-1}$ & $\epsilon$-optimal & 
\makecell{\textbf{Our Result}
(Q-SCP, \sec{QSCPM})} \\ \hline
\makecell{Non-convex\\ (Bounded variance)} & $\epsilon^{-4}$ & $\epsilon$-critical & Randomized SGD~\cite{ghadimi2013stochastic} \\ \hline
\makecell{Non-convex\\ (Bounded variance)} & $d^{1/2}\epsilon^{-3}$ & $\epsilon$-critical &
\textbf{Our Result}
(Q-SGD, \algo{Q-SGD}) \\ \hline
\makecell{Non-convex\\ (Mean-squared smoothness)} & $\epsilon^{-3}$ & $\epsilon$-critical & SPIDER~\cite{fang2018spider} \\ \hline
\makecell{Non-convex\\ (Mean-squared smoothness)} & $d^{1/2} \epsilon^{-5/2}$ & $\epsilon$-critical &
\textbf{Our Result}
(Q-SPIDER, \algo{Q-SPIDER}) \\ \hline
\end{tabular}
}
\vspace{1mm}
\caption{Comparison between our quantum algorithms and state-of-the-art classical algorithms for \prob{SCO} (the convex setting) and \prob{nonconvex} (the non-convex setting) where, for simplicity, we assume $R=L=1$ for \prob{SCO} and $\ell=\sigma=\delta=1$ for \prob{nonconvex}. 
The $\tO$ symbol was omitted in the ``Queries'' column. 
}
\label{tab:SCO-algorithms}
\end{table}

\paragraph{Stochastic convex optimization.}In this work we consider the quantum analog of the standard stochastic convex optimization (SCO) problem defined as follows.

\begin{problem}[Quantum stochastic convex optimization (QSCO)]\label{prob:SCO}
In the quantum stochastic convex optimization (QSCO) 
problem we are given query access to an 
$L$-bounded QSGO $O_\g$ (see \defn{Og}) for a convex function $f\colon\R^d\to\R$ whose minimum is achieved at $\x^*$ with $\|\x^*\|\leq R$ and must output an expected $\epsilon$-optimal point.
\end{problem}

Classically, it is known that simple stochastic gradient descent, e.g. $x_{t + 1} = x_t - \eta g_t$, can solve QSCO with $\mO(\epsilon^{-2})$ queries. Further, this bound is known to be optimal in the worst case~\cite{nemirovskij1983problem,woodworth2017lower}. 

Nevertheless, we develop two quantum algorithms for \prob{SCO} in \sec{Q-AC-SA} and \sec{QSCPM}, respectively. The query complexities of these algorithms are summarized in the following \thm{SCO-informal}. In comparison to the optimal classical query complexity of $\mO(\epsilon^{-2})$ queries, our algorithms obtain an improved dependence in terms of $\epsilon$ at the cost of a worse dependence on the dimension, $d$; \thm{SCO-informal} shows that a quadratic speedup is achievable when $d$ is constant. 

\begin{theorem}[Informal version of \thm{Q-AC-SA} and \cor{QSCPM}]\label{thm:SCO-informal}
\prob{SCO} can be solved using an expected $\tO(\min\{ d^{5/8}(LR/\epsilon)^{3/2}, d^{3/2}LR/\epsilon\})$ 
queries.
\end{theorem}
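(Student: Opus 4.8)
The plan is to establish the two terms of the $\min$ by two separate algorithms and return whichever bound is smaller for the given $d,\epsilon$: the first term is obtained by \algo{Q-AC-SA} (yielding \thm{Q-AC-SA}) and the second by a quantum stochastic cutting–plane method (\sec{QSCPM}, yielding \cor{QSCPM}). For the first, I would begin by reducing the non‑smooth problem to a smooth one via ball smoothing: set $\hat f(\x)\defeq \E_{\u\sim U(\B_1(\0))}[f(\x+\mu\u)]$ with $\mu\defeq\Theta(\epsilon/L)$. Standard estimates give that $\hat f$ is convex, $0\le \hat f(\x)-f(\x)\le\mu L\le\epsilon$, and $\hat f$ is $\ell$‑smooth with $\ell=O(\sqrt d\,L/\mu)=O(\sqrt d\,L^2/\epsilon)$ — the improvement from $d$ to $\sqrt d$ coming from expressing $\nabla\hat f$ via the divergence theorem and using $\E\langle\u,\e\rangle^2=1/d$ — and that $\g(\x+\mu\u)$, for $\u$ uniform on the ball and $\g$ an SGO sample, is an \emph{unbiased} estimator of $\nabla\hat f(\x)$ with variance at most $L^2$. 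A single query to $O_\g$ implements a quantum sampling oracle (in the sense of \defn{OX}) for this random variable: prepare $\int_{\u}\sqrt{p_{\u}(\u)\,\d\u}\,\ket{\u}$, coherently compute $\ket{\x+\mu\u}$, apply $O_\g$, and absorb the $\ket{\x+\mu\u}$ and garbage registers into the new garbage register. I would then run AC‑SA on $\hat f$ over $\B_R(\0)$ for $N=\Theta(\sqrt{\ell R^2/\epsilon})=\Theta(d^{1/4}LR/\epsilon)$ iterations, and at iteration $t$ replace the usual fixed minibatch gradient by the quantum variance‑reduction routine of \sec{QVR}, which returns an unbiased estimate of $\nabla\hat f$ at the query point with variance at most $v_t$ using an expected $\tO(\sqrt d\,L/\sqrt{v_t})$ queries to $O_\g$.

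AC‑SA's guarantee has the shape $\E[\hat f(\bar\x_N)-\hat f^*]=O(\ell R^2/N^2)+O\big(\tfrac{R}{N^2}(\sum_{t\in[N]}t^2 v_t)^{1/2}\big)$. The first term is $O(\epsilon)$ by the choice of $N$, and, because of the $t^2$ weighting, even a uniform per‑step variance budget $v_t=\Theta(\epsilon^2 N/R^2)$ makes the stochastic term $O(\epsilon)$ (an $O(1)$‑line Lagrange argument shows more generally that a range of schedules achieves the same order). The total expected query count is then $\sum_t\tO(\sqrt d\,L/\sqrt{v_t})=\tO(\sqrt d\,LR\sqrt N/\epsilon)=\tO(d^{5/8}(LR/\epsilon)^{3/2})$. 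Since $\hat f\ge f$ and $\min_{\B_R(\0)}\hat f\le f^*+\epsilon$, the returned point is expected $O(\epsilon)$‑optimal for $f$; rescaling $\epsilon$ finishes this case.

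For the $\tO(d^{3/2}LR/\epsilon)$ bound I would run a classical cutting‑plane method over $\B_R(\0)$, which uses $T=\tO(d)$ iterations and, crucially, needs at its $k$‑th query point $\x_k$ only a gradient estimate $\hat\g_k$ with $\norm{\hat\g_k-\nabla f(\x_k)}\le\delta$ for $\delta=\tilde\Theta(\epsilon/R)$: a cut built from a $\delta$‑accurate gradient through $\x_k$ can delete only points of $\B_R(\0)$ whose true objective value exceeds $f(\x_k)-2R\delta$, so this slack translates \emph{uniformly} (not multiplied by $d$ or $T$) into an $O(R\delta)=O(\epsilon)$ loss in the final objective value. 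At a fixed $\x_k$, $O_\g$ is precisely a quantum sampling oracle for the $d$‑dimensional random variable $\g(\x_k)$ with mean $\nabla f(\x_k)$ and variance at most $L^2$, so I would apply the multivariate mean‑estimation algorithm of Cornelissen et al.~\cite{cornelissen2022near}, amplified by a median trick to failure probability $1/\poly(d)$, to produce $\hat\g_k$ with $\tO(\sqrt d\,L/\delta)=\tO(\sqrt d\,LR/\epsilon)$ queries. A union bound over the $T$ iterations gives overall correctness and a total of $\tO(T\cdot\sqrt d\,LR/\epsilon)=\tO(d^{3/2}LR/\epsilon)$ queries.

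Granting the quantum variance‑reduction primitive of \sec{QVR}, the remaining delicate points are, for the first bound, cleanly interfacing AC‑SA with a gradient estimator whose \emph{cost} (not just variance) is random and chosen adaptively per step — and verifying that the smoothing constants and the schedule compose to exactly the exponent $d^{5/8}$ — and, for the second bound, the robustness analysis certifying that $\ell_2$‑gradient accuracy $\tilde\Theta(\epsilon/R)$, rather than something like $\epsilon/(dR)$, is enough for the cutting‑plane method, since otherwise the dimension dependence would be far worse. I expect the cutting‑plane robustness argument to be the more subtle of the two.
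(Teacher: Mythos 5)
Your first bound is established essentially as in the paper: smooth the objective (you use ball smoothing with radius $\Theta(\epsilon/L)$, the paper uses the Gaussian convolution \eqn{gaussian-convolution} with $r=\epsilon/(4\sqrt d L)$ — both give smoothness $O(\sqrt d L^2/\epsilon)$ and pointwise error $O(\epsilon)$), implement a quantum sampling oracle for the smoothed stochastic gradient with one query to $O_\g$ (this is exactly \lem{O_gF}), and run AC-SA of \cite{lan2012optimal} with per-iteration unbiased estimates of variance $\Theta(d^{1/4}L\epsilon/R)$ over $\mathcal{T}=\Theta(d^{1/4}LR/\epsilon)$ iterations; your parameters and the resulting $\tO(d^{5/8}(LR/\epsilon)^{3/2})$ count match \algo{Q-AC-SA} and \thm{Q-AC-SA}, and the uniform variance budget is all that is needed.

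For the $\tO(d^{3/2}LR/\epsilon)$ bound there is a genuine gap: you never say how a specific $\epsilon$-optimal point is \emph{returned}. Your robustness argument (cuts built from $\tilde\Theta(\epsilon/R)$-accurate gradients never delete points that are more than $O(\epsilon)$ better than the current query) is correct and is the same as \prop{cutting-plane}, but all it yields is that \emph{at least one} of the $\mathcal{T}=\tO(d)$ query points is $O(\epsilon)$-optimal — the cutting-plane method gives no certificate of which one, since a near-optimal query can still have gradient norm as large as $L$ and the run continues past it with uncontrolled behavior. Classically one would evaluate $f$ at all queries, but in \prob{SCO} you only have the QSGO and no function values, so identifying (or synthesizing) a good output point is a nontrivial subproblem. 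This is where the paper does real work in \sec{QSCPM}: it formulates the approximately-best-point problem (\prob{finite-optimal-finding}) and solves it with a tournament of stochastic line searches, each pairwise comparison done by binary search along the segment using one-dimensional projected gradient estimates (\algo{sls}, \algo{finite-optimal-finding}, \prop{AOF-query-complexity}), at an extra cost of $\tO(dLR/\epsilon)$ queries, which is what turns \prop{cutting-plane} into \cor{QSCPM}. Without such a post-processing step your argument does not produce the required expected $\epsilon$-optimal output. (Your other flagged worry — that $\ell_2$-accuracy $\epsilon/R$ rather than $\epsilon/(dR)$ suffices — is handled by the short separation-validity computation you sketched, so it is not the real subtlety; also, your use of \cite{cornelissen2022near} plus a majority/closeness trick to boost the confidence of each gradient estimate is fine and mirrors \lem{QVR-bounded-error}.)
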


We complement \thm{SCO-informal} with the following  lower bound on the query complexity for QSCO.

\begin{theorem}[Informal version of \thm{SCO-lower}]\label{thm:SCO-lower-informal}
For any $\epsilon\leq\mO(d^{-1/2})$, any quantum algorithm that solve \prob{SCO} with probability at least $2/3$ makes at least $\Omega(d^{1/2}\epsilon^{-1})$ queries in the worst case. For any $\Omega(d^{-1/2})\leq\epsilon\leq 1$, any quantum algorithm that solves \prob{SCO} with success probability at least $2/3$ must make at least $\Omega(\epsilon^{-2})$ queries in the worst case.
\end{theorem}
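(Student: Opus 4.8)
The plan is to prove the two regimes of the lower bound separately, since each draws on a different source of hardness. For the regime $\Omega(d^{-1/2}) \le \epsilon \le 1$, the $\Omega(\epsilon^{-2})$ bound is really a \emph{dimension-free} statement, so I would reduce from a low-dimensional (even one-dimensional) hard instance. The standard classical construction takes $f(x) = \E_\omega |x - b\cdot\omega|$-type functions, or more cleanly a family of linear functions $f_v(x) = \langle v, x\rangle$ on a bounded domain with a stochastic gradient whose sign encodes a hidden bit, so that $\epsilon$-optimality forces the algorithm to learn that bit; bounding the variance/second moment by $L^2$ and the distance by $R$ then forces $\Omega((LR/\epsilon)^2)$ samples information-theoretically. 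The key point is that a \emph{quantum} sampling oracle does not help here: because the stochastic gradient distributions for the two hidden bits differ only in a way that requires $\Omega(\epsilon^{-2})$ \emph{classical} samples to distinguish, and querying $O_\g$ in superposition over inputs gives no advantage when the hard instance is essentially input-independent, one can invoke a hybrid argument (à la BBBV) or the quantum lower bound machinery for distinguishing distributions (e.g., from \cite{hamoudi2021quantum} or the $\chi^2$-based bounds behind \cite{cornelissen2022near}) to show the quantum query complexity is still $\Omega(\epsilon^{-2})$ in this range. I would model this on the lower-bound arguments in \cite{garg2020no,zhang2022quantum}.

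For the regime $\epsilon \le \mO(d^{-1/2})$, the target bound $\Omega(d^{1/2}\epsilon^{-1})$ is genuinely dimension-dependent and should match the $d^{3/2}LR/\epsilon$ upper bound of Q-SCP up to the $d$ exponent; the natural approach is a reduction that embeds a hard \emph{mean-estimation} problem. Concretely, take $f_{\vect a}(\x) = \langle \vect a, \x\rangle$ restricted to $\B_R(\0)$, with a stochastic gradient $\g$ whose mean is exactly $\vect a$ and whose per-coordinate randomness is an independent $\pm 1$-type signal scaled so the second moment stays $\le L^2$; here $\vect a \in \{-c,c\}^d$ for a suitable $c = \Theta(\epsilon/R)$ hides $d$ bits. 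To output an expected $\epsilon$-optimal point one must identify the sign of $\vect a$ in \emph{most} coordinates — essentially a distributed/parallel search problem — and the quantum lower bound for estimating a $d$-dimensional bounded-variance mean to $\ell_\infty$ (or $\ell_2$) accuracy, which is the converse side of \cite{cornelissen2022near}, gives $\Omega(d^{1/2}/\text{accuracy})$ queries. The condition $\epsilon \le \mO(d^{-1/2})$ is exactly what makes $c$ small enough that the relevant mean-estimation accuracy is the binding constraint rather than the trivial $R$-diameter bound, and it ensures the two pieces of the theorem stitch together at the crossover point $\epsilon \asymp d^{-1/2}$.

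The main obstacle I anticipate is making the reduction from mean estimation \emph{tight in the $d$ exponent} and rigorous in the quantum model. One subtlety is that an algorithm solving \prob{SCO} is only promised to return an expected-$\epsilon$-optimal point, not to solve mean estimation exactly, so I need an averaging/amplification argument showing that $\epsilon$-optimality of $\langle\vect a,\cdot\rangle$ on the ball translates into recovering a constant fraction of the sign bits of $\vect a$ with constant probability, which then contradicts the mean-estimation lower bound; this is the kind of step where a careful choice of the metric ($\ell_1$ vs $\ell_2$ aggregation of coordinate errors) matters. A second subtlety is handling the garbage register and the simultaneous-query freedom: the lower bound must hold against adaptive quantum queries to $O_\g$ with arbitrary $\ket{\mathrm{garbage}}$, so I would route everything through a query-complexity lower bound that is robust to garbage — either by appealing directly to the lower-bound instances used to prove optimality of the mean estimator in \cite{cornelissen2022near}, or by a hybrid argument bounding how much a single query to $O_\g$ can change the algorithm's state when the hidden parameter is flipped in one coordinate. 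Once these are in place, the theorem follows by taking the max of the two regime bounds; I would also double-check the boundary case $\epsilon \asymp d^{-1/2}$ so that neither bound degenerates.
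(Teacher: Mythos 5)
Your plan for the low-accuracy regime $\Omega(d^{-1/2})\leq\epsilon\leq 1$ rests on a false premise, and this is a genuine gap. You treat the $\Omega(\epsilon^{-2})$ bound as dimension-free and propose a one-dimensional hidden-bit instance together with an argument that "a quantum sampling oracle does not help" in distinguishing the two gradient distributions. Under the oracle model of \defn{Og} (a purified sampling oracle with access to its inverse), that claim is false: estimating the bias of a one-dimensional stochastic gradient to accuracy $\epsilon$ costs only $\tO(1/\epsilon)$ queries by amplitude estimation (this is exactly the one-dimensional quantum mean-estimation entry in \tab{VR-comparison}), and indeed the paper's own Q-SCP algorithm solves \prob{SCO} with $\tO(LR/\epsilon)$ queries when $d=\mO(1)$, so no $\Omega(\epsilon^{-2})$ lower bound can hold for low-dimensional instances. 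A BBBV-style hybrid or a classical $\Omega(\epsilon^{-2})$ distinguishing bound does not transfer here; the hardness in \cite{garg2020no,zhang2022quantum} comes from superposition over query points with deterministic oracles, not from sampling oracles. The paper's proof of \thm{SCO-lower} in this regime is instead genuinely high-dimensional: the condition $\epsilon\geq RL/(100\sqrt{d})$ forces $d\geq\Omega((RL/\epsilon)^2)$, which is just enough room to hide $\Theta((RL/\epsilon)^2)$ independent bits in orthogonal coordinate directions (a $\search^{\alpha n}\circ\parity^{1}$ instance embedded in the capped linear function $\bar f^A$ of \eqn{barf-defn-combined}), and recovering most of those bits requires $\Omega(\epsilon^{-2})$ quantum queries by \lem{search-parity}.

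Your plan for the regime $\epsilon\leq\mO(d^{-1/2})$ is essentially the paper's argument. The paper encodes a $\search^{d}\circ\parity^{\alpha n/d}$ instance (the same hard family underlying the mean-estimation lower bound of \cite{cornelissen2022near}) into $\bar f^A(\x)=\E_{i,j}\bigl[-\tfrac13\langle\x,\g_{i,j}\rangle+\tfrac{2L}{3}\max\{0,\|\x\|-R/2\}\bigr]$, where $\bar\g\propto\vect{b}^{(A)}$; the norm cap plays the role of your "restriction to $\B_R(\0)$" (necessary, since a purely linear $f$ has no minimizer on $\R^d$ as \prob{SCO} requires). \lem{barf-properties-combined} shows every $\epsilon$-optimal point correlates strongly with $\x^*$ and hence yields $\vect{b}^{(A)}$ to $\ell_2$ error $\sqrt d/2$, which is the averaging step you were worried about (the expected-optimality versus constant-probability issue is handled by a Markov argument), and the QSGO for $\bar f^A$ is implemented with one query to $O_A$, so garbage and adaptivity are handled automatically because the reduction goes from an SCO solver to a $\search\circ\parity$ solver. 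One substantive difference from your sketch: the paper deliberately reduces to $\search\circ\parity$ rather than invoking a mean-estimation lower bound as a black box, precisely because the construction needs simultaneous control of $\|\g_{i,j}\|\leq L$ and of $\|\bar\g\|$; if you route through a generic "$\ell_2$-accurate mean estimation is hard" statement you must re-verify that the hard distributions there satisfy these norm constraints, which is the step most likely to lose the $d$-exponent.
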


\thm{SCO-lower-informal} shows that the $\tO(d^{3/2} LR / \epsilon)$ rate that we obtain is asymptotically optimal for $d = \tO(1)$. A key open problem is whether the dimension dependence in either our upper bounds (\thm{Q-AC-SA} and \cor{QSCPM}) or our lower bound (\thm{SCO-lower}) can be improved.

\paragraph{Stochastic critical point computation.}

We also develop quantum algorithms for finding critical points, i.e., points with small gradients, of (possibly) non-convex functions.

\begin{problem}[Quantum stochastic critical point computation (QSCP)]\label{prob:nonconvex}
In the quantum stochastic critical point computation (QSCP) problem, for an $\ell$-smooth (posssibly) non-convex $f\colon\R^d\to\R$ satisfying $f(\0)-\inf_{\x}f(\x)\leq\Delta$ we are given query access to one of the following two oracles
\begin{enumerate}
\item (Bounded variance setting). A QSGO $O_{\g}$ with variance $\sigma^2$ (see \defn{Og}), or
\item (Mean-squared smoothness setting). A $\sigma$-SQ-QSGO $O_{\g}^S$ (see \defn{OgS}), 
\end{enumerate}
and must output an expected $\epsilon$-critical point.
\end{problem}

Leveraging \cite{ghadimi2013stochastic} and~\cite{fang2018spider} we develop two quantum algorithms that solve \prob{nonconvex} in the bounded variance setting and the mean-squared smoothness setting, and obtain the following result.

\begin{theorem}[Informal version of \thm{Q-SGD} and \thm{Q-SPIDER}]\label{thm:Q-SPIDER-informal}
In the bounded variance setting, \prob{nonconvex} can be solved using an expected $\tO\left(\Delta\ell\sigma d^{1/2}\epsilon^{-3}\right)$ queries. In the mean-squared smoothness setting, \prob{nonconvex} can be solved using an expected $\tO\left(\ell\Delta(d\sigma)^{1/2} \epsilon^{-5/2}\right)$ queries.
\end{theorem}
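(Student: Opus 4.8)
Both bounds follow the same two-step recipe: (i) use the quantum variance reduction technique of \sec{QVR} to convert the given quantum oracle into a (classical-style) stochastic gradient oracle of much smaller variance, and (ii) feed this low-variance oracle into a known classical method---the randomized stochastic gradient method of \cite{ghadimi2013stochastic} in the bounded-variance setting and \algname{SPIDER} \cite{fang2018spider} in the mean-squared-smoothness setting---while tracking how the iteration count interacts with the (now random) per-call query cost. The key primitive from \sec{QVR} combines the quantum multivariate mean estimator of \cite{cornelissen2022near}, which for a $d$-dimensional random vector with trace-covariance at most $\tau^2$ returns an $\ell_2$-accurate-to-$\delta$ estimate using $\tO(\sqrt d\,\tau/\delta)$ quantum samples, with an MLMC debiasing scheme \cite{giles2015multilevel,blanchet2015unbiased,asi2021stochastic} built on a base accuracy $\tau'$ (i.e.\ base sample budget $n_0=\tO(\sqrt d\,\tau/\tau')$) plus $\tO(1)$ dyadic refinement levels. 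The resulting estimator $Y$ has $\norm{\E Y-\E X}$ negligibly small, $\E\norm{Y-\E Y}^2\le\tau'^2$, and an expected query count of $\tO(\sqrt d\,\tau/\tau')$: the refinement levels contribute only poly-logarithmically in expectation because level $j$, of cost $2^j n_0$, is sampled with probability $\Theta(2^{-j})$, and the triangle inequality through the true mean, $\E\norm{\hat\mu_j-\hat\mu_{j-1}}^2\le 2\E\norm{\hat\mu_j-\mu}^2+2\E\norm{\hat\mu_{j-1}-\mu}^2$, removes any need to couple the mean-estimation runs across levels.

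\textbf{Bounded variance (the $\tO(\Delta\ell\sigma d^{1/2}\epsilon^{-3})$ bound).} Apply the primitive to the QSGO $O_\g$ (\defn{Og}) at each iterate $\x_t$ with target variance $\tau'=\Theta(\epsilon)$ and negligible bias, obtaining an almost-unbiased stochastic gradient oracle of variance $O(\epsilon^2)$ at an expected cost of $\tO(\sqrt d\,\sigma/\epsilon)$ queries per call. With the effective variance at the level $\epsilon^2$, the stochastic term in the convergence rate of \cite{ghadimi2013stochastic} is dominated by the smooth term, so $T=\tO(\Delta\ell/\epsilon^2)$ iterations produce an expected $\epsilon$-critical point; since $T$ is deterministic, the total expected query count is $T$ times the per-call expectation, which yields the claim. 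What must be checked is robustness of the classical analysis to (a) a small deterministic bias in the gradient estimates---a standard perturbation absorbed into $\epsilon$---and (b) the variance bound holding only in expectation, which is exactly the hypothesis that analysis uses; one also passes from $\E\norm{\nabla f(\x)}^2\le\epsilon^2$ to $\E\norm{\nabla f(\x)}\le\epsilon$ via Jensen's inequality.

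\textbf{Mean-squared smoothness (the $\tO(\ell\Delta(d\sigma)^{1/2}\epsilon^{-5/2})$ bound).} Here the simultaneously-queriable oracle $O_\g^S$ (\defn{OgS}) lets us, for a fixed pair $(\x_t,\x_{t-1})$, coherently prepare quantum sampling access to the difference variable $D_t(\omega)=\g(\x_t,\omega)-\g(\x_{t-1},\omega)$, whose trace-covariance is at most $\ell^2\norm{\x_t-\x_{t-1}}^2$ by \eqn{mss}, as well as to $\g(\x_0,\omega)$ at epoch restarts with trace-covariance at most $\sigma^2$. Running \algname{SPIDER} and replacing each recursive gradient-difference update and each restart mega-batch by a quantum-variance-reduced estimate (variance targets $V_t$ resp.\ $V_0$) costs an expected $\tO(\sqrt d\,\ell\norm{\x_t-\x_{t-1}}/\sqrt{V_t})$ resp.\ $\tO(\sqrt d\,\sigma/\sqrt{V_0})$ queries per step; \algname{SPIDER}'s error recursion telescopes these within an epoch, so it suffices to keep the accumulated variance at the level $\epsilon^2$. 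Re-optimizing the epoch length, the step-size rule, and the variance budgets against the resulting (trajectory-dependent) total expected query count then gives the stated rate.

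\textbf{Main obstacle.} The conceptually delicate part is making the quantum variance reduction guarantee fully rigorous: the estimator of \cite{cornelissen2022near} succeeds only with constant probability and is controlled only in $\ell_2$-error rather than in second moment, so one must boost its success probability, handle (for oracles that are merely variance-bounded rather than norm-bounded) a truncation of the underlying random vector, and argue that the rare failure events together with the MLMC truncation contribute only a $\poly(\epsilon)$-scale bias and an $\tO(1)$-factor inflation of the variance and expected cost---all while assuming nothing about the garbage registers of $O_\g,O_\g^S$. The most laborious part is the \algname{SPIDER} re-optimization: unlike the bounded-variance case, the per-step query cost depends on the trajectory through $\norm{\x_t-\x_{t-1}}$, so one must show the step-size schedule controls the accumulated path length well enough that summing the per-step expected costs reproduces the target $\epsilon$- and $\sigma$-exponents rather than picking up extra factors.
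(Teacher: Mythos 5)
Your plan is essentially the paper's proof: build the unbiased low-variance gradient estimator from the quantum multivariate mean estimator of \cite{cornelissen2022near} combined with MLMC, then plug it into the randomized SGD analysis of \cite{ghadimi2013stochastic} with target variance $\Theta(\epsilon^2)$ and $\mathcal{T}=\Theta(\Delta\ell\epsilon^{-2})$ iterations, and into SPIDER \cite{fang2018spider} using the simultaneous-query oracle to form the difference variable $\g(\x_t,\omega)-\g(\x_{t-1},\omega)$ whose variance is controlled by mean-squared smoothness. Two of the difficulties you flag do not arise in the paper's instantiation: the paper's MLMC uses a geometrically distributed level over all of $\N$ (\algo{unbiased-Q}), so the estimator is \emph{exactly} unbiased with probability-one variance control (via the classical-sample comparison trick in $\mathtt{QuantumMeanEstimation}^{+}$), and no bias-robustness argument for the downstream classical analyses is needed; and the paper's SPIDER variant takes normalized steps of fixed length $\epsilon/\ell$, so $\|\x_t-\x_{t-1}\|$ is deterministic, the difference variance is at most $\epsilon^2$ at every step, and no trajectory-dependent re-optimization of step sizes or epoch lengths is required---the query count is just $(\mathcal{T}/q)\cdot\tO(\sigma\sqrt d/\epsilon)+\mathcal{T}\cdot\tO(\sqrt{d\sigma/\epsilon})$ with $q=\Theta(\sigma/\epsilon)$. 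Likewise, no truncation of the gradient random vector is needed, since the mean estimator of \cite{cornelissen2022near} already applies to variance-bounded (not norm-bounded) random variables. Your truncated-MLMC-with-negligible-bias variant would also work, but it trades the paper's clean ``unbiased estimate, apply the classical theorem verbatim'' structure for an extra perturbation argument.
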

In the bounded variance setting, \prob{nonconvex} can be solved using $\mO(\epsilon^{-4})$ queries to a classical SGO with variance $\sigma^2$  (\defn{Cg})~\cite{ghadimi2013stochastic}, which is known to be optimal~\cite{arjevani2022lower}. In comparison, our algorithm improves in terms of $\epsilon$ and achieves a quantum speedup when $d\leq\mO(\epsilon^{-2})$. In the mean-squared smoothness setting, \prob{nonconvex} can be solved using $\mO(\epsilon^{-3})$ queries to a classical stochastic gradient oracle with variance $\sigma^2$ that satisfies if it satisfies \eqn{mss} (\defn{Cg})~\cite{fang2018spider}, which is known to be optimal~\cite{arjevani2022lower}. In comparison, our algorithm improves in terms of $\epsilon$ and achieves a quantum speedup when $d\leq\mO(\epsilon^{-1})$.

\paragraph{Quantum zeroth-order oracles.}
Throughout the paper, we focus on stochastic gradient oracles in correspondence with classical work on stochastic optimization. However, it is worth noting that in certain cases our results extend gracefully to quantum stochastic zeroth order oracles. For example, when the objective function exhibits a finite-sum structure and we have access each component function individually through a quantum zeroth-order oracle, we can achieve an SQ-QSGO (\defn{OgS}) with just a single query, utilizing quantum gradient estimation~\cite{jordan2005fast}. However, in other cases, the correspondence is less clear. For instance, if we are given a quantum stochastic zeroth-order oracle where the function value is obfuscated by some external noise, quantum gradient estimation~\cite{jordan2005fast} is not directly applicable. Further study could be an interesting direction for future work.

\paragraph{Practicality.} Regarding the utility of our algorithm in practical situations, note that our quantum oracles in \defn{Og} and \defn{OgS} are defined as direct, natural generalizations of the corresponding classical oracles. Considering such quantum generalizations of classical oracles is standard in the literature, see e.g.,~\cite{chakrabarti2020optimization,zhang2021escape}. There are standard techniques for implementing such quantum analogs of classical oracles (in theory for now given the current state-of-the-art in implementing quantum algorithms in practice). In particular, if there is a classical circuit for the classical oracle, there is a standard technique to obtain a quantum circuit of the same size which implement the corresponding quantum oracle and its inverse. Hence, our quantum algorithms have the potential to surpass blackbox classical algorithms in low dimensional settings where the oracle is given as an explicit circuit. 

\paragraph{Dimension dependence.} Regarding potential concerns regarding the dependence of our quantum algorithms on the problem dimension, below we provide several supplementary points of context. 
\begin{itemize}

\item As discussed, in the classical setting, prior research~\cite{duchi2018introductory} demonstrated that when optimizing an 1-Lipschitz convex function, SGD has a query complexity of $\mO(\epsilon^{-2})$ which is optimal, even in the one-dimensional case. In the quantum setting, we showed that, theoretically, quantum speedups which offer a different tradeoff between $\epsilon$ and $d$ dependencies are possible. Additionally, from our lower bound presented in \thm{SCO-lower-informal} we know that some dimension dependence is inherent in obtaining an improvement.
\item Classically, the complexity of dimension dependent optimization methods is well studied. In particular, there are parallel and private stochastic convex settings where the dimension dependencies are discussed, see e.g.,~\cite{bubeck2019complexity,carmon2023resqueing}, and there exists works on critical point computation in low dimension settings, see e.g.,~\cite{chewi2023complexity}.
\item Even for high-dimensional problems, our algorithms can potentially be used as a subroutines for low-dimensional subproblems. For instance, in this paper we apply these method to the approximately best point problem (\prob{finite-optimal-finding}), wherein we utilize our algorithm repeatedly within a one-dimensional setting.
\end{itemize}

\subsection{Additional related work}
\paragraph{Stochastic convex optimization.}
Stochastic convex optimization is a broad, well-studied area of research. Beyond the works mentioned in the introduction, for additional references and discussion of research in the area, see \cite{boyd2008stochastic,hastie2009elements,nemirovski2009robust} for detailed overviews of stochastic convex optimization methods and \cite{cesa2004generalization} for a presentation of convex optimization in the online learning setting. Additionally, see \cite{defazio2014saga,johnson2013accelerating} for a discussion of the closely related problem of finite-sum optimization.

\paragraph{Non-convex optimization.}
Non-convex optimization is a rapidly advancing research area in optimization theory. These advances are motivated in part by the fact that the landscapes of various modern machine learning problems are typically non-convex, including deep neural networks, principal component analysis, tensor decomposition, etc. In general, finding a global optimum of a non-convex function is NP-hard~\cite{murty1985some,nemirovskij1983problem}. Hence, many theoretical works instead focus on finding local minima rather than a global one, given existing empirical and theoretical evidence that local minima can be as good as global minima in certain machine learning problems~\cite{bhojanapalli2016global,ge2015escaping,ge2016matrix,ge2018learning,ge2017optimization,hardt2018gradient}. The first step of finding local minima would be to find stationary points, which has been discussed in many previous works~\cite{agarwal2017finding,birgin2017worst,carmon2018accelerated,carmon2017convex,nesterov2003introductory,nesterov2006cubic}. An important line of work focuses on first-order, (stochastic) gradient-based algorithms for finding critical points and local minima since higher-order derivatives are often not accessible in practical scenarios~\cite{allen2018neon2,carmon2018accelerated,fang2018spider,fang2019sharp,ge2015escaping,jin2017escape,jin2019stochastic,liu2018adaptive,xu2017neon,zhang2021escape}. There are also results on non-convex optimization that study different settings~\cite{ge2019stabilized,li2019ssrgd,yu2018third,zhou2020stochastic,ge2017no,zhang2018primal,zhou2018stochastic}.

\paragraph{Quantum Monte Carlo methods.}
Since~\cite{montanaro2015quantum}, quantum Monte Carlo methods have been broadly investigated. In particular, \cite{li2018quantum} discussed quantum Monte-Carlo methods for entropy estimation, and~\cite{hamoudi2019quantum} initiated the study of quantum mean estimation problem with multiplicative error, which is followed by~\cite{hamoudi2021quantum,cornelissen2022near,cornelissen2023sublinear,kothari2023mean}. Multilevel Monte Carlo methods are also widely used in quantum algorithms, see e.g.,~\cite{an2021quantum,li2022enabling}. Specifically, the particular multilevel Monte Carlo strategies we use for quantum variance reduction have roots in the classic computation work of~\cite{blanchet2015unbiased,asi2021stochastic}.

\subsection{Paper organization} 
In the next \sec{QVR} we propose and develop a new algorithm for a new problem \textit{quantum variance reduction}; this algorithm is the basis of our quantum speedups for stochastic optimization problems. We then discuss the application of quantum variance reduction for stochastic convex optimization by presenting two quantum algorithms in \sec{Q-AC-SA} and \sec{QSCPM}, respectively. Technically, it is possible to obtain our result in \sec{QSCPM} just using the quantum mean estimation routine of \cite{cornelissen2022near} rather then quantum variance reduction, however our use of quantum variance reduction facilitates our presentation. Furthermore, we present quantum algorithms for non-convex optimization based on quantum variance reduction in \sec{nonconvex}. Finally, in \sec{lower} we prove quantum lower bounds for quantum variance reduction and quantum stochastic convex optimization that establish the optimality of our algorithms, and conclude the paper in \sec{conclusion}.

\section{Quantum variance reduction}\label{sec:QVR}

We obtain our quantum speedups for stochastic optimization problems by proposing and developing new algorithms for a new problem which we call the \emph{quantum variance reduction problem}.

\begin{problem}[Variance reduction]\label{prob:variance-reduction}
For a $d$-dimensional random variable $X$ with $\Var[X]\leq L^2$ and some $\hat{\sigma}\geq 0$, suppose we are given access to its quantum sampling oracle $O_X$ defined in \defn{OX}. The goal is to output an unbiased estimate $\hat{\mu}$ of $\mu\coloneqq\E[X]$ satisfying $\E\|\hat{\mu}-\mu\|^2\leq \hat{\sigma}^2$.
\end{problem}

Classically, \prob{variance-reduction} can be solved by averaging $\mO(L^2/\hat{\sigma}^2)$ samples of $X$; this query complexity is optimal among classical algorithms~\cite{lugosi2019mean}. However, if we have access to the \textit{quantum sampling oracle} defined in \defn{OX}, \cite{cornelissen2022near} showed that a (possibly) biased estimate $\hat{\mu}$ with error $\norm{\hat{\mu} - \mu} \leq \hat{\sigma}$ can be computed using $\tO(L\sqrt{d} \hat{\sigma}^{-1})$ queries. 

\begin{lemma}[{\cite[Theorem 3.5] {cornelissen2022near}}]\label{lem:quantum-multivariate-mean-estimator}  
Given access to the quantum sampling oracle $O_X$, for any $\hat{\sigma},\delta\geq 0$ there is a procedure $\mathtt{QuantumMeanEstimation}\left(X,\hat{\sigma},\delta\right)$ that 
uses $\tO(L\sqrt{d}\log(1/\delta)/\hat{\sigma})$ queries and outputs an estimate $\hat{\mu}$ of the expectation $\mu$ of any $d$-dimensional random variable $X$ satisfying $\Var[X]\leq L^2$ with error $\|\hat{\mu}-\mu\|\leq\hat{\sigma}\leq L$ and success probability $1-\delta$. 
\end{lemma}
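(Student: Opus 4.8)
This statement is, up to the notation fixed in \sec{intro}, a restatement of \cite[Theorem 3.5]{cornelissen2022near}: there one is given the quantum sampling oracle $O_X$ (\defn{OX}) for a $d$-dimensional $X$ whose covariance has trace at most $L^2$ — which is exactly $\Var[X]\le L^2$ here — and obtains, for any target error $\hat{\sigma}\le L$, an estimate $\hat{\mu}$ with $\|\hat{\mu}-\mu\|\le\hat{\sigma}$ using $\tO(L\sqrt{d}/\hat{\sigma})$ queries with at least constant success probability. So the plan is to invoke that theorem as a black box and, if it is stated only with a constant success probability, to supply the standard amplification from probability $2/3$ to $1-\delta$, which is what produces the extra $\log(1/\delta)$ factor; the regime restriction $\hat{\sigma}\le L$ is inherited verbatim, and no unbiasedness is asserted at this point (that refinement is handled separately in \prob{variance-reduction}).

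For the amplification I would run the base estimator with error parameter $\hat{\sigma}/c$ (for a suitable absolute constant $c$) independently $k=\Theta(\log(1/\delta))$ times, obtaining $\hat{\mu}_1,\dots,\hat{\mu}_k$, and output their geometric median $\hat{\mu}=\arg\min_{\z}\sum_{j\in[k]}\|\z-\hat{\mu}_j\|$. By a Chernoff bound, with probability at least $1-\delta$ strictly more than $k/2$ of the $\hat{\mu}_j$ lie in $\B_{\hat{\sigma}/c}(\mu)$; the standard fact that the geometric median of a point set, a strict majority of which lies in a ball of radius $r$, is itself within an absolute constant multiple of $r$ of that ball's center (the usual amplification device for vector-valued mean estimation, cf.\ \cite{lugosi2019mean}) then gives $\|\hat{\mu}-\mu\|\le\hat{\sigma}$. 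The total query count is $k\cdot\tO(L\sqrt{d}/\hat{\sigma})=\tO(L\sqrt{d}\log(1/\delta)/\hat{\sigma})$, as claimed.

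For completeness I would also recall the mechanism underlying \cite[Theorem 3.5]{cornelissen2022near}, since it is the source of the speedup: from $O_X$ one builds a (fractional) phase oracle $\ket{\v}\ket{0}\mapsto\ket{\v}\otimes\big(\text{state encoding }\E_X[e^{2\pi i\langle \v,X\rangle}]\big)$ after first truncating $X$ to a Euclidean ball of radius $\Theta(L\sqrt{d}/\hat{\sigma})$, so that its higher moments are controlled purely by the second-moment bound while the truncation bias stays far below $\hat{\sigma}$; one then applies quantum gradient estimation (Jordan's algorithm / the Gilyén--Arunachalam--Wiebe variant) to the log-characteristic function near $\v=\0$, whose gradient is $2\pi i\,\mu$ and whose higher derivatives are bounded by the truncated moments, which is precisely what makes the gradient-estimation error small using only $\tO(L\sqrt{d}/\hat{\sigma})$ phase queries. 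The genuinely delicate part of a from-scratch proof is this truncation analysis — turning a bare second-moment bound into the higher-derivative (smoothness) control that quantum gradient estimation needs, while simultaneously keeping the truncation bias and the blow-up of the truncated moments small enough to still reach final $\ell_2$-error $\hat{\sigma}$. Since we are free to cite \cite{cornelissen2022near} for exactly this, the only remaining work is the elementary geometric-median boosting described above.
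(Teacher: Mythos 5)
This lemma is not proved in the paper at all: it is imported verbatim from \cite[Theorem 3.5]{cornelissen2022near}, and your proposal does the same thing, treating that theorem as a black box (your mechanism sketch matches the paper's one-paragraph description of the directional-mean/gradient-estimation technique). Your geometric-median amplification is a correct fallback but is not needed, since the cited theorem already carries the $1-\delta$ success probability with the $\log(1/\delta)$ factor, so your approach is essentially identical to the paper's.
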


\texttt{QuantumMeanEstimation}~\cite{cornelissen2022near} proceeds by introducing a directional mean function that reduces a multivariate mean estimation problem to a series of univariate mean estimation problem through quantum Fourier transform, which in the bounded norm case can be solved by quantum algorithms with a quadratic speedup using phase estimation. In terms of error rate, \texttt{QuantumMeanEstimation} in \lem{quantum-multivariate-mean-estimator} improves over any classical sub-Gaussian estimator when $\frac{\hat{\sigma}}{L}\geq\frac{1}{\sqrt{d}}$. However, its bias hinders its combination with various optimization algorithms assuming unbiased inputs, see e.g.,~\cite{lan2012optimal,allen2017katyusha,fang2018spider,fang2019sharp}. In this work we show how to carefully combine their algorithm with a classic multi-level Monte-Carlo (MLMC) technique from~\cite{blanchet2015unbiased,asi2021stochastic} to obtain an unbiased estimate $\hat{\mu}$ and success probability 1 with the same rate as~\cite{cornelissen2022near} and prove the following theorem. 
\begin{theorem}\label{thm:unbiased-estimator}
\algo{unbiased-Q} solves \prob{variance-reduction} using 
an expected $\tO(L d^{1/2} \hat{\sigma}^{-1})$ queries.
\end{theorem}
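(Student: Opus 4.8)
The plan is to combine the biased quantum mean estimator of \lem{quantum-multivariate-mean-estimator} with an unbiased multi-level Monte-Carlo (MLMC) telescoping construction à la \cite{blanchet2015unbiased,asi2021stochastic}. The key idea is that a single run of \texttt{QuantumMeanEstimation} at target accuracy $2^{-\ell}$ uses about $\tO(L\sqrt{d}\,2^{\ell})$ queries but only produces a biased estimate; MLMC lets us cancel the bias while keeping the expected cost under control, by randomizing over the level $\ell$.

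Concretely, I would proceed as follows. First, fix a base accuracy $\sigma_0$ (something like $\Theta(\hat\sigma)$, up to log factors) and for each level $\ell \geq 0$ let $\hat\mu_\ell$ denote the output of $\mathtt{QuantumMeanEstimation}(X, \sigma_\ell, \delta_\ell)$ with $\sigma_\ell = \sigma_0 2^{-\ell}$ and failure probability $\delta_\ell$ chosen small enough (polynomially small in the relevant parameters, driven down geometrically in $\ell$) that the rare failure events contribute negligibly to both bias and second moment — this is where I'd absorb the ``success probability 1'' claim, since conditioning on the (overwhelmingly likely) event that all invoked estimators succeed gives an estimator whose bias is $\le \sigma_\ell$ and which, with a small correction, can be taken to be exactly unbiased, or whose residual bias is driven below any target. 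Second, define the MLMC estimator: draw a random level $N$ from a geometric-type distribution with $\Pr[N = \ell] = p_\ell \propto 2^{-\ell}$ (or $2^{-\alpha \ell}$ for an appropriate $\alpha$), and output
\begin{align*}
\hat\mu \;=\; \hat\mu_0 \;+\; \frac{\hat\mu_{N} - \hat\mu_{N-1}}{p_N},
\end{align*}
where $\hat\mu_\ell$ and $\hat\mu_{\ell-1}$ are computed from \emph{independent} oracle calls (or, more carefully, coupled as in \cite{asi2021stochastic}; independence suffices here). Telescoping in expectation gives $\E[\hat\mu] = \lim_{\ell\to\infty}\E[\hat\mu_\ell] = \mu$, using that the bias of $\hat\mu_\ell$ tends to $0$; this establishes unbiasedness. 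Third, bound the second moment: $\E\|\hat\mu - \mu\|^2 \le 2\E\|\hat\mu_0 - \mu\|^2 + 2\sum_\ell p_\ell^{-1}\E\|\hat\mu_\ell - \hat\mu_{\ell-1}\|^2$, and since $\|\hat\mu_\ell - \mu\| \le \sigma_\ell$ (on the success event) we get $\E\|\hat\mu_\ell - \hat\mu_{\ell-1}\|^2 = \tO(\sigma_{\ell-1}^2) = \tO(\sigma_0^2 4^{-\ell})$, so the sum is $\tO(\sigma_0^2 \sum_\ell 4^{-\ell}/p_\ell)$; with $p_\ell \propto 2^{-\ell}$ this is $\tO(\sigma_0^2 \sum_\ell 2^{-\ell}) = \tO(\sigma_0^2)$, which is $\le \hat\sigma^2$ after choosing $\sigma_0$ a suitable constant (times log) fraction of $\hat\sigma$. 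Fourth, bound the expected query complexity: level $\ell$ costs $\tO(L\sqrt{d}\,\sigma_\ell^{-1}) = \tO(L\sqrt{d}\,\sigma_0^{-1} 2^{\ell})$ queries, so the expected cost is $\tO(\sum_\ell p_\ell \cdot L\sqrt{d}\,\sigma_0^{-1} 2^{\ell}) = \tO(L\sqrt{d}\,\hat\sigma^{-1}\sum_\ell p_\ell 2^{\ell})$; with $p_\ell \propto 2^{-\ell}$ each term $p_\ell 2^\ell$ is constant, so I need to truncate the sum at a logarithmic maximum level $\ell_{\max} = \tO(1)$ (justified because beyond that level the contribution to the second-moment bound is already below $\hat\sigma^2$), making the expected total $\tO(L\sqrt{d}\,\hat\sigma^{-1})$ as claimed.

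The main obstacle is calibrating the three competing geometric series simultaneously — the variance-decay rate ($4^{-\ell}$), the cost-growth rate ($2^\ell$), and the level distribution $p_\ell$ — so that the second moment stays $O(\hat\sigma^2)$ \emph{and} the expected query count stays $\tO(L\sqrt d/\hat\sigma)$; the natural choice $p_\ell \propto 2^{-\ell}$ makes the cost series borderline (each term $O(1)$), which is exactly why a logarithmic truncation $\ell_{\max}$ is needed and why only an \emph{expected} (not worst-case) query bound is obtainable. A secondary technical point is handling the failure probability $\delta_\ell$ of the quantum subroutine cleanly: one must verify that the $\tO$ in \lem{quantum-multivariate-mean-estimator}'s $\log(1/\delta)$ dependence only costs log factors when $\delta_\ell$ is pushed down geometrically, and that on the failure events the estimator's contribution to bias and second moment is controlled (e.g. by noting the estimator's output norm can be clipped to $\mO(L/\hat\sigma)\cdot\text{poly}$ without affecting correctness, or by a union bound over the at most $\ell_{\max}$ relevant levels). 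With these pieces in place the theorem follows by assembling the unbiasedness, the second-moment bound, and the expected-query bound.
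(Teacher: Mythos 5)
Your high-level plan is the same as the paper's (MLMC over invocations of the quantum mean estimator at geometrically refined accuracies, with a randomized level and a $1/p_N$-weighted telescoping correction), but there are two genuine gaps. The first and most important is your treatment of the estimator's failure events. \lem{quantum-multivariate-mean-estimator} only guarantees $\|\hat\mu_\ell-\mu\|\le\sigma_\ell$ with probability $1-\delta_\ell$; on the failure event the output is completely uncontrolled, and since only $\Var[X]\le L^2$ is assumed (not $\|\mu\|$ or $\|X\|$ bounded), your proposed fix of clipping the output norm to $\mO(L/\hat\sigma)\cdot\mathrm{poly}$ is unsound: $\mu$ can lie far outside any fixed ball, so clipping can destroy even the success-case guarantee, and without some bound on the failure-case output, driving $\delta_\ell$ down does not control the unconditional second moment or the bias (both of which the telescoping argument needs unconditionally, not conditioned on success). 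The paper's missing ingredient here is exactly its $\mathtt{QuantumMeanEstimation}^{+}$ step (\algo{qme+} and \lem{eliminating-failure}): compare the quantum estimate against one classical sample, and if they disagree by more than a threshold $D$, fall back to a fresh classical sample; Chebyshev on the classical samples then yields an unconditional bound $\E\|\hat\mu_\ell-\mu\|^2\le\sigma_\ell^2$ with probability $1$, which is what makes both the bias-telescoping and the variance computation legitimate. Some mechanism of this kind (anchoring to classical samples) is needed; your union-bound/small-$\delta$ route does not substitute for it.

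The second gap is the truncation at $\ell_{\max}$. \prob{variance-reduction} asks for an \emph{exactly} unbiased estimate, and truncating the level distribution leaves residual bias of order $\sigma_{\ell_{\max}}$; your parenthetical ``can be taken to be exactly unbiased with a small correction'' is not substantiated. The paper avoids truncation entirely by decoupling the accuracy schedule from the level distribution: with $\Pr[J=j]=2^{-j}$ it runs level $j$ at accuracy proportional to $2^{-3j/4}$, so the variance series behaves like $\sum_j 2^{j}\cdot 2^{-3j/2}=\sum_j 2^{-j/2}$ and the expected-cost series like $\sum_j 2^{-j}\cdot 2^{3j/4}=\sum_j 2^{-j/4}$, both convergent, giving exact unbiasedness, variance $\le\hat\sigma^2$, and expected cost $\tO(L\sqrt{d}/\hat\sigma)$ simultaneously. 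Your observation that the natural $(4^{-\ell},2^{\ell},2^{-\ell})$ calibration is borderline is correct, but the right resolution is to re-tune one of the exponents (as the paper does, or equivalently take $p_\ell\propto 2^{-\alpha\ell}$ with $1<\alpha<2$, which you mention only in passing), not to truncate.
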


In the following \tab{VR-comparison} we provide a comparison between our result and previous works and in \sec{lower} we prove that our algorithm is optimal up to a poly-logarithmic factor. Notably, our algorithm does not depend on the detailed implementation of \texttt{QuantumMeanEstimation} but only its query complexity. Hence, we believe that our approach may also be useful in removing the bias of other quantum mean estimation algorithms with similar expressions of query complexities, e.g., quantum phase estimation~\cite{kitaev1995quantum,cornelissen2023sublinear} and quantum amplitude estimation~\cite{brassard2002quantum}. 

\begin{table}[ht]
\centering
\begin{tabular}{cccc}
\hline
Queries & Bias & Variance & Method \\ \hline \hline
$\hat{\sigma}^{-2}$ & 0 & $\hat{\sigma}^2$ & Classical Variance Reduction \\ \hline
$d^{1/2} \hat{\sigma}^{-1}$ & $\hat{\sigma}$ & $\hat{\sigma}^2$ & Quantum Multivariate Mean estimation~\cite{cornelissen2022near} \\ \hline
$\log^2\left(1/\gamma\right) \hat{\sigma}^{-1}$ & $\gamma$ & $\hat{\sigma}^2$ & One-dimensional Quantum Mean estimation~\cite{cornelissen2023sublinear} \\ \hline
$d^{1/2} \hat{\sigma}^{-1}$ & 0  & $\hat{\sigma}^2$  & 
\textbf{Our Result}
(\thm{unbiased-estimator})
\\ \hline
\end{tabular}
\caption{Comparison between different methods for variance reduction in the case of $L=1$ and $\hat{\sigma} \in (0, 1)$. The $\tO$ symbol was omitted in the ``Queries'' column.
}
\label{tab:VR-comparison}
\end{table}

Our algorithm consists of two components. First, we show that we can obtain a variant of the quantum mean estimation algorithm, denoted as $\mathtt{QuantumMeanEstimation}^{+}\left(X,\hat{\sigma}\right)$, that outputs a low variance estimate with probability 1. This procedure compares the outcomes of the quantum estimator and a classical estimate, and in the event of significant disparity, generates a new independent classical estimate.\footnote{At \lin{X2} and \lin{X3} of \algo{qme+}, we can further refine our approach to obtain $X_2$ and $X_4$ by taking the same number of classical samples as we would have used in the \texttt{QuantumMeanEstimation} procedure, which can reduce the poly-logarithmic factor in the overall query complexity.}

\begin{algorithm2e}
\caption{$\mathtt{QuantumMeanEstimation}^{+}(X,\hat{\sigma})$}
	\label{algo:qme+}
	\LinesNumbered
	\DontPrintSemicolon
    \KwInput{Random variable $X$, target variance $\hat{\sigma}^2\leq L^2$}
    \KwParameter{$\delta=\hat{\sigma}^6/(4L)^6$, $D=\frac{\hat{\sigma}}{4} + \frac{16L^3}{\hat{\sigma}^2}$}
    \KwOutput{An estimate $\hat{\mu}$ of $\mu=\E[X]$ satisfying $\E\|\hat{\mu}-\mu\|^2\leq\hat{\sigma}^2$}
    Set $X_1\leftarrow$\texttt{QuantumMeanEstimation}$(X,\hat{\sigma}/4,\delta)$\;
    Randomly draw one classical sample $X_2$ of $X$ \label{lin:X2}\;
    \lIf{$\|X_1-X_2\|\leq D$}{\Return $X_1$}
    \Else{
        Randomly draw one classical sample $X_3$ of $X$\label{lin:X3}\;
     \Return $X_3$
    }
    
\end{algorithm2e}

Second, we use the MLMC technique~\cite{giles2015multilevel}, specifically a variant of the methods described in~\cite{blanchet2015unbiased,asi2021stochastic}, to carefully invoke the biased subroutine \texttt{QuantumMeanEstimation}$^+$ and compute the unbiased estimate. The algorithm, \algo{unbiased-Q}, simply invokes \texttt{QuantumMeanEstimation}$^+$ for three randomly chosen accuracies and combines them to obtain the result. Though there are an infinite number of possible accuuracies chosen, we show that the expectation, the variance, and expected number of queries are all suitable to prove \thm{unbiased-estimator}.

\begin{algorithm2e}
	\caption{Quantum variance reduction}
	\label{algo:unbiased-Q}
	\LinesNumbered
	\DontPrintSemicolon
    \KwInput{Random variable $X$, target variance $\hat{\sigma}^2$}
    \KwOutput{An unbiased estimate $\hat{\mu}$ of $\E[X]$ with variance at most $\hat{\sigma}^2$}
    Set $\tilde{\mu}_0\leftarrow$\texttt{QuantumMeanEstimation}$^+(X,\hat{\sigma}/10)$\;
	Randomly sample $j\sim\mathrm{Geom}\left(\frac{1}{2}\right)\in\N$\;
        $\tilde{\mu}_j\leftarrow$\texttt{QuantumMeanEstimation}$^+(X,2^{-3j/4}\hat{\sigma}/10)$\;
        $\tilde{\mu}_{j-1}\leftarrow$\texttt{QuantumMeanEstimation}$^+(X,2^{-3(j-1)/4}\hat{\sigma}/10)$\;
        $\hat{\mu}\leftarrow\tilde{\mu}_0+2^j(\tilde{\mu}_j-\tilde{\mu}_{j-1})$\label{lin:de-biasing}\;
        \Return $\hat{\mu}$\;
\end{algorithm2e}

We begin by analyzing any random process that broadly follows the structure of \algo{qme+}. The lemma shows how two unbiased, bounded variance random variables ($X_2$ and $X_3$ in the lemma) can be used to control the expected $\ell_2$ error of a random variable ($X_1$  in the lemma) that is close to the expectation of $X_2$ and $X_3$ with some probability. In \algo{qme+} $X_1$ corresponds to the output of \texttt{QuantumMeanEstimation} and then $X_2$ and $X_3$ are obtained by classic sampling. 

\begin{lemma}\label{lem:eliminating-failure}
Let $X_1,X_2,X_3 \in \R^d$ be independent random variables where $\E X_2 = \E X_3 = \mu$, $\Var[X_2] \leq L^2$, and $\Var[X_3] \leq L^2$, and $\norm{X_1 - \mu} \leq \hat{\sigma}$ with probability at least $1 - \delta$. Further, let $Z$ be the random variable that is $X_1$ if $\norm{X_1 - X_2} \leq D$ and $X_3$ otherwise where $D > \hat{\sigma}$. Then, 
\[
\E \norm{Z - \mu}^2 \leq  
\hat{\sigma}^2 + L \left( 
\frac{L^2 - \hat{\sigma}^2}{D - \hat{\sigma}}
\right)
+ \delta (2 D^2 + 3 L^2)\,.
\]
Consequently, $\E \norm{Z - \mu}^2 \leq 13 \sigma^2$ when $D = \hat{\sigma} + L^3 / \hat{\sigma}^2$ and $\delta \leq \hat{\sigma}^6/L^6$.
\end{lemma}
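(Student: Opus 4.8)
The plan is to condition on the event $\mathcal{E} = \{\norm{X_1 - \mu} \leq \hat\sigma\}$ and its complement, and within $\mathcal{E}$ to further split according to whether the test $\norm{X_1 - X_2} \leq D$ passes or fails. Write $\E\norm{Z-\mu}^2 = \E[\norm{Z-\mu}^2 \1_{\mathcal{E}}] + \E[\norm{Z-\mu}^2 \1_{\bar{\mathcal{E}}}]$. On $\bar{\mathcal{E}}$, which has probability at most $\delta$, the output $Z$ is either $X_1$ (when the test passes) or $X_3$. When $Z = X_1$ and the test passed, $\norm{X_1 - \mu} \leq \norm{X_1 - X_2} + \norm{X_2 - \mu} \leq D + \norm{X_2 - \mu}$, so $\norm{X_1-\mu}^2 \leq 2D^2 + 2\norm{X_2-\mu}^2$; when $Z = X_3$ we just use $\norm{X_3 - \mu}^2$. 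Taking expectations and using $\E\norm{X_2-\mu}^2 \leq L^2$, $\E\norm{X_3-\mu}^2 \leq L^2$ together with independence of $X_1,X_2,X_3$, the $\bar{\mathcal{E}}$ contribution is bounded by $\delta(2D^2 + 2L^2 + L^2) = \delta(2D^2 + 3L^2)$, matching the third term.

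For the main term, restrict to $\mathcal{E}$. Split into the sub-event where the test passes (so $Z = X_1$, and $\norm{Z-\mu} = \norm{X_1 - \mu} \leq \hat\sigma$) and the sub-event where it fails (so $Z = X_3$). The first contributes at most $\hat\sigma^2 \Pr[\mathcal{E}] \leq \hat\sigma^2$, the first term. The delicate part is the second: on $\mathcal{E}$ the test fails only if $\norm{X_1 - X_2} > D$, which, combined with $\norm{X_1 - \mu} \leq \hat\sigma$, forces $\norm{X_2 - \mu} \geq \norm{X_1 - X_2} - \norm{X_1-\mu} > D - \hat\sigma$. So on $\mathcal{E} \cap \{\text{test fails}\}$ we have the event $\{\norm{X_2 - \mu} > D - \hat\sigma\}$ holding, and by Chebyshev (Markov applied to $\norm{X_2-\mu}^2$) its probability is at most $L^2/(D-\hat\sigma)^2$ — but actually I want the sharper bound $\Pr[\norm{X_2-\mu} > D-\hat\sigma] \le (L^2-\hat\sigma^2)/(D-\hat\sigma)^2$, obtained by noting $\E\norm{X_2-\mu}^2\le L^2$ can be split via truncation, or more simply one can afford the weaker constant; I will check which is actually needed for the stated bound. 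On this sub-event $Z = X_3$ is independent of $X_1, X_2$, so its conditional second moment is still $\le L^2$. Hence this contribution is at most $L^2 \cdot \Pr[\mathcal{E},\ \text{test fails}] \le L \cdot \frac{L^2-\hat\sigma^2}{D-\hat\sigma}$, where I have pulled out one factor of $\sqrt{\cdot}$ crudely — here the bound in the statement is $L\cdot\frac{L^2-\hat\sigma^2}{D-\hat\sigma}$, i.e.\ it uses $\Pr[\norm{X_2-\mu}>D-\hat\sigma]\le \frac{L^2-\hat\sigma^2}{L(D-\hat\sigma)}$, which follows from $\Pr[\norm{X_2-\mu}>t]\le \frac{\E\norm{X_2-\mu}^2 - \hat\sigma^2}{t^2 - \hat\sigma^2}\cdot$ (something); I will reconcile the exact inequality chain so the three displayed terms come out verbatim. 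Summing the three pieces gives the claimed bound.

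The main obstacle is getting the middle term in exactly the form $L\big(\tfrac{L^2-\hat\sigma^2}{D-\hat\sigma}\big)$ rather than a cruder $L^2 \cdot \tfrac{L^2}{(D-\hat\sigma)^2}$: this requires a careful tail bound on $\norm{X_2-\mu}$ that exploits both $\Var[X_2]\le L^2$ and the gap $D-\hat\sigma$, likely via the elementary inequality $\Pr[\norm{X_2-\mu}\ge t] \le \min\{1, (L^2-\hat\sigma^2)/(t^2-\hat\sigma^2)\}$ combined with the crude $\Pr\le 1$ bound to convert one power of $(D-\hat\sigma)^{-2}$ into $L/(D-\hat\sigma)$; alternatively the authors may simply use $\Pr[\norm{X_2-\mu}\ge D-\hat\sigma]\le L^2/(D-\hat\sigma)^2$ and then bound $L^2 \cdot L^2/(D-\hat\sigma)^2 \le L\cdot L^2/(D-\hat\sigma)$ using $L \le D-\hat\sigma$ in the regime of interest. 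For the final ``consequently'' claim, substitute $D = \hat\sigma + L^3/\hat\sigma^2$ so that $D - \hat\sigma = L^3/\hat\sigma^2$, giving the second term $\le L \cdot \frac{L^2}{L^3/\hat\sigma^2} = \hat\sigma^2$; and with $\delta \le \hat\sigma^6/L^6$, bound $2D^2 \le 2(\hat\sigma + L^3/\hat\sigma^2)^2$ — here one uses $\hat\sigma \le L$ so $L^3/\hat\sigma^2 \ge \hat\sigma$, hence $D \le 2L^3/\hat\sigma^2$ and $2D^2 \le 8L^6/\hat\sigma^4$, so $\delta \cdot 2D^2 \le 8\hat\sigma^2$ and $\delta \cdot 3L^2 \le 3\hat\sigma^6/L^4 \le 3\hat\sigma^2$; adding $\hat\sigma^2 + \hat\sigma^2 + 8\hat\sigma^2 + 3\hat\sigma^2 = 13\hat\sigma^2$ (reading the ``$13\sigma^2$'' in the statement as $13\hat\sigma^2$). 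I would double-check these constant manipulations at the end since they are where the stated ``$13$'' is pinned down.
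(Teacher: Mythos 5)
Your overall decomposition matches the paper's: condition on $S=\{\|X_1-\mu\|\le\hat\sigma\}$, split the good event according to whether the test $\|X_1-X_2\|\le D$ passes, bound the bad event's contribution by $\delta(2D^2+3L^2)$ exactly as you do (using independence of $X_2,X_3$ from $X_1$), and your constant-chasing for the ``consequently'' clause ($1+1+8+3=13$, reading $13\sigma^2$ as $13\hat\sigma^2$) is the right arithmetic. The genuine gap is the middle term, which you yourself flag as unresolved. The paper does not use any refined tail bound on $\|X_2-\mu\|$: it applies Markov to the \emph{first} moment, $\Pr[\|X_2-\mu\|> D-\hat\sigma]\le \E\|X_2-\mu\|/(D-\hat\sigma)\le L/(D-\hat\sigma)$ (since $\E\|X_2-\mu\|\le\sqrt{\Var[X_2]}\le L$), so the failure probability is linear, not quadratic, in $1/(D-\hat\sigma)$. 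The factor $L^2-\hat\sigma^2$ then comes purely from accounting the two sub-cases as a convex combination of conditional bounds: $\E[\|Z-\mu\|^2\mid S]\le \Pr[T\mid S]\,\hat\sigma^2+\Pr[\text{not }T\mid S]\,L^2=\hat\sigma^2+\Pr[\text{not }T\mid S](L^2-\hat\sigma^2)\le\hat\sigma^2+L(L^2-\hat\sigma^2)/(D-\hat\sigma)$, using $\hat\sigma\le L$ so that the bound is monotone in the failure probability. Your additive-indicator accounting ($\hat\sigma^2\Pr[\mathcal{E}]+L^2\Pr[\mathcal{E}\cap\{\text{test fails}\}]$) can only yield $\hat\sigma^2+L^2p$, which is precisely why you could not make $(L^2-\hat\sigma^2)$ appear.

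Both repairs you sketch would not work as stated. The inequality $\Pr[\|X_2-\mu\|\ge t]\le(L^2-\hat\sigma^2)/(t^2-\hat\sigma^2)$ is not implied by $\Var[X_2]\le L^2$ alone (it would require $\|X_2-\mu\|\ge\hat\sigma$ almost surely), and the fallback of taking $L^2\cdot L^2/(D-\hat\sigma)^2$ and converting one factor via $L\le D-\hat\sigma$ uses an assumption the lemma does not make --- it only assumes $D>\hat\sigma$ --- so it proves a weaker statement (though that happens to suffice for the specific $D=\hat\sigma+L^3/\hat\sigma^2$ used downstream). Replacing your second-moment Chebyshev step with the first-moment Markov bound and the conditional convex-combination bookkeeping closes the gap and recovers the displayed inequality verbatim.
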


\begin{proof}
Let $S$ denote the event that $\norm{X_1 - \mu}_2 \leq \hat{\sigma}$ and let $T$ denote the event that $\norm{X_1 - X_2}_2 \leq D$. Then by the law of total expectation
\begin{align*}
\E \norm{Z - \mu}^2
&= \Pr\left[ S \right] \E \left[ \norm{Z - \mu}_2^2 ~ | ~ S \right] + \Pr\left[ \text{not } S \right] \E \left[ \norm{Z - \mu}_2^2 ~ | ~ \text{not } S \right] \\
&\leq 
\E \left[ \norm{Z - \mu}^2 ~ | ~ S \right] + \delta \cdot \E \left[ \norm{Z - \mu}^2 ~ | ~ \text{not } S \right] 
\end{align*}
Where we used that $\Pr[S] \leq 1$ and $\Pr[\text{not } S] \leq \delta$. We will prove the lemma by leveraging this inequality and upper bounding both $\E \left[ \norm{Z - \mu}_2^2 ~ | ~ S \right]$ and $\E \left[ \norm{Z - \mu}_2^2 ~ | ~ \text{not } S \right]$.

First, we upper bound $\E \left[ \norm{Z - \mu}_2^2 ~ | ~ S \right]$. Again by the law of total expectation and the definition of $Z$ we have that
\begin{align}\label{eqn:Z-bound}
\E \left[ \norm{Z - \mu}_2^2 ~ | ~ S \right]
&= \Pr\left[ T ~ | ~ S \right] \E \left[ \norm{Z - \mu}_2^2 ~ | ~ T \text{ and } S \right] + \Pr\left[ \text{not } T ~ | ~ S \right] \E \left[ \norm{Z - \mu}_2^2 ~ | ~ T \text{ and not } S \right]\nonumber \\
&\leq \Pr\left[ T ~ | ~ S \right] \cdot \hat{\sigma}^2 + \Pr\left[ \text{not } T ~ | ~ S \right] L^2 \,.
\end{align}
Here we use that if $T$ and $S$ both hold then $Z = X_1$ and $\norm{X_1 - \mu}\leq \hat{\sigma}$ and that if $S$ holds and not $T$ then $Z = X_3$ and $X_3$ is independent of $S$ and $T$ with variance $L^2$. Now, if $\norm{X_2 - \mu}_2 \leq D - \hat{\sigma}$ then when $S$ holds by triangle inequality 
\[
\norm{X_1 - X_2} \leq \norm{X_1 - \mu} + \norm{X_2 - \mu}
\leq D\,.
\]
Applying this bound and Chebyshev inequality to $X_2$ (and using that $X_2$ is independent of $S$) we have that
\[
\Pr [T ~ | ~ S]
\geq \Pr[ \norm{X_2 - \mu}_2 \leq D - \hat{\sigma} ~ | ~ S ]
= \Pr[ \norm{X_2 - \mu}_2 \leq D - \hat{\sigma} ]
\geq 1 - \frac{L}{D - \hat{\sigma}}\,.
\]
Since $\hat{\sigma} < L$, combined with \eqn{Z-bound} this implies that
\[
\E \left[ \norm{Z - \mu}_2^2 ~ | ~ S \right]
\leq \left(1 - \frac{L}{D - \hat{\sigma}}\right) \hat{\sigma}^2
+ \frac{L^3}{D - \hat{\sigma}}
= \hat{\sigma}^2 + L \left( 
\frac{L^2 - \hat{\sigma}^2}{D - \hat{\sigma}}
\right)
\]

Next, we upper bound $\E \left[ \norm{Z - \mu}_2^2 ~ | ~ \text{not } S \right]$. Note that if $T$ holds then 
\[
\norm{Z - \mu}^2 \leq 2 \norm{Z - X_2}^2 + 2 \norm{X_2 - \mu}^2
\leq 2 D^2 + 2 \norm{X_2 - \mu}^2\,.
\]
Since $Z = X_3$ if $T$ does note hold, we then see that in either case
\[
\norm{Z - \mu}^2 \leq 2 D^2 + 2 \norm{X_2 - \mu}^2 + \norm{X_3 - \mu}^2
\]
Since, $X_2$ and $X_3$ are independent of $S$ with variance $L$ this therefore implies that 
\[
\E \left[ \norm{Z - \mu}_2^2 ~ | ~ \text{not } S \right]
\leq \E\left[ 2 D^2 + 2 \norm{X_2 - \mu}^2 + \norm{X_3 - \mu}^2 \right] = 2 D^2 + 3L^2,
\]
which leads to
\begin{align*}
\E \norm{Z - \mu}^2 
&\leq 
\E \left[ \norm{Z - \mu}^2 ~ | ~ S \right] + \delta \cdot \E \left[ \norm{Z - \mu}^2 ~ | ~ \text{not } S \right] \\
&\leq \hat{\sigma}^2 + L \left( 
\frac{L^2 - \hat{\sigma}^2}{D - \hat{\sigma}}
\right)+\delta(2D^2+3L^2).
\end{align*}
\end{proof}

\begin{corollary}\label{cor:qme+}
Given access to the quantum sampling oracle $O_X$ of a $d$-dimensional random variable $X$, for any $\hat{\sigma},\delta\geq 0$, the procedure $\mathtt{QuantumMeanEstimation}^+\left(X,\hat{\sigma}\right)$ (\algo{qme+}) uses $\tO(L\sqrt{d}\hat{\sigma}^{-1})$ queries and outputs an estimate $\hat{\mu}$ of the expectation $\mu=\E[X]$ satisfying $\E\|\hat{\mu}-\mu\|^2\leq\hat{\sigma}^2$.
\end{corollary}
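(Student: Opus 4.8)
The plan is to apply \lem{eliminating-failure} directly, with its parameter ``$\hat{\sigma}$'' instantiated as $\hat{\sigma}/4$, which is exactly the scale built into \algo{qme+}. Concretely, in \algo{qme+} we have $X_1 \leftarrow \mathtt{QuantumMeanEstimation}(X,\hat{\sigma}/4,\delta)$ with $\delta = \hat{\sigma}^6/(4L)^6$, and $X_2,X_3$ are the two independent classical samples of $X$ (obtained by measuring the output of $O_X$), mutually independent and independent of $X_1$. By \lem{quantum-multivariate-mean-estimator}, $X_1$ satisfies $\|X_1-\mu\|\leq \hat{\sigma}/4$ with probability at least $1-\delta$ (note $\hat{\sigma}/4 \leq L/4 < L$, so the hypothesis $\hat\sigma\le L$ of that lemma holds), and $\E X_2 = \E X_3 = \mu$ with $\Var[X_2],\Var[X_3]\leq L^2$ since $\Var[X]\leq L^2$. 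The returned estimate $\hat{\mu}$ is precisely the random variable $Z$ of \lem{eliminating-failure}: it equals $X_1$ when $\|X_1-X_2\|\leq D$ and $X_3$ otherwise.

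Next I would check that the constants in \algo{qme+} are exactly the instantiation required by the ``consequently'' clause of \lem{eliminating-failure} at accuracy level $\hat{\sigma}/4$. That clause asks for $D = \hat{\sigma}/4 + L^3/(\hat{\sigma}/4)^2 = \hat{\sigma}/4 + 16L^3/\hat{\sigma}^2$, which is exactly the value set in the algorithm, and for $\delta \leq (\hat{\sigma}/4)^6/L^6 = \hat{\sigma}^6/(4L)^6$, again matching. Moreover $D > \hat{\sigma}/4$, so the inequality $D>\hat\sigma$ used in the lemma holds at this scale. Hence \lem{eliminating-failure} yields
\[
\E\|\hat{\mu}-\mu\|^2 \leq 13\left(\tfrac{\hat{\sigma}}{4}\right)^2 = \tfrac{13}{16}\hat{\sigma}^2 \leq \hat{\sigma}^2,
\]
which is the claimed variance bound; the factor-of-$4$ rescaling is precisely what turns the constant $13$ into $13/16 \le 1$, so we hit the target variance $\hat{\sigma}^2$ itself rather than a constant multiple.

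For the query complexity, I would note that the only oracle use is the single call to $\mathtt{QuantumMeanEstimation}(X,\hat{\sigma}/4,\delta)$ plus $O(1)$ queries to $O_X$ to draw the classical samples $X_2$ and (when needed) $X_3$. By \lem{quantum-multivariate-mean-estimator} the former costs $\tO\!\big(L\sqrt{d}\log(1/\delta)/(\hat{\sigma}/4)\big) = \tO\!\big(L\sqrt{d}\log(1/\delta)/\hat{\sigma}\big)$ queries, and since $\log(1/\delta) = 6\log(4L/\hat{\sigma})$ is poly-logarithmic in $L/\hat{\sigma}$ it is absorbed into the $\tO$ notation, giving total query complexity $\tO(L\sqrt{d}\,\hat{\sigma}^{-1})$.

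I do not expect a real obstacle: the corollary is a bookkeeping composition of \lem{quantum-multivariate-mean-estimator} (to instantiate $X_1$) and \lem{eliminating-failure} (to control the expected squared error of the comparison-and-resample step). The only point requiring care is confirming that the explicit constants $\delta = \hat{\sigma}^6/(4L)^6$ and $D = \hat{\sigma}/4 + 16L^3/\hat{\sigma}^2$ in \algo{qme+} are exactly \lem{eliminating-failure}'s hypotheses evaluated at scale $\hat{\sigma}/4$, and that $13/16 \le 1$ so the bound $\E\|\hat\mu-\mu\|^2\le\hat\sigma^2$ is met with no loss in the constant.
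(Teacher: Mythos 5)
Your proof is correct and follows essentially the same route as the paper: invoke \lem{eliminating-failure} (with its accuracy parameter instantiated at $\hat{\sigma}/4$, matching the algorithm's choices of $D$ and $\delta$) to get $\E\|\hat{\mu}-\mu\|^2\leq 13(\hat{\sigma}/4)^2\leq\hat{\sigma}^2$, and then charge the query cost to the single \texttt{QuantumMeanEstimation} call plus $O(1)$ classical samples, with $\log(1/\delta)$ absorbed into the $\tO$. In fact you make explicit the constant-checking that the paper's proof leaves implicit, so there is nothing to correct.
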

\begin{proof}
By \lem{eliminating-failure}, the output $\hat{\mu}$ of $\mathtt{QuantumMeanEstimation}^+\left(X,\hat{\sigma}\right)$ (\algo{qme+}) satisfies 
\begin{align}
\E\|\hat{\mu}-\mu\|^2\leq \hat{\sigma}^2.
\end{align}
As for the number of queries to $O_X$, note that it uses 
\begin{align*}
\tO\left(\frac{L\sqrt{d}\log(1/\delta)}{\hat{\sigma}}\right)=\tO\left(\frac{L\sqrt{d}}{\hat{\sigma}}\log\left(\frac{L^6}{\hat{\sigma}^6}\right)\right)=\tO\left(\frac{L\sqrt{d}}{\hat{\sigma}}\right)
\end{align*}
queries to prepare $X_1$. Moreover, we can prepare the classical samples $X_2$ and $X_3$ by measuring the quantum state $O_X\ket{0}$ in the computational basis, which takes $1$ query to $O_X$. Hence, the overall query complexity equals $\tO(L\sqrt{d}\hat{\sigma}^{-1})$.
\end{proof}

\begin{proof}[Proof of \thm{unbiased-estimator}]
The structure of our proof is similar to the proof of \cite[Proposition 1]{asi2021stochastic}. Observe that the output $\hat{\mu}$ of \algo{unbiased-Q} can be expressed as 
\begin{align}\label{eqn:MLMC-estimator}
\hat{\mu}=\tilde{\mu}_0+2^J(\tilde{\mu}_J-\tilde{\mu}_{J-1}),\qquad J\sim\mathrm{Geom}\Big(\frac{1}{2}\Big)\in\N.
\end{align}
Then we have
\begin{align*}
\mathbb{E}[\hat{\mu}]=\E[\tilde{\mu}_0]+\sum_{j=1}^{\infty}\Pr\{J=j\}2^j(\E[\tilde{\mu}_j]-\E[\tilde{\mu}_{j-1}])
=\E[\tilde{\mu}_{\infty}]=\mu,
\end{align*}
given that $\Pr\{J=j\}=2^{-j}$. As for the variance, we have
\begin{align*}
\E\|\hat{\mu}-\mu\|^2\leq 2\E\|\hat{\mu}-\tilde{\mu}_0\|^2+2\E\|\tilde{\mu}_0-\mu\|^2,
\end{align*}
where
\begin{align*}
\mathbb{E}\|\hat{\mu}-\tilde{\mu}_0\|^2
=\sum_{j=1}^{\infty}\Pr(J=j)2^{2j}\E\|\tilde{\mu}_j-\tilde{\mu}_{j-1}\|^2
=\sum_{j=1}^{\infty}2^j\E\|\tilde{\mu}_j-\tilde{\mu}_{j-1}\|^2,
\end{align*}
and for each $j$ we have 
\begin{align*}
\E\|\tilde{\mu}_j-\tilde{\mu}_{j-1}\|^2
\leq 2\E\|\tilde{\mu}_j-\mu\|^2+2\E\|\tilde{\mu}_{j-1}-\mu\|^2.
\end{align*}
By \cor{qme+}, 
\begin{align*}
\E\|\tilde{\mu}_j-\mu\|^2\leq\frac{\hat{\sigma}^2}{100\cdot 2^{3j/2}},\quad\forall j\geq 0,
\end{align*}
which leads to 
\begin{align*}
\E\|\tilde{\mu}_j-\tilde{\mu}_{j-1}\|^2\leq \frac{\sigma^2}{50\cdot 2^{3(j-1)/2}}+\frac{\sigma^2}{50\cdot 2^{3j/2}}\leq\frac{\hat{\sigma}^2}{10\cdot 2^{3j/2}},
\end{align*}
and
\begin{align*}
\E\|\hat{\mu}-\tilde{\mu}_0\|^2=\frac{\hat{\sigma}^2}{10}\sum_{j=1}^{\infty}\frac{1}{2^{j/2}}\leq \frac{1}{3} \hat{\sigma}^2\,.
\end{align*}
Hence,
\begin{align*}
\E\|\hat{\mu}-\mu\|^2\leq 2\E\|\hat{\mu}-\tilde{\mu}_0\|^2+2\E\|\tilde{\mu}_0-\mu\|^2\leq \hat{\sigma}^2,
\end{align*}
given that $\E\|\tilde{\mu}_0-\mu\|^2\leq\hat{\sigma}^2/100$ by \cor{qme+}. Moreover, the expected number of queries is
\begin{align*}
\tO\left(\frac{L\sqrt{d}\log d}{\hat{\sigma}}\right)\cdot\left(1+\sum_{j=1}^{\infty}\Pr\{J=j\}\cdot\left(2^{3j/4}+2^{3(j-1)/4}\right)\right)=\tO\left(\frac{L\sqrt{d}}{\hat{\sigma}}\right).
\end{align*}
\end{proof}


\section{Quantum accelerated stochastic approximation}\label{sec:Q-AC-SA}

In this section, we present our  $\tO(d^{5/8}\epsilon^{-3/2})$ query quantum algorithm for \prob{SCO}. Our approach builds upon the framework proposed by Duchi et al.~\cite{duchi2012randomized,bubeck2019complexity}, which involves performing a Gaussian convolution on the objective function $f$ and then optimizing the resulting smooth convoluted function. Compared to their algorithm, our algorithm differs by replacing the variance reduction step by our quantum variance reduction technique (\algo{unbiased-Q}).

As with a variety of prior work on parallel and private SCO~\cite{duchi2012randomized,gasnikov2018global,bubeck2019complexity}, we consider the smooth function $F_r$ that is the Gaussian convolution of the objective function $f$:
\begin{align}\label{eqn:gaussian-convolution}
F_r(\x)\coloneqq\int_{\R^d}\gamma_r(\y)f(\x-\y)\d \y,\quad\text{where} \quad\gamma_r(\y)\coloneqq\frac{1}{(\sqrt{2\pi}r)^d}\exp\Big(-\frac{\|\y\|^2}{2r^2}\Big).
\end{align}
As shown in \lem{gaussian-smoothing-gradient-smoothness}, when the radius $r$ of the convolution is sufficiently small, $F_r$ closely approximates $f$ pointwise. Consequently, to find an 
$\epsilon$-optimal point of $f$ it suffices to find an $\epsilon/ 4$-optimal point of $F_r$ for $r=\frac{\epsilon}{4\sqrt{d}L}$. Moreover, \lem{gaussian-smoothing-gradient-smoothness} shows that the stochastic gradient $\g_F$ of $F_r$ can be defined and obtained based on the stochastic gradient $\g$ of $f$ as follows:
\begin{align}\label{eqn:g_F-defn}
\g_F(\x)=\g(\x-\y),\qquad\y\sim\gamma_r,
\end{align}
which satisfies
\begin{align*}
\underset{\y\sim\gamma_r}{\mathbb{E}}\g_F(\x)=\nabla F_r(\x)
\quad\text{and}\quad
\|\g_F(\x)\|\leq L,\quad\forall\x.
\end{align*}
Hence,
\begin{align*}
\|\nabla F_r(\x)\|\leq\int_{\y\sim\gamma_r}\|\g_F(\x)\|\d\y\leq L
\end{align*}
indicating that $F_r$ is also $L$-Lipschitz.

To optimize this smooth convex function $F_r$, we leverage the accelerated stochastic approximation (AC-SA) algorithm introduced in~\cite{lan2012optimal}, which applys an accelerated proximal descent method on the objective function using unbiased estimates of gradients. Our algorithm given in \algo{Q-AC-SA} is a specialization of the AC-SA algorithm, where we implement those unbiased estimates of gradients using quantum variance reduction (\algo{unbiased-Q}). In the classical setting, one query to the stochastic gradient $\g_F(\x)$ of $F$ can be implemented by 
a random sampling a vector $\y\in\R^d$ from the Gaussian $\gamma_r$ followed by a query to the SGO $\mathcal{C}_{\g}$ (\defn{Cg}) at $\x-\y$. Similarly, we can show that one query to a QSGO of $F_r$ (\defn{Og}) can also be implemented by one query to the QSGO of $f$. The subsequent theorem presents the query complexity of \algo{Q-AC-SA}. 
\begin{algorithm2e}
	\caption{Quantum accelerated stochastic approximation (Q-AC-SA)}
	\label{algo:Q-AC-SA}
	\LinesNumbered
	\DontPrintSemicolon
    \KwInput{Function $f\colon\R^d\to\R$, precision $\epsilon$}
    \KwParameter{Domain Size $R$, total iteration budget $\mathcal{T}=\frac{4d^{1/4}LR}{\epsilon}$, target variance $\hat{\sigma}=\frac{d^{1/8}}{8}\sqrt{\frac{L\epsilon}{R}}$, convolution radius $r=\frac{\epsilon}{4\sqrt{d}L}$, $\gamma=\frac{R\sqrt{6\ell_F}}{(\mathcal{T}+2)^{3/2}\hat{\sigma}}$}
    \KwOutput{an $\epsilon$-optimal point of $F$}
    Denote $F_r(\x)\coloneqq\int_{\R^d}\gamma_r(\y)f(\x-\y)\d \y$ as in \eqn{gaussian-convolution}\label{lin:convolution}\;
    Set $\x_1\leftarrow \0$, $\xag_1\leftarrow\x_1$\;
    \For{$t=1,2,\ldots,\mathcal{T}$}{
        $\beta_t\leftarrow\frac{t+1}{2}$, $\gamma_t\leftarrow\frac{t+1}{2}\gamma$\;
        $\xmd_t\leftarrow\beta_t^{-1}\x_t+(1-\beta_t^{-1})\xag_t$\;
        Call \algo{unbiased-Q} for an unbiased estimate $\tilde{\g}_t$ of $\nabla F_r(\xmd_t)$ with variance at most $\hat{\sigma}^2$\label{lin:convex-QVR}\;
        $\x_{t+1}\leftarrow \underset{\z\in\B_R(\0)}{\arg\min}\left\{\gamma_t\<\tilde{\g}_t,\z-\xmd_t\>+L\|\xmd_t-\z\|^2/(2r)\right\}$\label{lin:proximal}\;
        $\xag_{t+1}=\beta_t^{-1}\x_{t+1}+(1-\beta_t^{-1})\xag_t$
    }
        \Return $\xag_{\mathcal{T}+1}$\;
\end{algorithm2e}
\begin{theorem}[Formal version of \thm{SCO-informal}, Part 1]\label{thm:Q-AC-SA}
\algo{Q-AC-SA} solves \prob{SCO} using an expected $\tO(d^{5/8}(LR / \epsilon)^{3/2})$ queries.
\end{theorem}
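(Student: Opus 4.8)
The plan is to read \algo{Q-AC-SA} as nothing more than an instantiation of the classical accelerated stochastic approximation (AC-SA) method of Lan~\cite{lan2012optimal} run on the Gaussian-smoothed objective $F_r$ of \eqn{gaussian-convolution}, in which the only non-classical ingredient is that the per-iteration gradient estimate is produced by \algo{unbiased-Q} rather than by plain subsampling; correctness then follows from the AC-SA convergence guarantee, and the query bound follows by multiplying the iteration count $\mathcal T$ by the expected per-iteration cost given by \thm{unbiased-estimator}. First I would record the reduction: by \lem{gaussian-smoothing-gradient-smoothness} and the choice $r=\epsilon/(4\sqrt d L)$, the convolution $F_r$ is convex, $L$-Lipschitz, $\ell_F$-smooth with $\ell_F\le L/r=4\sqrt d L^2/\epsilon$, and satisfies $|F_r(\x)-f(\x)|\le\epsilon/4$ for all $\x$, so that any expected $(\epsilon/2)$-optimal point of $F_r$ is expected $\epsilon$-optimal for $f$ (and $F_r$ has an $O(\epsilon)$-approximate minimizer in $\B_R(\0)$, which is harmless after a constant rescaling of $\epsilon$). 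Moreover $\g_F(\x)=\g(\x-\y)$, $\y\sim\gamma_r$, from \eqn{g_F-defn} is an unbiased stochastic gradient of $F_r$ with $\E\|\g_F(\x)\|^2\le L^2$, hence $\Var[\g_F(\x)]\le L^2$; and one query to a QSGO of $F_r$ (in the sense of \defn{Og}) is simulated by one query to $O_\g$ by preparing $\int\sqrt{\gamma_r(\y)\d\y}\ket{\y}$ in an ancilla (no oracle queries), computing $\x-\y$, applying $O_\g$, and absorbing the $\y$-register into the garbage state; the inverse is simulated analogously.

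For correctness, the call on \lin{convex-QVR} invokes \algo{unbiased-Q} on the random variable $\g_F(\xmd_t)$ with target variance $\hat\sigma^2$, so by \thm{unbiased-estimator} (through \cor{qme+}) the returned $\tilde\g_t$ is an \emph{exactly} unbiased estimate of $\nabla F_r(\xmd_t)$ with $\E\|\tilde\g_t-\nabla F_r(\xmd_t)\|^2\le\hat\sigma^2$ — precisely the oracle that AC-SA expects. Invoking the AC-SA convergence theorem of~\cite{lan2012optimal} with the stepsizes $\beta_t,\gamma_t$ of \algo{Q-AC-SA} and the balancing choice $\gamma=R\sqrt{6\ell_F}/((\mathcal T+2)^{3/2}\hat\sigma)$ yields
\[
\E[F_r(\xag_{\mathcal T+1})]-F_r^*\;\le\;O\!\left(\frac{\ell_F R^2}{\mathcal T^2}+\frac{\hat\sigma R}{\sqrt{\mathcal T}}\right).
\]
Substituting $\mathcal T=4d^{1/4}LR/\epsilon$ makes the first term $O(\epsilon)$, and substituting $\hat\sigma=\tfrac{d^{1/8}}{8}\sqrt{L\epsilon/R}$ makes the second term $O(\epsilon)$; hence $\xag_{\mathcal T+1}$ is expected $O(\epsilon)$-optimal for $F_r$, and therefore expected $\epsilon$-optimal for $f$ after rescaling $\epsilon$ by the (constant) hidden factor.

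For the query complexity, each of the $\mathcal T$ iterations performs exactly one call to \algo{unbiased-Q} with target variance $\hat\sigma^2$ on a random variable of variance at most $L^2$ over $\R^d$, which by \thm{unbiased-estimator} costs an expected $\tO(L\sqrt d/\hat\sigma)$ queries to the QSGO of $F_r$, i.e.\ the same number of queries to $O_\g$ by the simulation above; all other steps (the proximal update on \lin{proximal} and the extrapolation steps) use no oracle queries. By linearity of expectation the total expected number of queries is $\mathcal T\cdot\tO(L\sqrt d/\hat\sigma)$; since $L\sqrt d/\hat\sigma=\Theta\big(d^{3/8}(LR/\epsilon)^{1/2}\big)$ and $\mathcal T=\Theta\big(d^{1/4}LR/\epsilon\big)$, this is $\tO\big(d^{5/8}(LR/\epsilon)^{3/2}\big)$, as claimed.

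The main obstacle — and the reason \algo{unbiased-Q} is used in place of the raw estimator of \lem{quantum-multivariate-mean-estimator} — is that AC-SA's accelerated analysis genuinely requires \emph{unbiased} gradient estimates: the $\Theta(\hat\sigma)$ bias of the plain quantum mean estimator would propagate through the accelerated recursion and would not be controlled by the variance term, destroying the $\ell_F R^2/\mathcal T^2$ contribution. Thus the crux is that \thm{unbiased-estimator} supplies a zero-bias, variance-$\hat\sigma^2$ estimate at the quantum cost $\tO(L\sqrt d/\hat\sigma)$, which is exactly what lets the two error terms balance at $\mathcal T=\Theta(d^{1/4}LR/\epsilon)$ iterations rather than the classical $\Theta(\epsilon^{-2})$. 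A minor point to dispatch is that \lem{quantum-multivariate-mean-estimator} (and hence \cor{qme+}, \thm{unbiased-estimator}) needs $\hat\sigma\le L$; this holds in the relevant regime $\epsilon=\tO(d^{-1/4}LR)$, while $\epsilon\ge LR$ is trivial since $\0$ is already $\epsilon$-optimal by $L$-Lipschitzness and $\|\x^*\|\le R$, so the condition may be assumed throughout. The remaining work — verifying the constants in \lem{gaussian-smoothing-gradient-smoothness}, checking that the simulated $F_r$-oracle has the correct marginal distribution $p_{F_r,\x}$, and the parameter arithmetic — is routine.
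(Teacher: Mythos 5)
Your proposal is correct and follows essentially the same route as the paper's proof: reduce to the Gaussian-smoothed $F_r$ via \lem{gaussian-smoothing-gradient-smoothness}, simulate a QSGO for $F_r$ with one query to $O_\g$ (the paper's \lem{O_gF}), invoke the AC-SA guarantee of \cite{lan2012optimal} with the unbiased, variance-$\hat{\sigma}^2$ estimates supplied by \thm{unbiased-estimator}, and multiply $\mathcal{T}$ by the per-iteration cost $\tO(L\sqrt{d}/\hat{\sigma})$ to obtain $\tO(d^{5/8}(LR/\epsilon)^{3/2})$. The additional remarks (why unbiasedness is essential for the accelerated analysis, the regime $\hat{\sigma}\leq L$, and the trivial case $\epsilon\geq LR$) are consistent with, and slightly more explicit than, the paper's treatment.
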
 

To prove \thm{Q-AC-SA}, we first state a result from \cite{bubeck2019complexity} which bounds properties of $F$ defined in \eqn{gaussian-convolution}. The lemma states that, provided the radius $r$ of the convolution is sufficiently small, $F$ is pointwise close to $f$ and hence finding an $\mO(\epsilon)$-minimum of $f$ is equivalent to finding an $\epsilon$-minimum of $F$. Further the lemma shows that $F$ is Lipschitz and smooth. 

\begin{lemma}[{\cite[Lemma 8]{bubeck2019complexity}}]\label{lem:gaussian-smoothing-gradient-smoothness}
For any $L$-Lipschitz convex function $f\colon\R^d\to\R$, its Gaussian convolution $g$ defined in Eq.~\eqn{gaussian-convolution} is convex, $L$-Lipschitz, and satisfies
\begin{align*}
|f(\x)-g(\x)|\leq\sqrt{d}\cdot Lr
\quad\text{ 
and
}\quad
\nabla^2g(\x)\leq(L/r)\cdot \mathbb{I}_d,\quad\forall\x\in\R^d,
\end{align*}
where $\mathbb{I}_d$ is the $d$-dimensional identity matrix.
\end{lemma}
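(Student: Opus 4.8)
The plan is to work throughout from the probabilistic representation $g(\x)=\E_{\y\sim\gamma_r}[f(\x-\y)]$ of the Gaussian convolution, from which the three ``elementary'' properties are immediate. \textbf{Convexity:} for each fixed $\y$ the map $\x\mapsto f(\x-\y)$ is convex, and $g$ is an average of such maps against the probability density $\gamma_r$, hence convex. \textbf{$L$-Lipschitzness:} for all $\x,\x'\in\R^d$, $|g(\x)-g(\x')|\le\E_{\y\sim\gamma_r}\,|f(\x-\y)-f(\x'-\y)|\le L\|\x-\x'\|$. \textbf{Pointwise bound:} by Jensen and $\E_{\y\sim\gamma_r}\|\y\|^2=dr^2$,
\[
|g(\x)-f(\x)|=\Big|\E_{\y\sim\gamma_r}\big[f(\x-\y)-f(\x)\big]\Big|\le L\,\E_{\y\sim\gamma_r}\|\y\|\le L\sqrt{\E_{\y\sim\gamma_r}\|\y\|^2}=\sqrt d\,Lr .
\]

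The main work is the Hessian bound $\nabla^2 g(\x)\preceq(L/r)I_d$. First I would note that $g=f*\gamma_r$ is $C^\infty$ because $\gamma_r$ is smooth. Since $f$ is convex and $L$-Lipschitz it is differentiable Lebesgue-a.e.\ with $\|\nabla f\|\le L$ wherever defined (Rademacher), and $\gamma_r$ is absolutely continuous, so $\nabla f(\x-\y)$ is defined for $\gamma_r$-a.e.\ $\y$; differentiating under the integral (dominated convergence, difference quotients bounded by $L$) gives $\nabla g(\x)=\E_{\y\sim\gamma_r}[\nabla f(\x-\y)]$. I would then integrate by parts via the Gaussian Stein identity $\E_{\y\sim\gamma_r}[\y\,h(\y)]=r^2\E_{\y\sim\gamma_r}[\nabla h(\y)]$ with the scalar $h(\y)=f(\x-\y)$, turning this into $\nabla g(\x)=-r^{-2}\E_{\y\sim\gamma_r}[\y\,f(\x-\y)]$; differentiating once more in $\x$ then yields
\[
\nabla^2 g(\x)=-\frac1{r^2}\,\E_{\y\sim\gamma_r}\big[\y\,\nabla f(\x-\y)^\top\big].
\]
Fixing a unit vector $\v\in\R^d$, applying Cauchy--Schwarz pointwise inside the expectation and $\|\nabla f\|\le L$ gives
\[
\big|\v^\top\nabla^2 g(\x)\,\v\big|\le\frac1{r^2}\E_{\y\sim\gamma_r}\big[|\v^\top\y|\,\|\nabla f(\x-\y)\|\big]\le\frac{L}{r^2}\,\E_{\y\sim\gamma_r}|\v^\top\y|=\frac{L}{r^2}\cdot r\sqrt{2/\pi}\le\frac{L}{r},
\]
since $\v^\top\y$ is a one-dimensional Gaussian of variance $r^2$. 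Together with $\nabla^2 g(\x)\succeq0$ (from convexity) this gives $0\preceq\nabla^2 g(\x)\preceq(L/r)I_d$, which is the claimed inequality (in fact with the slightly better constant $\sqrt{2/\pi}$).

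The step I expect to be the main obstacle is the rigorous justification of differentiating under the integral and of the Stein integration by parts, since $f$ is only a.e.\ differentiable, not $C^1$; I would handle this via Rademacher's theorem plus dominated convergence, using that the relevant difference quotients are bounded by $L$ (respectively by $L\|\y\|/r$), both $\gamma_r$-integrable. An alternative that sidesteps any differentiability assumption on $f$ is to move both derivatives onto the smooth kernel: writing $g(\x)=\int_{\R^d} f(\u)\gamma_r(\x-\u)\,\d\u$ and using $\int_{\R^d}\nabla^2\gamma_r=0$, one obtains $\nabla^2 g(\x)=\int_{\R^d}\big(f(\u)-f(\x)\big)\nabla^2\gamma_r(\x-\u)\,\d\u$, whose operator norm can be bounded using $|f(\u)-f(\x)|\le L\|\u-\x\|$ together with the explicit form of $\nabla^2\gamma_r$ and elementary Gaussian moment estimates. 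Either route reproduces the statement, which is \cite[Lemma 8]{bubeck2019complexity}.
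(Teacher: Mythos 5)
The paper does not actually prove this lemma: it is quoted verbatim from \cite[Lemma 8]{bubeck2019complexity}, so there is no in-paper argument to compare against. Your proof is correct and is essentially the standard derivation of that cited result: convexity and $L$-Lipschitzness by averaging, the pointwise bound via Jensen and $\E_{\y\sim\gamma_r}\|\y\|^2=dr^2$, and the Hessian bound by writing $\nabla^2 g(\x)=-r^{-2}\,\E_{\y\sim\gamma_r}[\y\,\nabla f(\x-\y)^\top]$ through Gaussian integration by parts and using $\E|\v^\top\y|=r\sqrt{2/\pi}$, with the interchange of differentiation and expectation justified exactly as you indicate (Rademacher plus dominated convergence with $L$-bounded difference quotients); this even yields the slightly sharper constant $\sqrt{2/\pi}\,L/r$. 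One caution about your proposed fallback: bounding $\nabla^2 g(\x)=\int(f(\u)-f(\x))\nabla^2\gamma_r(\x-\u)\,\d\u$ crudely via $|f(\u)-f(\x)|\le L\|\u-\x\|$ and absolute moments of $\nabla^2\gamma_r$ does \emph{not} recover $L/r$ --- taking absolute values inside destroys the cancellation in $\v^\top\nabla^2\gamma_r(\vect{z})\v=\gamma_r(\vect{z})\big((\v^\top\vect{z})^2/r^4-1/r^2\big)$ and the resulting estimate is of order $L\sqrt{d}/r$. To make that route give a dimension-free bound you must still integrate by parts once along the direction $\v$, i.e., move one derivative back onto $f$, which is precisely your main argument, so the primary proof should be regarded as the proof and the aside dropped or repaired.
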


Prior to proving \thm{Q-AC-SA}, we show that one query to a QSGO $O_{\g_F}$ (\defn{Og}) of $F_r$ satisfying
\begin{align}\label{eqn:O_gF-defn}
O_{\g_F}\ket{\x}\otimes\ket{0}\otimes\ket{0}\to\ket{\x}\otimes\int_{\v\in\R^d}\sqrt{p_{F,\x}(\v)\d \v}\ket{\v}\otimes\ket{\mathrm{garbage}(\v)},
\end{align}
can be implemented by one query to $O_\g$ of $f$ defined in \defn{Og}.

\begin{lemma}\label{lem:O_gF}
The QSGO $O_{\g_F}$ of $F_r$ defined in \eqn{O_gF-defn} can be implemented with one query to the QSGO $O_\g$ of $f$.
\end{lemma}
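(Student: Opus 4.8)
The plan is to implement $O_{\g_F}$ by composing the given oracle $O_\g$ for $f$ with a unitary that prepares a Gaussian-distributed shift vector $\y \sim \gamma_r$ in an ancilla register, and then exploits the structure of the stochastic gradient $\g_F(\x) = \g(\x - \y)$ from \eqn{g_F-defn}. First I would introduce an auxiliary register and a unitary $U_\gamma$ satisfying $U_\gamma\ket{0} \to \int_{\y \in \R^d} \sqrt{\gamma_r(\y)\,\d\y}\,\ket{\y}$, i.e. the quantum state encoding the Gaussian density $\gamma_r$; this is a state-preparation step that requires no queries to $O_\g$. Next, conditioned on the contents of the input register $\ket{\x}$ and the Gaussian register $\ket{\y}$, I would apply a controlled subtraction to compute $\ket{\x - \y}$ into a work register (this is a reversible arithmetic operation, again query-free), then invoke $O_\g$ once with $\ket{\x-\y}$ as the point of evaluation, producing $\ket{\x-\y} \otimes \int_{\v} \sqrt{p_{f,\x-\y}(\v)\,\d\v}\,\ket{\v} \otimes \ket{\mathrm{garbage}(\v)}$, and finally uncompute the subtraction to restore $\ket{\x}$ in the input register (leaving $\ket{\y}$, which can be absorbed into the garbage register).

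The key step is then to verify that the resulting state has the correct marginal form demanded by \eqn{O_gF-defn}. Concretely, after the composition the relevant registers hold
\[
\ket{\x} \otimes \int_{\y \in \R^d}\int_{\v \in \R^d} \sqrt{\gamma_r(\y)\,p_{f,\x-\y}(\v)\,\d\y\,\d\v}\;\ket{\v} \otimes \ket{\mathrm{garbage}'(\y,\v)},
\]
where the $\ket{\y}$ register and the original $\ket{\mathrm{garbage}(\v)}$ register have been grouped together as a new garbage register $\ket{\mathrm{garbage}'(\y,\v)}$. To match \eqn{O_gF-defn} I must check that tracing out (or rather, treating as garbage) the $\y$ register yields the right distribution on $\v$: the probability density of observing $\v$ in the output register is $\int_{\y} \gamma_r(\y)\,p_{f,\x-\y}(\v)\,\d\y = p_{F,\x}(\v)$, which is exactly the distribution of $\g_F(\x) = \g(\x-\y)$ for $\y \sim \gamma_r$, i.e. the stochastic gradient distribution of $F_r$. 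Since the model (per the paper's footnote and \defn{Og}) makes no assumption on the garbage state, folding $\ket{\y}$ into garbage is legitimate, and the amplitude $\sqrt{\gamma_r(\y)p_{f,\x-\y}(\v)}$ on the entangled garbage-carrying state is precisely what a valid QSGO of $F_r$ is allowed to produce.

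The main obstacle — and the only subtlety worth dwelling on — is the bookkeeping of the garbage register and the uncomputation: I need the subtraction $\x, \y \mapsto \x - \y$ to be performed reversibly so that after the $O_\g$ call I can uncompute it and genuinely restore $\ket{\x}$ in the input register rather than leaving $\ket{\x-\y}$ entangled there (which would not match \eqn{O_gF-defn}); the Gaussian-preparation register $\ket{\y}$ cannot be uncomputed (it is entangled with $\v$ through the $O_\g$ call), but that is fine since it becomes part of the garbage. I would also note, as the paper does elsewhere, that continuous-valued registers are an idealization and in an actual finite-precision implementation all of these steps (Gaussian state preparation, reversible subtraction, the single $O_\g$ query) incur only polynomially bounded numerical error, which the downstream analysis of \algo{Q-AC-SA} tolerates. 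Finally, since $O_\g$ is assumed to come with its inverse $O_\g^{-1}$, the composed oracle $O_{\g_F}$ is itself unitary with an efficiently implementable inverse, completing the reduction with a single query.
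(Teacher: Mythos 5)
Your proposal is correct and follows essentially the same route as the paper: prepare the Gaussian superposition, apply $O_\g$ once at the shifted point, and observe that the induced marginal on the output register is $p_{F,\x}(\v)=\int_{\y}\gamma_r(\y)p_{f,\x-\y}(\v)\,\d\y$, with the residual register ($\y$ in your version, $\x-\y$ in the paper's) folded into the unconstrained garbage state. The only difference is bookkeeping — the paper keeps $\ket{\x-\y}$ as garbage rather than uncomputing the subtraction — which does not change the argument.
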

\begin{proof}
We first prepare the following quantum state
\begin{align*}
\ket{\psi}=\ket{\x}\otimes\left(\int_{\y\in\R^d}\sqrt{\gamma_r(\y)\d\y}\ket{\x-\y}\right)\otimes\ket{0},\quad\forall\x\in\R^d.
\end{align*}
Applying $O_{\g}$ to the last two registers yields
\begin{align}
(\mathbb{I}\otimes O_\g)\ket{\psi}
&=\ket{\x}\otimes\left(\int_{\y\in\R^d}\sqrt{\gamma_r(\y)\d\y}\ket{\x-\y}\otimes\int_{\v\in\R^d}\sqrt{p_{f,\x-\y}(\v)\d\v}\ket{\v}\right)\nonumber\\
&=\ket{\x}\otimes\int_{\y,\v\in\R^d}\sqrt{\gamma_r(\y)p_{f,\x-\y}(\v)\d\y\d\v}\ket{\x-\y}\ket{\v}.\label{eqn:oracle-middle-step}
\end{align}
Given that
\begin{align*}
p_{F,\x}(\v)=\int_{\y\in\R^d}\gamma_r(\y)p_{f,\x-\y}(\v)\d\y,\qquad\forall\v\in\R^d,
\end{align*}
if we measure the last register, the probability density function of the outcome would be exactly $p_{F,\x}$. Hence, the quantum state $(\mathbb{I}\otimes O_\g)\ket{\psi}$ in \eqn{oracle-middle-step} can also be written as
\begin{align*}
(\mathbb{I}\otimes O_\g)\ket{\psi}=\ket{\x}\otimes\int_{\v\in\R^d}\sqrt{p_{F,\x}(\v)\d \v}\ket{\text{garbage}(\v)}\otimes\ket{\v}.
\end{align*}
By swapping the last two quantum registers, we can obtain the desired output state of $O_{\g_F}$.
\end{proof}

The following result from \cite{lan2012optimal} bounds the rate at which \algo{Q-AC-SA} decreases the function error of $F_r$. Note that validity of this result relies solely on the fact that, at \lin{convex-QVR}, the variance of the unbiased gradient estimate $\tilde{\g}_t$ does not exceed $\hat{\sigma}^2$, irrespective of its implementation.

\begin{lemma}
\label{lem:AC-SA-convergence}
The output $\xag_{\mathcal{T}+1}$ of \algo{Q-AC-SA} satisfies 
\begin{align*}
\E[F_r(\xag_{\mathcal{T}+1})-F_r^*]\leq\frac{4L R^2}{r\mathcal{T}(\mathcal{T}+2)}+\frac{4R\hat{\sigma}}{\sqrt{\mathcal{T}}},
\end{align*}
where $F_r$ is the convoluted function defined in~\lin{convolution} of \algo{Q-AC-SA} and $F_r^*$ is its minimum.
\end{lemma}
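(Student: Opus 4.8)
The plan is to read this off as the accelerated stochastic approximation (AC-SA) guarantee of Lan~\cite{lan2012optimal}, specialized to the smooth convex surrogate $F_r$, and to check that the parameters hard-coded into \algo{Q-AC-SA} reproduce exactly the stated constants. First I would assemble the structural facts the analysis needs. By \lem{gaussian-smoothing-gradient-smoothness}, $F_r$ is convex and $\ell_F$-smooth with $\ell_F=L/r$, so the coefficient $L/(2r)$ in the proximal step at \lin{proximal} is precisely $\ell_F/2$ and $\gamma$ is well defined. The minimizer $\x^*$ obeys $\|\x^*\|\le R$, hence lies in $\B_R(\0)$, and every iterate $\x_t,\xmd_t,\xag_t$ stays in $\B_R(\0)$ ($\x_t$ by the projection at \lin{proximal}, the others as convex combinations started from $\0$). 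Finally, $\tilde{\g}_t$ returned at \lin{convex-QVR} by \algo{unbiased-Q} is, conditioned on the history $\mathcal{F}_{t-1}$ of the first $t-1$ iterations, an unbiased estimate of $\nabla F_r(\xmd_t)$ with $\E[\|\tilde{\g}_t-\nabla F_r(\xmd_t)\|^2\mid\mathcal{F}_{t-1}]\le\hat{\sigma}^2$, by \thm{unbiased-estimator} (the call draws only fresh oracle queries).

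The heart of the argument is a one-step potential inequality. Write $\tilde{\g}_t=\nabla F_r(\xmd_t)+\delta_t$ with $\E[\delta_t\mid\mathcal{F}_{t-1}]=0$ and combine: (i) the three-point / optimality inequality for the Euclidean proximal step at \lin{proximal}, evaluated at $\z=\x^*$; (ii) $\ell_F$-smoothness of $F_r$ expanded around $\xmd_t$, together with the identity $\xag_{t+1}-\xmd_t=\beta_t^{-1}(\x_{t+1}-\x_t)$ read directly off the updates; and (iii) convexity, $F_r(\x^*)\ge F_r(\xmd_t)+\langle\nabla F_r(\xmd_t),\x^*-\xmd_t\rangle$. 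The linear-in-$\delta_t$ contributions either vanish in conditional expectation (the piece paired with the $\mathcal{F}_{t-1}$-measurable vector $\x_t-\x^*$) or are absorbed by Young's inequality into the negative $\|\x_{t+1}-\x_t\|^2$ term coming from smoothness, leaving a residual of order $\gamma_t^2\|\delta_t\|^2/\ell_F$. The weights $\beta_t=(t+1)/2$ and $\gamma_t=\tfrac{t+1}{2}\gamma$ are exactly the choice for which the resulting inequality telescopes, i.e.\ for which $\gamma_t\beta_t\big(F_r(\xag_{t+1})-F_r^*\big)+\tfrac{\ell_F}{2}\|\x_{t+1}-\x^*\|^2$ is controlled by the same quantity at index $t-1$ plus those error terms.

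Summing over $t=1,\dots,\mathcal{T}$, taking expectations, and using $\|\x_1-\x^*\|^2=\|\x^*\|^2\le R^2$ (as $\x_1=\0$) together with $\E\sum_{t=1}^{\mathcal{T}}\gamma_t^2\|\delta_t\|^2\le\tfrac{\gamma^2\hat{\sigma}^2}{4}\sum_{t=1}^{\mathcal{T}}(t+1)^2=O(\gamma^2\hat{\sigma}^2\mathcal{T}^3)$, I obtain $\gamma_{\mathcal{T}}\beta_{\mathcal{T}}\,\E[F_r(\xag_{\mathcal{T}+1})-F_r^*]\le\tfrac{\ell_F R^2}{2}+O\!\big(\gamma^2\hat{\sigma}^2\mathcal{T}^3/\ell_F\big)$. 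Dividing by $\gamma_{\mathcal{T}}\beta_{\mathcal{T}}=\Theta(\gamma\mathcal{T}^2)$ and substituting $\gamma=\tfrac{R\sqrt{6\ell_F}}{(\mathcal{T}+2)^{3/2}\hat{\sigma}}$ — tuned precisely to balance the deterministic $\ell_F R^2$ term against the stochastic one — and then $\ell_F=L/r$ yields the claimed bound $\frac{4LR^2}{r\mathcal{T}(\mathcal{T}+2)}+\frac{4R\hat{\sigma}}{\sqrt{\mathcal{T}}}$.

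The main obstacle is bookkeeping rather than conceptual: ensuring the telescoping is exact with no accumulated slack over $\mathcal{T}$ steps, verifying that the Young split of the $\delta_t$ terms leaves precisely the claimed residual, and tracking every factor of $2$ and every $(\mathcal{T}+2)$ so the final constants come out verbatim. Since the statement is literally Lan's AC-SA bound under its hypotheses, the cleanest formal route is to invoke that theorem directly, checking only that (a) $F_r$ meets the convexity/smoothness hypotheses with modulus $L/r$, (b) the oracle at \lin{convex-QVR} is conditionally unbiased with variance at most $\hat{\sigma}^2$ via \thm{unbiased-estimator}, and (c) the parameters $\beta_t,\gamma_t,\gamma,r,\mathcal{T}$ in \algo{Q-AC-SA} match the parameter setting in their statement; the remaining constant-chasing is then their calculation.
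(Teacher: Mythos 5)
Your proposal is correct and matches the paper's approach: the paper's proof simply invokes \cite[Corollary 1]{lan2012optimal} applied to $F_r$, using \lem{gaussian-smoothing-gradient-smoothness} to get the smoothness modulus $\ell_F = L/r$ and the guarantee of \lin{convex-QVR} that $\tilde{\g}_t$ is unbiased with variance at most $\hat{\sigma}^2$, exactly the checklist in your final paragraph. The telescoping re-derivation you sketch is the internal proof of Lan's result and is not reproduced in the paper.
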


\begin{proof}
This lemma follows from \cite[Corollary 1]{lan2012optimal}, which shows that
\begin{align*}
\E[F_r(\xag_{\mathcal{T}+1})-F_r^*]\leq\frac{4\ell_F R^2}{\mathcal{T}(\mathcal{T}+2)}+\frac{4R\hat{\sigma}}{\sqrt{\mathcal{T}}},
\end{align*}
where $\ell_F$ is the Lipschitz parameter of $F_r$, which equals $L/r$ by \lem{gaussian-smoothing-gradient-smoothness}.
\end{proof}
Equipped with above results, we are now ready to present the proof of \thm{Q-AC-SA}. 

\begin{proof}[Proof of \thm{Q-AC-SA}]
By \lem{AC-SA-convergence}, the output $\x_{\out}$ of \algo{Q-AC-SA} satisfies
\begin{align}
\E[F(\x_{\out})-F^*]\leq\frac{4\ell_F R^2}{r\mathcal{T}(\mathcal{T}+2)}+\frac{4R\hat{\sigma}}{\sqrt{\mathcal{T}}}\leq \frac{\epsilon}{2}.
\end{align}
Moreover, by \lem{gaussian-smoothing-gradient-smoothness} we can derive that
\begin{align*}
\E[f(\x_{\out})-f^*]\leq \E[f(\x_{\out})-F(\x_{\out})]+(f^*-F^*)+\E[F(\x_{\out})-F^*]=\epsilon.
\end{align*}

Finally we bound the expected number of queries to $O_\g$ used in the algorithm. Since $\|\g_F(\x)\|\leq L$ for any $\x\in\R^d$, by \thm{unbiased-estimator} we know that \algo{unbiased-Q} can output an estimate $\tilde{\g}_F(\xmd_t)$ of $\nabla F(\xmd_t)$ with variance at most $\hat{\sigma}^2$ using $\tO(L\sqrt{d}/\hat{\sigma})$ queries to the oracle $O_F$, and hence the same number of queries to $O_\g$ by \lem{O_gF}. Then, the total number of queries to $O_\g$ equals
\begin{align*}
\mathcal{T}\cdot\tO\left(\frac{L\sqrt{d}}{\hat{\sigma}}\right)=\frac{d^{1/4}LR}{\epsilon}\cdot\tO\left(\frac{L\sqrt{d}}{d^{1/8}}\cdot\sqrt{\frac{R}{L\epsilon}}\right)=\tO\left(d^{5/8}\cdot\left(\frac{LR}{\epsilon}\right)^{3/2}\right).
\end{align*}
\end{proof}


\section{Quantum stochastic cutting plane method (Q-SCP)}\label{sec:QSCPM}

In this section, we develop our $\tO(d^{3/2}LR/\epsilon)$ query algorithm for \prob{SCO} which is based on a stochastic version of the cutting plane method. We introduce the key properties and related concepts of cutting plane methods, and then provide a procedure for efficiently post-processing the outcomes obtained from the stochastic cutting plane method using quantum variance reduction (\algo{unbiased-Q}). Then, we analyze the overall query complexity for solving \prob{SCO}. Technically, it is possible to obtain the results of this section using the quantum mean estimation routine of \cite{cornelissen2022near}, rather then quantum variance reduction, however using quantum variance reduction facilitates our presentation. 

\subsection{Properties of the stochastic cutting plane method}

We begin by introducing some notation and concepts on cutting plane methods. Cutting plane methods solve the \textit{feasibility problem} defined as follows. Note that this problem is slightly easier to solve then the one in, e.g., \cite{jiang2020improved}, however it is simple suffices for our purposes.

\begin{problem}[Feasibility Problem]\label{prob:feasibility}
We are given query access to a separation oracle for a set $K\subset\R^d$ such that on query $\x\in\R^d$ the oracle outputs a vector $\vect{c}$ and either $\vect{c} = \0$, in which case $\x \in K$, or $\vect{c} \neq \0$, in which case $H\coloneqq\{\vect{z}\colon\vect{c}^\top\vect{z}\leq\vect{c}^\top \x\}\supset K$. The goal is to query a point $\x\in K$.
\end{problem}

\cite{jiang2020improved} showed that \prob{feasibility} can be solved by cutting plane method using $\mO(d\log(dR/r))$ queries to a separation oracle where $R$ and $r$ denote bounds on $K$.

\begin{lemma}[{\cite[Theorem 1.1]{jiang2020improved}}]\label{lem:classical-cutting-plane}
There is a cutting plane method which solves \prob{feasibility} using at most $C\cdot d\log(dR/r)$ queries for some constant $C$, given that the set $K$ is contained in the ball of radius $R$ centered at the origin and it contains a ball of radius $r$. 
\end{lemma}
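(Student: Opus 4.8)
The plan is to establish \lem{classical-cutting-plane} via the classical \emph{volumetric-center cutting plane method} of Vaidya, as refined by Lee--Sidford--Wong and by \cite{jiang2020improved}, since it is the analysis of that algorithm which yields the quoted bound. Throughout one maintains a polytope $P^{(k)}=\{\z\in\R^d\colon A^{(k)}\z\leq b^{(k)}\}$ with the invariant $K\subseteq P^{(k)}$, initialized with $P^{(0)}=\B_{cR}(\0)$ for a suitable constant $c$ and described by $\mO(d)$ facets. At iteration $k$ one queries the separation oracle of \prob{feasibility} at the \emph{volumetric center} $\x^{(k)}$ of $P^{(k)}$, i.e.\ the minimizer of the volumetric barrier $V(\x)=\tfrac12\log\det\!\big(\textstyle\sum_i a_i a_i^\top/(b_i-a_i^\top\x)^2\big)$. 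If the oracle returns $\vect{c}=\0$ then $\x^{(k)}\in K$ and we are done; otherwise one appends the half-space $\{\z\colon\vect{c}^\top\z\leq\vect{c}^\top\x^{(k)}\}$ --- shifted inward by an $\mO(1/\sqrt{\nu})$ amount so that it is a genuine central cut --- to form $P^{(k+1)}\supseteq K$, and periodically deletes facets of negligible leverage score to keep the number of constraints at $\mO(d)$.

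The analysis has three ingredients. \textbf{(i) Per-query progress:} each central cut increases the volumetric barrier value at the recomputed center by an additive constant, equivalently shrinks a volumetric potential of $P^{(k)}$ by $\Omega(1)$; this plays the role, for the volumetric center, that Gr\"unbaum's inequality plays for the center of gravity. \textbf{(ii) A potential floor:} as long as the method has not terminated, $P^{(k)}\supseteq K\supseteq\B_r(\x_0)$ for the inscribed ball promised in the hypothesis, so $P^{(k)}$ is sandwiched between balls of radii $r$ and $\mO(R)$ with $\mO(d)$ facets, and hence its volumetric potential cannot have dropped by more than $\mO(d\log(R/r))+\mO(d\log d)$, where the $d\log d$ term reflects the self-concordance parameter $\nu=\mO(d)$ of the volumetric barrier together with the facet count. \textbf{(iii) Combine:} (i) and (ii) force termination within $T=\mO(d\log(dR/r))$ iterations, and each iteration uses exactly one oracle query. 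Intuitively, each query removes a constant fraction of the remaining ``volume'' while the inscribed $\B_r$ --- whose volume is $(r/R)^d$ times that of $P^{(0)}$ up to constants --- can never be cut off, leaving room for only $\mO(d\log(dR/r))$ queries.

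The step I expect to be the main obstacle is (i): the volumetric center does not yield a deep cut in the elementary sense --- passing a hyperplane through it need not remove a constant fraction of the Lebesgue volume of $P^{(k)}$ --- so Gr\"unbaum's argument does not apply directly, and one must instead track the volumetric barrier and the leverage scores, proving a dichotomy that the new constraint either carries large leverage (hence gives a large potential gain) or is nearly redundant and can be pushed inward into a true cut, all while keeping the re-centering Newton step well-conditioned and preserving $K\subseteq P^{(k)}$ under constraint deletion. For the present paper only the \emph{query} count is needed, and one may sidestep this entirely by invoking the center-of-gravity method with Gr\"unbaum's lemma to obtain the (stronger) bound $\mO(d\log(R/r))$ on the number of queries; we state the result as in \cite{jiang2020improved} because it additionally supplies an efficiently implementable method attaining the $\mO(d\log(dR/r))$ query complexity.
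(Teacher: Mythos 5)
The paper does not prove \lem{classical-cutting-plane} at all: it is imported as a black-box citation of \cite[Theorem 1.1]{jiang2020improved}, and nothing in the paper depends on how that theorem is established. So the relevant comparison is between your sketch and the cited literature rather than an internal argument. Your main route (Vaidya-style volumetric-center method with leverage-score-based constraint deletion and a volumetric-barrier potential) is indeed the approach behind the cited result, but as you yourself flag, step (i) --- the per-iteration $\Omega(1)$ potential gain together with control of the re-centering and of constraint deletion --- is precisely the technical content of that line of work, and your write-up leaves it unproven; as a standalone proof it is therefore incomplete. What saves the proposal for the purposes of this paper is your fallback: since \lem{classical-cutting-plane} is only ever used here for its \emph{query} count (in \prop{cutting-plane} each oracle call is an approximate-gradient computation, and computational cost of the cutting-plane bookkeeping is not tracked), the center-of-gravity argument is a complete and elementary substitute. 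Each halfspace returned by the oracle in \prob{feasibility} passes through the queried centroid and contains $K$, so by Gr\"unbaum's inequality the surviving volume shrinks by a factor $1-1/e$ per query while the ball of radius $r$ inside $K$ is never removed; starting from $\B_R(\0)$ this forces termination within $\mO(d\log(R/r))\leq C\, d\log(dR/r)$ queries. That argument proves the query bound stated in the lemma (indeed a slightly stronger one), so your proposal is acceptable as a proof of the statement as used here, with the caveat that it does not recover the algorithmic efficiency guarantees of \cite{jiang2020improved}, which is exactly why the paper cites that theorem rather than the center-of-gravity method.
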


\cite{nemirovski1994efficient,lee2015faster} demonstrated that, running cutting plane method on a convex function $f$ with the separation oracle being its gradient yields a sequence of points where at least one of them is an $\epsilon$-optimal point of $f$. This follows from the fact that there exists a ball of radius $\mO(\epsilon)$ around $\x^*$ such that every point in this ball is $\epsilon$-optimal. In the stochastic setting, although we cannot access the precise gradient, we show that it suffices to use use an $\mO(\epsilon/R)$-approximate gradient oracle of $f$ (defined below) as the separation oracle. 

\begin{definition}[$\delta$-approximate gradient oracle ($\delta$-AGO)]\label{defn:AGO}
For $f\colon\R^d\to\R$, its \emph{$\delta$-approximate gradient oracle,
$\delta$-AGO,} is defined as a random function that when queried at $\x$, returns a vector $\tilde{\g}(\x)$ that satisfies $\|\tilde{\g}(\x)-\nabla f(\x)\|\leq\delta$.
\end{definition}

We show that a $\delta$-AGO can be efficiently implemented using our quantum variance reduction algorithm. To obtain this result we can also use \cite[Claim 1]{kelner2023semi} by changing some parameters. Nevertheless, compared to their deterministic implementation, our approach has a lower expected time complexity (though is randomized).

\begin{lemma}\label{lem:QVR-bounded-error}
For any $\delta,\xi\geq 0$, with success probability at least $1-\xi$, the $\delta$-AGO of a Lipschitz function $f$ can be implemented using $\mO(\log(1/\xi))$ calls to \algo{unbiased-Q} with in total $\mO(L\sqrt{d}\log (1/\xi)/\delta)$ queries to the QSGO.
\end{lemma}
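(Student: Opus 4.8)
The plan is to convert a variance bound into a high-probability deviation bound via Markov's inequality, and then boost the success probability by independent repetitions combined with a geometric-median (or coordinate-wise median, or simply a ``closeness vote'') selection rule. First I would invoke \thm{unbiased-estimator}: for any target variance parameter $\tau^2$, \algo{unbiased-Q} applied to the stochastic gradient random variable $\g(\x)$ (which has $\Var[\g(\x)] \leq L^2$ since $\E\|\g(\x)\|^2 \leq L^2$ by the $L$-bounded assumption, hence the trace of the covariance is at most $L^2$) produces an unbiased estimate $\hat{\mu}$ of $\nabla f(\x)$ with $\E\|\hat{\mu} - \nabla f(\x)\|^2 \leq \tau^2$ using an expected $\tO(L\sqrt{d}/\tau)$ queries. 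Setting $\tau = \delta/\sqrt{12}$ (or any small constant multiple of $\delta$) and applying Markov's inequality gives $\Pr[\|\hat{\mu} - \nabla f(\x)\| > \delta/2] \leq \E\|\hat\mu - \nabla f(\x)\|^2 / (\delta/2)^2 \leq \tau^2 \cdot 4/\delta^2 \leq 1/3$. So a single run gives a $(\delta/2)$-accurate estimate with probability at least $2/3$.

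Next I would amplify: run \algo{unbiased-Q} independently $k = \Theta(\log(1/\xi))$ times to get estimates $\hat{\mu}_1,\ldots,\hat{\mu}_k$, each within $\delta/2$ of $\nabla f(\x)$ with probability $\geq 2/3$. By a Chernoff bound, with probability at least $1-\xi$ a strict majority of the $\hat{\mu}_i$ lie in the ball $\B_{\delta/2}(\nabla f(\x))$. On that event, pick any $\hat{\mu}_i$ that is within distance $\delta$ of more than $k/2$ of the other estimates; such an index exists (any ``good'' $\hat{\mu}_i$ works, since all good estimates are within $\delta$ of each other by the triangle inequality), and any index satisfying this majority-closeness test must be within $\delta/2 + \delta/2 = \delta$... more carefully, if $\hat\mu_i$ is within $\delta$ of some good $\hat\mu_j$, then $\|\hat\mu_i - \nabla f(\x)\| \leq \|\hat\mu_i - \hat\mu_j\| + \|\hat\mu_j - \nabla f(\x)\| \leq \delta + \delta/2$; to land exactly at $\delta$ one runs \algo{unbiased-Q} with $\tau$ a small enough constant times $\delta$ and uses a closeness radius of $\delta/2$ in the vote, so that a good estimate is within $\delta/4$ of $\nabla f(\x)$ and the selected one is within $\delta/2 + \delta/4 < \delta$. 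I would state the constants loosely and let the reader fill them in. The selected estimate is then a valid output of a $\delta$-AGO.

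For the query count: each of the $\mO(\log(1/\xi))$ calls to \algo{unbiased-Q} uses an expected $\tO(L\sqrt{d}/\delta)$ queries to the QSGO $O_\g$ (the $\tO$ absorbs the poly-logarithmic overhead and the constant factor from $\tau \asymp \delta$), so the total is an expected $\mO(L\sqrt{d}\log(1/\xi)/\delta)$ queries, matching the claim. (One subtlety: \algo{unbiased-Q}'s query count is a random variable with the stated expectation; since we only make $\mO(\log(1/\xi))$ calls, linearity of expectation gives the bound on the expected total, which is what the lemma asserts.)

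The main obstacle is a minor bookkeeping one rather than a conceptual one: verifying that the ``majority-closeness'' selection rule actually returns a point within $\delta$ of $\nabla f(\x)$ with the claimed probability, and tracking the constant-factor slack between the variance parameter $\tau$ fed to \algo{unbiased-Q}, the Markov threshold, and the voting radius so that everything composes to exactly a $\delta$-AGO. There is also the small point that we need the QSGO to be $L$-bounded (not merely to have bounded variance) to guarantee $\Var[\g(\x)] \leq L^2$ uniformly in $\x$, which is what \thm{unbiased-estimator} requires; this is part of the standing assumptions for a Lipschitz $f$ with an $L$-bounded oracle, so it is available.
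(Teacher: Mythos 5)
Your proposal is correct and follows essentially the same route as the paper's proof: make $\Theta(\log(1/\xi))$ independent calls to \algo{unbiased-Q} with variance a small constant times $\delta^2$, use Markov/Chebyshev to get constant per-call accuracy, a Chernoff/Hoeffding bound to ensure most estimates are good with probability $1-\xi$, and a pairwise closeness-vote plus triangle inequality to select a $\delta$-accurate output, with the query count following by linearity of expectation. The only differences are cosmetic constant-factor choices in the variance parameter and voting radius, which the paper also handles loosely.
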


\begin{proof}
We implement the $\delta$-AGO using the following procedure. First, we obtain $k=\log(1/\xi)+10$ independent unbiased estimates $\g_1,\ldots,\g_k$ of $\nabla f(\x)$ with variance $\delta^2/16$ by \algo{unbiased-Q}, using
\begin{align*}
k\cdot\frac{L\sqrt{d}}{\delta}=\mO(L\sqrt{d}\log (1/\xi)/\delta)
\end{align*}
queries in total to the QSGO $O_\g$, by \thm{unbiased-estimator}. Note that any individual sample $\g_i$ satisfies
\begin{align*}
\Pr[\|\g_i-\nabla f(\x)\|\geq\frac{\delta}{2}]\leq\frac{1}{4}.
\end{align*}
Hence, if we let $S$ denote the subset of $[k]$ where each $i \in S$
have $\ell_2$ distance to $\nabla f(\x)$ at most $\delta/2$, then by Hoeffding's inequality, $|S| \geq 2k/3$ with probability at least
\begin{align*}
1-\exp\left(-\frac{2(k/3-k/4)^2}{k}\right)\geq 1-\xi\,.
\end{align*}

Observe that for any sample $\g_i$ satisfying $\|\g_i-\g_j\|\leq\delta/2$ for at least $2k/3$ of $\g_j$'s, there must exist at least some $\g_{j'}\in S$ such that $\|\g_i-\g_{j'}\|\leq\delta/2$, and therefore $\|\g_i-\nabla f(\x)\|\leq\delta$ by triangle inequality. Consequently, we just need to find and output any such $\g_i$ as an $\delta$-AGO, which is guaranteed to exist given that any sample in $S$ satisfies this property. This can be done in expected $\mO(k)$ time by picking a random estimate and checking whether it is close to at least $2k/3$ of other estimates.
\end{proof}

Next, we establish that we can query an $\epsilon$-optimal point by applying the cutting plane method with the separation oracle being an $\mO(\epsilon/R)$-approximate gradient oracle, as provided by \lem{QVR-bounded-error}.

\begin{proposition}\label{prop:cutting-plane}
For any $0\leq\epsilon\leq LR$, with success probability at least $5/6$ we can obtain $\mathcal{T}=\mO(d\log(dLR/\epsilon))$ points
$\x_1,\ldots,\x_{\mathcal{T}}\in\B_R(\0)$ such that one of the $x_i$ is $\epsilon$-optimal using $\tO\left(d^{3/2}LR/\epsilon\right)$ queries to the QSGO $O_\g$ defined in \defn{Og}.
\end{proposition}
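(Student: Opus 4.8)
The plan is to run the cutting plane method of \lem{classical-cutting-plane} on the convex function $f$, using as the separation oracle the $\delta$-approximate gradient oracle from \lem{QVR-bounded-error} with $\delta = \Theta(\epsilon / R)$, and to argue that one of the queried points must be $\epsilon$-optimal. First I would set up the geometry: since $f$ is convex and $\|\x^*\| \le R$, there is a ball $\B_{r_0}(\x^*)$ with $r_0 = \Theta(\epsilon/L)$ on which $f$ is $\epsilon/2$-optimal (by $L$-Lipschitzness, $f(\x) \le f(\x^*) + L r_0 \le f^* + \epsilon/2$ for $\x$ in this ball). Intersecting with $\B_R(\0)$, this gives a target set $K$ that is sandwiched between a ball of radius $\Omega(\epsilon/L)$ and the ball of radius $R$, so \lem{classical-cutting-plane} applies with $R/r = O(LR/\epsilon)$, giving $\mathcal{T} = O(d \log(d L R/\epsilon))$ iterations.

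Next I would verify that an approximate gradient gives a valid separating hyperplane for this set $K$. At a query point $\x_i$ that is not $\epsilon$-optimal, let $\tilde{\g}_i$ be the output of the $\delta$-AGO, so $\|\tilde{\g}_i - \nabla f(\x_i)\| \le \delta$. For any $\x \in \B_{r_0}(\x^*)$, convexity gives $f(\x) \ge f(\x_i) + \langle \nabla f(\x_i), \x - \x_i\rangle$; since $f(\x_i) > f^* + \epsilon$ and $f(\x) \le f^* + \epsilon/2$, we get $\langle \nabla f(\x_i), \x - \x_i\rangle \le -\epsilon/2 < 0$, and moreover $\|\x - \x_i\| \le 2R$ so $|\langle \tilde\g_i - \nabla f(\x_i), \x - \x_i\rangle| \le 2R\delta$. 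Choosing $\delta = \epsilon/(8R)$ makes this error at most $\epsilon/4$, hence $\langle \tilde\g_i, \x - \x_i\rangle \le -\epsilon/4 < 0$, so the halfspace $H_i = \{\z : \tilde\g_i^\top \z \le \tilde\g_i^\top \x_i\}$ contains $K$. Thus $\tilde\g_i$ is a valid response to the separation oracle for \prob{feasibility} with target set $K$. The cutting plane method therefore either queries a point inside $K$ (which is $\epsilon$-optimal, even $\epsilon/2$-optimal) within $\mathcal{T}$ steps, or it queries an $\epsilon$-optimal point directly (when some $\x_i$ is declared $\epsilon$-optimal and we simply output $\vect{c} = \0$); either way, one of the $\mathcal{T}$ queried points is $\epsilon$-optimal.

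For the query complexity, each of the $\mathcal{T} = O(d\log(dLR/\epsilon))$ separation-oracle calls is implemented by \lem{QVR-bounded-error} with accuracy $\delta = \Theta(\epsilon/R)$ and failure probability $\xi = \Theta(1/\mathcal{T})$ (so that a union bound over all $\mathcal{T}$ calls gives total failure probability at most $1/6$), costing $O(L\sqrt{d}\log(1/\xi)/\delta) = \tO(L\sqrt d \cdot R/\epsilon)$ queries to $O_\g$ each. Multiplying, the total is $\mathcal{T} \cdot \tO(LR\sqrt d/\epsilon) = \tO(d^{3/2} LR/\epsilon)$ queries, absorbing all logarithmic factors into $\tO$. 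I would also note that $\delta = \epsilon/(8R) \le L/8 \le L$, so the precondition $\hat\sigma \le L$ implicit in \lem{quantum-multivariate-mean-estimator}/\thm{unbiased-estimator} is satisfied (using $\epsilon \le LR$).

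The main obstacle I anticipate is the interface between the approximate-gradient separation oracle and the cutting plane guarantee: \lem{classical-cutting-plane} is stated for an \emph{exact} separation oracle for a fixed set $K$, so I must be careful that the (randomized, approximate) oracle I construct genuinely returns, at every queried point, either a declaration of $\epsilon$-optimality or a halfspace that contains the specific well-conditioned set $K = \B_{r_0}(\x^*) \cap \B_R(\0)$ — this is exactly what the convexity-plus-triangle-inequality computation above establishes, conditioned on all $\delta$-AGO calls succeeding. A secondary subtlety is that the cutting plane method is adaptive, so the $\mathcal{T}$ query points are themselves random; handling this just requires that the $\delta$-AGO succeeds at each of the (at most $\mathcal{T}$) queries, which the union bound over $\xi = \Theta(1/\mathcal{T})$ handles, and one should remark that the number of queries made by the cutting plane method is deterministically bounded by $C d\log(dR/r)$ regardless of the oracle's (valid) responses.
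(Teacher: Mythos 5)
Your proposal is correct and follows essentially the same route as the paper: both take the ball of radius $\Theta(\epsilon/L)$ of near-optimal points around $\x^*$ as the target set, show via convexity and Cauchy--Schwarz that an $\Theta(\epsilon/R)$-accurate gradient (from \lem{QVR-bounded-error}) is a valid separation oracle at any non-$\epsilon$-optimal query, invoke \lem{classical-cutting-plane}, and union bound over the $\mathcal{T}$ oracle calls with per-call failure probability $\Theta(1/\mathcal{T})$. The only differences are constant-factor choices ($\delta=\epsilon/(8R)$ versus the paper's $\epsilon/(10R)$) and that the paper separates against the full $\epsilon/2$-sublevel set rather than just the ball, which is immaterial.
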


\begin{proof}
Define $K_{\epsilon/2}$ as the set of $\epsilon/2$-optimal points of the 
function $f$, and $K_{\epsilon}$ as the set of $\epsilon$-optimal points of $f$.
We know that $K_{\epsilon/2}$ contains a ball of radius at least $r_K = \epsilon/(2L)$ since for any $\x$ with $\|\x-\x^*\|\leq r_K$ we have
\begin{align}
f(\x)-f(\x^*)\leq L\|\x-\x^*\|\leq\frac{\epsilon}{2}.
\end{align}
We apply the cutting plane method, as described in \lem{QVR-bounded-error}, to query a point in $K_{\epsilon/2}$, which is a subset of the ball $\mathbb{B}_{2R}(\mathbf{0})$. To achieve this, we use an $\epsilon/(10R)$-approximate gradient oracle ($\epsilon/(10R)$-AGO) of $f$, implemented using \lem{QVR-bounded-error}, as the separation oracle for the cutting plane method. Throughout this process, we assume each query to the $\epsilon/(10R)$-AGO is successfully executed, and we will later discuss the error probability associated with this assumption. In this case, we show that any query outside of $K_{\epsilon}$ to the $\epsilon/(10R)$-AGO will be a valid separation oracle for $K_{\epsilon/2}$. In particular, if we ever queried the $\epsilon/(10R)$-AGO at any $\x\in\B_{2R}(\0)\setminus K_{\epsilon}$ with output being $\tilde{\g}$, for any $\y\in K_{\epsilon/2}$ we have
\begin{align*}
\left<\tilde{\g},\y-\x\right>
&\leq\left\<\nabla f(\x),\y-\x\right\>+\|\tilde{\g}-\nabla f(\x)\|\cdot\|\y-\x\|\\
&\leq f(\y)-f(\x)+\|\tilde{\g}-\nabla f(\x)\|\cdot\|\y-\x\|\\
&\leq -\frac{\epsilon}{2}+\frac{\epsilon}{10R}\cdot 4R<0,
\end{align*}
where the second inequality is due to the convexity of $f$, indicating that $\tilde{\g}$ is a valid separation oracle for the set $K_{\epsilon/2}$. Consequently, upon applying \lem{classical-cutting-plane}, we can deduce that after $Cd\log(dR/r_K)$ iterations, at least one of the queries must lie within $K_{\epsilon}$. 

To ensure an overall success probability of at least $5/6$, we employ the union bound, which necessitates that each query to the $\epsilon/(10R)$-AGO be implemented with a failure probability of no more than $(6Cd\log(R/r_K))^{-1}$, which by \lem{QVR-bounded-error} requires  
\begin{align*}
\mO\left(\frac{10LR\sqrt{d}}{\epsilon}\log (6Cd\log(dR/r_K))\right)=\tO\left(\frac{LR\sqrt{d}}{\epsilon}\right)
\end{align*}
queries to the QSGO $O_{\g}$, and the overall query complexity equals
\begin{align*}
\tO\left(\frac{LR\sqrt{d}}{\epsilon}\right)\cdot Cd\log(dR/r_K)=\tO\left(\frac{d^{3/2}LR}{\epsilon}\right).
\end{align*}
\end{proof}

\subsection{Stochastic approximate optimal finding among finite number of points}

After applying \prop{cutting-plane}, it is not clear which query $\x_i$ is an $\mO(\epsilon)$-optimal point. This difficulty arises because we lack access to the function value of $f$, which sets our problem apart from the feasibility problem discussed in~\cite{jiang2020improved}, where there is a clear indication when a query successfully lies within the feasible region. Consequently, 
we next focus on identifying the optimal solution within the finite set $\Gamma$ of points using access to the QSGO. We conceptualize this task as the \textit{approximately best point} problem, formulated as follows.

\begin{problem}[Approximately best point]\label{prob:finite-optimal-finding}
For a $L$-Lipschitz convex function $f\colon\R^d\to\R$ and $\mathcal{T}$ points $\x_1,\ldots,\x_{\mathcal{T}}\in\B_R(\0)$, find a convex combination $\hat{\x}\in\R^d$ of the points satisfying
\begin{align*}
f(\hat{\x})\leq\min_{t \in \mathcal{T}} f(\x_t)+\epsilon.
\end{align*}
\end{problem}

In this work, we develop an algorithm that solves \prob{finite-optimal-finding} by making pairwise comparisons in a hierarchical order, where each pairwise comparison is computed by running binary search on the segment along the segment between the two points. Note that in \lin{projected} of \algo{sls} the estimation step is carried out in a projected space along the vector $\hat{\mathbf{e}}$. This is motivated by the fact that we are specifically interested in the information of the gradient $\nabla f$ within this projected space, which effectively reduces to a one-dimensional variable. As a result, the mean estimation can be performed without introducing an additional factor of $\sqrt{d}$ in the query complexity as given in \thm{unbiased-estimator}.

\begin{algorithm2e}
	\caption{
 \texttt{StochasticLineSearch}$\left(\y_{l0},\y_{r0}, \epsilon'\right)$}
	\label{algo:sls}
	\LinesNumbered
	\DontPrintSemicolon
	\KwInput{Endpoints $\y_l,\y_r\in\B_R(\0)$, accuracy $\epsilon'$}
	\KwOutput{$\hat{\y}$ such that $f(\hat{\y})\leq\min_{\lambda\in[0,1]} f\big(\lambda\y_l+(1-\lambda)\y_r\big)+\epsilon'$.}
	$\y_l\leftarrow\y_{l0}$, $\y_r\leftarrow\y_{r0}$\;
	\lIf{$\y_l=\y_r$}{\Return $\y_{l}$}
	\Else{
	$\hat{\vect{e}}\leftarrow\frac{\y_r-\y_l}{\|\y_r-\y_l\|}$\; 
    \Repeat{$\|\y_r-\y_l\|\leq \epsilon'/L$}{
    $\y_m\leftarrow(\y_l+\y_r)/2$\;
        Obtain an estimate $\tilde{g}_{\hat{\e}}(\y_m)$ of $\nabla f(\y_m)^{\top}\hat{\vect{e}}$ with error at most $\epsilon'/(4R)$ \label{lin:projected}\; 
        \lIf{$|\tilde{g}_{\hat{\e}}(\y_m)|\leq\epsilon'/(4R)$}{
        \Return $\hat{\y}\leftarrow\y_m$
        }
        \Else{
        \leIf{$\tilde{g}_{\hat{\e}}(\y_m)>0$}{
        $\y_r\leftarrow\y_m$
        }{
        $\y_l\leftarrow\y_m$
        }
        }
    }
    \Return $\hat{\y}\leftarrow\y_l$
    }
\end{algorithm2e}

\begin{lemma}\label{lem:sls}
For any $\y_{l0},\y_{r0}\in\B_R(\0)$ and any $\epsilon'>0$, with success probability at least $1/(6\mathcal{T})$ \algo{sls} returns a point $\hat{\y}\in\R^d$ satisfying
\begin{align*}
f(\hat{\y})\leq\min_{\lambda\in[0,1]} f\big(\lambda\y_{l0}+(1-\lambda)\y_{r0}\big)+\epsilon'
\end{align*}
using $\tO\left(RL/\epsilon'\right)$ queries to an $L$-bounded QSGO $O_\g$ defined in \defn{Og}.
\end{lemma}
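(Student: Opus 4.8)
The plan is to reduce the line search to ordinary binary search on a one-dimensional convex function and to show that, conditioned on every directional-derivative estimate being accurate, the returned point is $\epsilon'$-optimal along the segment. First I would set up notation: the case $\y_{l0}=\y_{r0}$ is trivial, so assume $\y_{l0}\neq\y_{r0}$, set $\hat{\e}\coloneqq(\y_{r0}-\y_{l0})/\|\y_{r0}-\y_{l0}\|$ and $T_0\coloneqq\|\y_{r0}-\y_{l0}\|\leq 2R$ (using $\y_{l0},\y_{r0}\in\B_R(\0)$), and define the convex, $L$-Lipschitz function $h\colon[0,T_0]\to\R$, $h(t)\coloneqq f(\y_{l0}+t\hat{\e})$, so that $h'(t)=\langle\nabla f(\y_{l0}+t\hat{\e}),\hat{\e}\rangle$. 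Throughout the execution one has $\y_l=\y_{l0}+t_l\hat{\e}$ and $\y_r=\y_{l0}+t_r\hat{\e}$ for some $0\leq t_l<t_r\leq T_0$; let $t^*\in\arg\min_{[0,T_0]}h$ and $\y^*\coloneqq\y_{l0}+t^*\hat{\e}$, the minimizer of $f$ over the segment.

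For the estimation step in \lin{projected}, I would apply the argument of \lem{QVR-bounded-error} (equivalently \thm{unbiased-estimator}) to the one-dimensional random variable $\langle\g(\y_m),\hat{\e}\rangle$: it is unbiased for $h'(t_m)$ and has second moment at most $\E\|\g(\y_m)\|^2\leq L^2$, hence variance at most $L^2$ --- this is precisely why no extra $\sqrt{d}$ factor appears, as the remark preceding the lemma notes. So for any target failure probability $\xi$ one obtains $\tilde g_{\hat{\e}}(\y_m)$ with $|\tilde g_{\hat{\e}}(\y_m)-h'(t_m)|\leq\epsilon'/(4R)$ with probability $\geq 1-\xi$ using $\mO\!\left(LR\log(1/\xi)/\epsilon'\right)$ queries to $O_\g$; I would take $\xi$ inversely proportional to $\mathcal{T}\log(RL/\epsilon')$ so that $\log(1/\xi)$ is poly-logarithmic and absorbed by $\tO$.

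Conditioning on all estimates being accurate, I would analyze the two termination modes. If the loop returns $\y_m$ because $|\tilde g_{\hat{\e}}(\y_m)|\leq\epsilon'/(4R)$, then $|h'(t_m)|\leq\epsilon'/(2R)$, and by convexity of $f$ with $\y_m-\y^*=(t_m-t^*)\hat{\e}$ and $|t_m-t^*|\leq T_0\leq 2R$,
\[
f(\y_m)-f(\y^*)\leq\langle\nabla f(\y_m),\y_m-\y^*\rangle=(t_m-t^*)\,h'(t_m)\leq 2R\cdot\frac{\epsilon'}{2R}=\epsilon'.
\]
Otherwise $|\tilde g_{\hat{\e}}(\y_m)|>\epsilon'/(4R)$ and the estimate is within $\epsilon'/(4R)$ of $h'(t_m)$, so $\mathrm{sign}\,h'(t_m)=\mathrm{sign}\,\tilde g_{\hat{\e}}(\y_m)$ and $h'(t_m)\neq 0$; I would prove by induction the invariant $t^*\in[t_l,t_r]$: when $\tilde g_{\hat{\e}}(\y_m)>0$ we have $h'(t_m)>0$, so convexity forces $t^*\leq t_m$ and updating $t_r\leftarrow t_m$ preserves the invariant, and the case $\tilde g_{\hat{\e}}(\y_m)<0$ is symmetric. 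When the loop exits the \textbf{repeat} with $t_r-t_l=\|\y_r-\y_l\|\leq\epsilon'/L$ and returns $\y_l$, the invariant yields $|t_l-t^*|\leq\epsilon'/L$, so $f(\y_l)-f(\y^*)=h(t_l)-h(t^*)\leq L|t_l-t^*|\leq\epsilon'$ by Lipschitzness; translating from the arc-length parametrization back to $\lambda\in[0,1]$ gives the stated guarantee.

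Finally, since $t_r-t_l$ halves each iteration starting from $T_0\leq 2R$ and the loop runs while $t_r-t_l>\epsilon'/L$, there are $\mO(\log(RL/\epsilon'))$ iterations; a union bound over iterations with the above choice of $\xi$ bounds the total failure probability by $1/(6\mathcal{T})$, and summing $\tO(RL/\epsilon')$ queries per iteration over $\mO(\log(RL/\epsilon'))$ iterations gives $\tO(RL/\epsilon')$ queries in total. The main obstacle I anticipate is the constant bookkeeping --- ensuring that the estimation accuracy $\epsilon'/(4R)$, the return threshold $\epsilon'/(4R)$, and the terminal interval length $\epsilon'/L$ combine to exactly $\epsilon'$ rather than a constant factor worse --- together with verifying that the binary-search invariant is robust to the additive error in the gradient estimates, i.e., that each endpoint update can be certified from the sign of the true directional derivative; the remaining steps are routine.
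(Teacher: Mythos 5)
Your proposal is correct and follows essentially the same route as the paper's proof: per-iteration estimation of the projected directional derivative (a one-dimensional variable with variance at most $L^2$, hence no $\sqrt{d}$ factor), a union bound over the $\mO(\log(RL/\epsilon'))$ halving iterations, and the same two-case correctness analysis (convexity plus $\|\hat{\y}-\y^*\|\leq 2R$ when the estimate is small; the binary-search invariant that the segment minimizer stays in $[\y_l,\y_r]$ plus $L$-Lipschitzness when the interval shrinks below $\epsilon'/L$). The only cosmetic differences are your arc-length parametrization via $h(t)$ and that the paper additionally spells out how the quantum oracle for the projected variable $\g^\top\hat{\e}$ is prepared with one query to $O_\g$.
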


\begin{proof}
Observe that in each iteration the value $\|\y_r-\y_l\|$ is decreased by at least $1/2$. Hence, the total number of iterations is at most 
\begin{align*}
\log_2\left(\frac{\|\y_{r0}-\y_{l0}\|}{\epsilon'/L}\right)\leq \log_2\left(\frac{2RL}{\epsilon'}\right).
\end{align*}
Within each iteration, we need to estimate the value $\nabla f(\y_m)^\top\hat{\e}$, which is the component of $\nabla f(\y_m)$ along $\hat{\vect{e}}$. Note that this is essentially a univariate mean estimation problem, given that $\hat{\e}$ is fixed in this iteration we can define an 1-dimensional random variable $\nabla f(\y_m)^\top\hat{\e}$ whose variance is at most
\begin{align*}
\mathrm{Var}[\nabla f(\y_m)^\top\hat{\e}]\leq \E|\nabla f(\y_m)^\top\hat{\e}|^2\leq\E\|\nabla f(\y_m)\|^2\leq L^2.
\end{align*}
Then by \lem{QVR-bounded-error}, with success probability at least
\begin{align}
1-\left(6\mathcal{T}\log_2\left(\frac{2RL}{\epsilon'}\right)\right)^{-1},
\end{align}
an estimate of $\nabla f(\y_m)^\top\hat{\e}$ with error at most $\epsilon'/(4R)$ can be obtained using
\begin{align*}
\frac{L}{\epsilon'/(4R)}\cdot\log\left(6\mathcal{T}\log_2\left(\frac{2RL}{\epsilon'}\right)\right)=\tO\left(\frac{RL}{\epsilon'}\right)
\end{align*}
queries to the quantum stochastic gradient oracle $O_{\g}$, given that we can prepare the following quantum oracle
\begin{align*}
O_{\g^\top\hat{\e}}\ket{\x}\otimes\ket{0}\to\ket{\x}\otimes\int_{\v\in\R^d}\sqrt{p_{f,\x}(\v)\d \v}\ket{\v^\top\hat{\e}}\otimes\ket{\mathrm{garbage}(\v)},
\end{align*}
using one query to $O_\g$. Hence, \algo{sls} uses $\tO(RL/\epsilon')$
queries to the quantum stochastic gradient oracle $O_\g$, with failure probability at most
\begin{align}
\log_2\left(\frac{\|\y_{r0}-\y_{l0}\|}{\epsilon'/L}\right)\cdot \left(6\mathcal{T}\log_2\left(\frac{2RL}{\epsilon'}\right)\right)^{-1}\leq\frac{1}{6\mathcal{T}}
\end{align}
by union bound. Next, we show that the output $\hat{\y}$ of \algo{sls} has a relatively small function value as desired. We denote
\begin{align*}
\lambda^* \defeq \underset{\lambda\in[0,1]}{\arg\min}f(\lambda\y_{l0}+(1-\lambda)\y_{r0})
\quad\text{ and }\quad
\y^* \defeq \lambda^*\y_{l0}+(1-\lambda^*)\y_{r0}.
\end{align*}
If the algorithm terminates at a point $\hat{\y}$ satisfying $|\tilde{g}_{\hat{\e}}(\hat{\y})|\leq\epsilon'/(4R)$, by convexity and the Cauchy Schwarz inequality we have
\begin{align*}
f(\hat{\y})-f(\y^*)
\leq |\nabla f(\hat{\y})^\top\hat{\vect{e}}|\cdot\|\hat{\y}-\y^*\|\leq \frac{\epsilon'}{2R}\cdot 2R\leq\epsilon'.
\end{align*}
Otherwise, by induction we can demonstrate that in each iteration of the algorithm, $\y^*$ resides within the segment bounded by $\y_l$ and $\y_r$. Assume that this assertion holds true for the $t$-th iteration. If $\tilde{g}_{\hat{\e}}(\y_m)>\epsilon'/(4R)$, we have
\begin{align*}
\<\nabla f(\y_m),\y_l-\y_m\>
&=-\<\nabla f(\y_m),\hat{\e}\>\cdot\|\y_l-\y_m\|\\
&\leq -\left(\tilde{g}_{\hat{e}}(\y_m)-\frac{\epsilon'}{4R}\right)\|\y_l-\y_m\|< 0,
\end{align*}
indicating that the value of $f$ will decrease along the direction $\y_l-\y_m$. Hence, $\y^*$ lies in the segment between $\y_l$ and $\y_m$ of the $t$-th iteration, or equivalently, the segment between $\y_l$ and $\y_m$ of the $(t+1)$-th iteration, given that $f$ is convex. A symmetric argument applies in the case of $\tilde{g}_{\hat{\e}}(\y_m)<\epsilon'/(4R)$. Then, we have
\begin{align*}
\|\y_l-\y^*\|\leq\|\y_l-\y_r\|
\end{align*}
for every iteration, which leads to
\begin{align*}
\|\hat{\y}-\y^*\|\leq \frac{\epsilon'}{L}
\end{align*}
when the algorithm terminates. Hence,
\begin{align*}
f(\hat{\y})-f(\y^*)\leq L\cdot\|\hat{\y}-\y^*\|=\epsilon'
\,
\end{align*}
considering that $f$ is $L$-Lipschitz.
\end{proof}

Using \algo{sls} as a subroutine, we develop \algo{finite-optimal-finding} that solves \prob{finite-optimal-finding}.

\begin{algorithm2e}
	\caption{Stochastic Approximately best point among finite number of points
 }
	\label{algo:finite-optimal-finding}
	\LinesNumbered
	\DontPrintSemicolon
	\KwInput{A set of points $\{\x_1,\ldots,\x_{\mathcal{T}}\}\subset\B_R(\0)$ where $\mathcal{T}$ is a power of 2, accuracy $\epsilon$}
	\KwOutput{$\hat{\x}$ such that $f(\hat{\x})\leq\min_i f(\x_i)+\epsilon$.}
	$\y_{0,i}\leftarrow\x_i$ for all $i\in[\mathcal{T}]$\;
	\For{$\tau = 1, \ldots, \log_2 \mathcal{T}$}
	{
        \lFor{$j = 1,\ldots,\mathcal{T}/2^{\tau}$}
        {
            $\y_{\tau,j}=$\texttt{StochasticLineSearch}$\left(\y_{\tau-1,2j-1},\y_{\tau-1,2j},\epsilon/\log_2\mathcal{T}\right)$\label{lin:sls}
        }
	}
    \Return $\y_{\log\mathcal{T},1}$\;
\end{algorithm2e}

\begin{proposition}\label{prop:AOF-query-complexity}
For any accuracy parameter $\epsilon>0$, with success probability at least $5/6$ \algo{finite-optimal-finding} solves \prob{finite-optimal-finding} using $\tO\left(RL\mathcal{T}/\epsilon\right)$ queries to an $L$-bounded QSGO $O_\g$ defined in \defn{Og}.
\end{proposition}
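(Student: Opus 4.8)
The plan is to view \algo{finite-optimal-finding} as a single-elimination tournament on a balanced binary tree with the $\mathcal{T}$ points $\x_1,\dots,\x_\mathcal{T}$ as leaves (recall that the input requires $\mathcal{T}$ to be a power of $2$), and to establish three things separately: the total query count, the overall success probability, and the approximation guarantee conditioned on every call to \algo{sls} behaving as promised by \lem{sls}.

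First I would bound the query complexity. At level $\tau$ the algorithm makes $\mathcal{T}/2^{\tau}$ calls to \texttt{StochasticLineSearch}, so the total number of calls is $\sum_{\tau=1}^{\log_2\mathcal{T}}\mathcal{T}/2^{\tau}=\mathcal{T}-1<\mathcal{T}$. Each call is run with accuracy $\epsilon'=\epsilon/\log_2\mathcal{T}$, so by \lem{sls} it costs $\tO(RL/\epsilon')=\tO(RL\log_2\mathcal{T}/\epsilon)=\tO(RL/\epsilon)$ queries to $O_\g$, the logarithmic factor being absorbed into $\tO$. Multiplying yields the claimed $\tO(RL\mathcal{T}/\epsilon)$ bound. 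For the success probability I would union-bound over these fewer-than-$\mathcal{T}$ calls: \lem{sls} says each fails with probability at most $1/(6\mathcal{T})$, so with probability at least $5/6$ every invocation returns a point satisfying its stated one-dimensional guarantee; condition on this event henceforth.

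It remains to argue correctness on this event. Let $i^\star\in\arg\min_i f(\x_i)$ and follow the path from the leaf $\x_{i^\star}=\y_{0,i^\star}$ up to the root, writing $\y_{\tau,j_\tau}$ for its level-$\tau$ node, so that $\y_{\log_2\mathcal{T},1}$ is the returned point. At each level $\tau$, the node $\y_{\tau,j_\tau}$ is the output of \texttt{StochasticLineSearch} applied to the pair $(\y_{\tau-1,2j_\tau-1},\y_{\tau-1,2j_\tau})$, one of whose endpoints is exactly $\y_{\tau-1,j_{\tau-1}}$; since the guarantee of \lem{sls} is a minimum over all $\lambda\in[0,1]$ (hence in particular over that endpoint), we get $f(\y_{\tau,j_\tau})\le f(\y_{\tau-1,j_{\tau-1}})+\epsilon'$. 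Iterating this inequality down the $\log_2\mathcal{T}$ levels of the path gives $f(\y_{\log_2\mathcal{T},1})\le f(\x_{i^\star})+(\log_2\mathcal{T})\,\epsilon'=\min_i f(\x_i)+\epsilon$. Finally, each output of \texttt{StochasticLineSearch} lies on the segment between its two inputs, so an easy induction on $\tau$ shows every $\y_{\tau,j}$ is a convex combination of the $\x_i$ in its subtree, and hence the returned point is a convex combination of $\x_1,\dots,\x_\mathcal{T}$, as required by \prob{finite-optimal-finding}.

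The argument is essentially bookkeeping; the only points needing care are that the per-call accuracy $\epsilon'=\epsilon/\log_2\mathcal{T}$ is chosen precisely so the errors telescope to $\epsilon$ over the tree depth, and that the failure probability $1/(6\mathcal{T})$ in \lem{sls} is calibrated so that the union bound over the $\mathcal{T}-1$ calls leaves total failure probability at most $1/6$. I do not anticipate a genuinely hard step.
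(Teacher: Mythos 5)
Your proposal is correct and follows essentially the same argument as the paper: a union bound over the $\mathcal{T}-1$ calls to \texttt{StochasticLineSearch} at accuracy $\epsilon'=\epsilon/\log_2\mathcal{T}$, the per-node guarantee that each output is within $\epsilon'$ of the better of its two children, and telescoping along a root-to-leaf path (the paper telescopes for every leaf, you only for the best one — same idea). Your explicit check that the returned point is a convex combination of the $\x_i$ is a small detail the paper leaves implicit, but otherwise there is no substantive difference.
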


\begin{proof}
We start by demonstrating that without loss of generality we can assume $\mathcal{T}$ to be a power of 2. If $\mathcal{T}$ is not already a power of 2, we can simply augment the set with at most $\mathcal{T}$ points, each being $\x_1$, without changing the algorithm's output.

Next, we show that \algo{finite-optimal-finding} makes $\mathcal{T}-1$ calls to the $\sls(\cdot)$ subroutine (\algo{sls}) with an accuracy of $\epsilon'=\epsilon/\log_2\mathcal{T}$ to solve the \prob{finite-optimal-finding} problem. The algorithm exhibits a hierarchical structure, as illustrated in \fig{finite-optimal-finding}, where each non-leaf node invokes the $\sls(\cdot)$ subroutine once. Consequently, the total number of calls is $\mathcal{T}-1$.

\begin{figure}[htbp]
    \centering
    \tikzset{global scale/.style={
    scale=#1,
    every node/.append style={scale=#1}
  }
}
    \begin{tikzpicture}[level distance=2cm,
  level 1/.style={sibling distance=8cm},
  level 2/.style={sibling distance=4cm},
  level 3/.style={sibling distance=2cm},
  nodes={draw, circle}, minimum size=4em,global scale=0.85]
  
  \node {$\y_{t,1}$}
    child {node {$\y_{t-1,1}$}
      child {node(1) {$\y_{1,1}$}
        child[solid]{ node(3) {$\x_1$}
        edge from parent node[left,xshift=-2.5,draw=none]{}}
        child[solid]{ node {$\x_{\frac{\mathcal{T}}{4}}$}
        edge from parent node[right,xshift=2.5,draw=none]{}}
      edge from parent [dashed] node[left,xshift=-2.5,draw=none]{}
      }
      child {node {$\y_{1,\frac{\mathcal{T}}{4}}$} 
        child[solid]{ node {$\x_{\frac{\mathcal{T}}{2}-1}$}
        edge from parent node[left,xshift=-2.5,draw=none]{}}
        child[solid]{ node {$\x_{\frac{\mathcal{T}}{2}}$}
        edge from parent node[right,xshift=2.5,draw=none]{}}
      edge from parent  [dashed]node[right,xshift=2.5,draw=none]{}}
      edge from parent node[left,xshift=-2.5,draw=none]{}
    }
    child {node {$\y_{t-1,2}$}
    child {node {$\y_{1,\frac{\mathcal{T}}{4}+1}$} 
        child[solid]{ node {$\x_{\frac{\mathcal{T}}{2}+1}$}
        edge from parent node[left,xshift=-2.5,draw=none]{}}
        child[solid]{ node {$\x_{\frac{\mathcal{T}}{2}+2}$}
        edge from parent node[right,xshift=2.5,draw=none]{}}
    edge from parent [dashed] node[left,xshift=-2.5,draw=none]{}}
      child {node(2) {$\y_{1,\frac{\mathcal{T}}{2}}$} 
        child[solid]{ node {$\x_{\mathcal{T}-1}$}
        edge from parent node[left,xshift=-2.5,draw=none]{}}
        child[solid]{ node(4) {$\x_{\mathcal{T}}$}
        edge from parent   node[right,xshift=2.5,draw=none]{}}
      edge from parent  [dashed]}
      edge from parent
    };
\end{tikzpicture}

    \caption{The hierarchical structure of \algo{finite-optimal-finding}.}
    \label{fig:finite-optimal-finding}
\end{figure}

By \lem{sls}, each call to $\sls(\cdot)$ is executed successfully with probability at least $1-1/(6\mathcal{T})$. Then by union bound, we can deduce that the likelihood of all calls to $\sls(\cdot)$ being successful is at least $5/6$. In this case, for any node $\y_{\tau,j}$ in the tree with $\tau\in[\log_2\mathcal{T}]$ and $j\in[\mathcal{T}/2^\tau]$, by convexity and the guarantee of \lem{sls} we have 
\begin{align*}
f(\y_{\tau,j})&\leq\min_{\lambda\in[0,1]}f\big(\lambda\y_{\tau-1,2j-1}+(1-\lambda)\y_{\tau-1,2j}\big)+\epsilon/\log_2\mathcal{T}\\
&\leq \min\{f(\y_{\tau-1,2j-1}),f(\y_{\tau-1,2j})\}+\epsilon/\log_2\mathcal{T}.
\end{align*}
Therefore for any leaf node $\x_j$ with $j\in[\mathcal{T}]$, by summing over the path between the root $\y_{t,1}$ and $\x_j$ we have
\begin{align*}
f(\y_{t,1})\leq f(\x_j)+t\cdot\frac{\epsilon}{\log_2\mathcal{T}} = f(\x_j)+\epsilon,
\end{align*}
which leads to 
\begin{align*}
f(\y_{t,1})\leq\min_{j\in[\mathcal{T}]} f(\x_j)+\epsilon,
\end{align*}
indicating $\y_{t,1}$ is a valid solution of \prob{finite-optimal-finding}. Note that there are in total $\mathcal{T}-1$ non-leaf nodes in \fig{finite-optimal-finding}, and the value of each non-leaf node is computed by one call to the subroutine $\sls(\cdot)$. Hence, the total number of calls to $\sls(\cdot)$ equals $\mathcal{T}-1$, and the overall failure probability of \algo{finite-optimal-finding} is at most
\begin{align}
(\mathcal{T}-1)\cdot\frac{1}{6\mathcal{T}}\leq\frac{1}{6}
\end{align}
by union bound, as the failure probability of \algo{sls} is at most $1/(6\mathcal{T})$. Since each call to \algo{sls} takes 
\begin{align*}
\tO\left(\frac{RL}{\epsilon'}\right)=\tO\left(\frac{RL}{\epsilon}\right)
\end{align*}
queries to the quantum stochastic gradient oracle $O_\g$ defined in \eqn{quantum-SG-oracle}, the total number of queries to the quantum stochastic gradient oracle $O_\g$ is then $\tO\left(RL\mathcal{T}/\epsilon\right)$.
\end{proof}

\subsection{Query complexity of quantum stochastic cutting plane method}

Next, we present the main result of this section, which describes the query complexity of solving \prob{SCO} using quantum stochastic cutting plane method.

\begin{corollary}[Formal version of \thm{SCO-informal}, Part 2]\label{cor:QSCPM}
With success probability at least $2/3$, \prob{SCO} can be solved using an expected $\tO\left(d^{3/2}LR/\epsilon\right)$
queries.
\end{corollary}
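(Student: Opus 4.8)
The plan is to chain together the two propositions of this section. First I would run the quantum stochastic cutting plane routine of \prop{cutting-plane} with target accuracy $\epsilon/2$ (replacing $\epsilon$ by $\epsilon/2$ throughout); we may assume $\epsilon \le LR$, since otherwise $\0$ is already $\epsilon$-optimal because $f(\0) - f^* \le L\|\x^*\| \le LR < \epsilon$ and we can simply output $\0$. This produces, with probability at least $5/6$, a set of $\mathcal{T} = \mO(d\log(dLR/\epsilon))$ points $\x_1,\dots,\x_{\mathcal{T}} \in \B_R(\0)$ with $\min_{t} f(\x_t) \le f^* + \epsilon/2$, at an expected cost of $\tO(d^{3/2}LR/\epsilon)$ queries to $O_\g$.

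Next I would invoke \algo{finite-optimal-finding} on these $\mathcal{T}$ points with accuracy $\epsilon/2$; by \prop{AOF-query-complexity} this returns, with probability at least $5/6$, a convex combination $\hat\x$ of the $\x_t$ satisfying $f(\hat\x) \le \min_t f(\x_t) + \epsilon/2$, using $\tO(RL\mathcal{T}/\epsilon) = \tO(dLR/\epsilon)$ queries (the $\log$ factors hidden in $\mathcal{T}$ are absorbed by $\tO$). A union bound over the two failure events leaves probability at least $1 - \tfrac16 - \tfrac16 = \tfrac23$ on which $f(\hat\x) \le \min_t f(\x_t) + \epsilon/2 \le f^* + \epsilon$, so $\hat\x$ is $\epsilon$-optimal; and the total query count is $\tO(d^{3/2}LR/\epsilon) + \tO(dLR/\epsilon) = \tO(d^{3/2}LR/\epsilon)$ in expectation, the expectation tracing back to the geometric multilevel Monte Carlo sampling inside \algo{unbiased-Q} (\thm{unbiased-estimator}) that underlies both propositions.

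The one point requiring a little care is converting this $2/3$-probability guarantee into the \emph{expected} $\epsilon$-optimality that \prob{SCO} literally demands. Here I would use that the returned $\hat\x$ is always a convex combination of points in $\B_R(\0)$, hence $\hat\x \in \B_R(\0)$, so even on the failure event $f(\hat\x) - f^* \le L\|\hat\x - \x^*\| \le 2LR$. It therefore suffices to drive the overall failure probability down to $O(\epsilon/(LR))$; inspecting the proofs of \prop{cutting-plane} and \prop{AOF-query-complexity}, this is achieved by shrinking the per-query failure parameters passed to \lem{QVR-bounded-error} and \lem{sls}, which costs only extra factors polylogarithmic in $d$, $L$, $R$, and $1/\epsilon$ and hence does not change the $\tO$ bound. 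With that choice $\E[f(\hat\x) - f^*] \le \epsilon/2 + O(\epsilon/(LR))\cdot 2LR \le \epsilon$ after adjusting constants. I do not expect a serious obstacle here: the essential work is already carried out in \prop{cutting-plane} and \prop{AOF-query-complexity}, and what remains is composing the two accuracies, the union bound, and summing the query budgets.
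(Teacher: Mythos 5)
Your proposal is correct and follows essentially the same route as the paper: run the cutting plane method of \prop{cutting-plane} to get $\mathcal{T}=\tO(d\log(dLR/\epsilon))$ candidate points, then apply \algo{finite-optimal-finding} via \prop{AOF-query-complexity}, take a union bound over the two $1/6$ failure events, and sum the query budgets to $\tO(d^{3/2}LR/\epsilon)$. Your final paragraph on boosting the failure probability to convert the guarantee into unconditional expected $\epsilon$-optimality is a valid extra observation but is not needed for the corollary as stated, since it explicitly asserts only success probability at least $2/3$, which is exactly what the paper's proof establishes.
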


\begin{proof}
We first run cutting plane method with the separation oracle being an $\frac{\epsilon}{10R}$-approximate gradient of $f$ implemented by \algo{unbiased-Q}, which by \prop{cutting-plane} outputs a set $\Gamma$ of $\mathcal{T}=\tO(d\log(L/\epsilon))$ points containing at least one $\mO(\epsilon)$-optimal point of $f$ using $\tO\left(d^{3/2}LR/\epsilon\right)$ queries to $O_\g$ with success probability at least $5/6$. Then by running \algo{finite-optimal-finding} on $\Gamma$, with success probability at least $5/6$ we can find a point $\hat{\x}\in\R^d$ satisfying
\begin{align*}
f(\hat{\x})\leq\min_{\x\in\Gamma}f(\x)+\epsilon\leq f(\x^*)+\mO(\epsilon),
\end{align*}
which takes $\tO(RL\mathcal{T}/\epsilon)=\tO(dRL/\epsilon)$ queries to $O_{\g}$ by \prop{AOF-query-complexity}. Hence, the overall number of queries to $O_\g$ equals $\tO(d^{3/2}LR/\epsilon)$, and the overall success probability is at least $2/3$.
\end{proof}


\section{Quantum stochastic non-convex optimization}\label{sec:nonconvex}

In this section, we present our quantum algorithms for \prob{nonconvex} in the bounded-variance setting (\sec{bounded-variance-algorithm}) and the mean-squared smoothness setting (\sec{mean-squared-smoothness-algorithm}), respectively, using our quantum variance reduction technique. 

\subsection{Algorithm for the bounded-variance setting}\label{sec:bounded-variance-algorithm}

To solve \prob{nonconvex} in the bounded variance setting, we leverage the randomized SGD method introduced in~\cite{ghadimi2013stochastic}, which is a variant of SGD where the number of iterations is randomized. Our algorithm is a specialization of the randomized stochastic gradient algorithm, where we replace the classical variance reduction step by quantum variance reduction (\algo{unbiased-Q}). The query complexity of our quantum algorithm is given in the following theorem.

\begin{algorithm2e}
	\caption{Quantum randomized SGD (Q-SGD)}
	\label{algo:Q-SGD}
	\LinesNumbered
	\DontPrintSemicolon
    \KwInput{Function $f\colon\R^d\to\R$, precision $\epsilon$, variance $\sigma$, smoothness $\ell$}
    \KwParameter{$\hat{\sigma}=\epsilon/3$, total iteration budget $\mathcal{T}=12\Delta\ell\epsilon^{-2}$}
    \KwOutput{$\epsilon$-critical point of $f$}
    Uniformly randomly select $N$ from $1,\ldots, \mathcal{T}$\;
    Set $\x_0\leftarrow \0$\;
    \For{$t=0,1,2,\ldots,N-1$}{
        Call \algo{unbiased-Q} for an unbiased estimate $\tilde{\g}_t$ of $\nabla f(\x_t)$ with variance at most $\hat{\sigma}^2$\;
        $\x_{t+1}\leftarrow\x_t-\tilde{\g}_t/\ell$
    }
    \Return $\x_N$
\end{algorithm2e}

\begin{theorem}[Formal version of \thm{Q-SPIDER-informal}, bounded variance setting]\label{thm:Q-SGD} For any $\epsilon>0$, \algo{Q-SGD} solves \prob{nonconvex} in the bounded variance setting using an expected $\tO(\Delta\ell\sigma\sqrt{d}\epsilon^{-3})$ queries.
\end{theorem}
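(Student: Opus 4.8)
The plan is to run the classical randomized‑SGD analysis of~\cite{ghadimi2013stochastic} essentially verbatim, with the single change that the stochastic gradient at each step is supplied by \algo{unbiased-Q}. What makes this a drop‑in substitution is that \thm{unbiased-estimator} guarantees $\tilde{\g}_t$ is an \emph{unbiased} estimate of $\nabla f(\x_t)$ with variance at most the requested $\hat{\sigma}^2$ and with success probability $1$; hence $\tilde{\g}_t$ behaves exactly like an idealized stochastic gradient of variance $\hat{\sigma}^2$, and — unlike the cutting‑plane algorithm of \sec{QSCPM} — no union bound over the iterations is required.

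\textbf{Convergence.} Fix an index $t$ and condition on $\x_t$. Using $\ell$-smoothness of $f$ together with the update $\x_{t+1}=\x_t-\tilde{\g}_t/\ell$,
\[
f(\x_{t+1}) \le f(\x_t) - \tfrac1\ell\<\nabla f(\x_t),\tilde{\g}_t\> + \tfrac1{2\ell}\|\tilde{\g}_t\|^2 .
\]
Taking the expectation over the internal randomness of \algo{unbiased-Q} and invoking $\E[\tilde{\g}_t\mid\x_t]=\nabla f(\x_t)$ and $\E[\|\tilde{\g}_t\|^2\mid\x_t] \le \|\nabla f(\x_t)\|^2+\hat{\sigma}^2$ (bias–variance decomposition) yields
\[
\E[f(\x_{t+1})\mid\x_t] \le f(\x_t) - \tfrac1{2\ell}\|\nabla f(\x_t)\|^2 + \tfrac{\hat{\sigma}^2}{2\ell}.
\]
Rearranging, taking total expectations, telescoping over the steps, and using $f(\0)-\inf_\x f(\x)\le\Delta$ gives $\frac1{\mathcal{T}}\sum_t\E\|\nabla f(\x_t)\|^2 \le \frac{2\ell\Delta}{\mathcal{T}}+\hat{\sigma}^2$ (up to a negligible $1/\mathcal{T}$ endpoint slack). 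Since $N$ is chosen uniformly at random, $\E\|\nabla f(\x_N)\|^2$ equals this average; substituting $\mathcal{T}=12\Delta\ell\epsilon^{-2}$ and $\hat{\sigma}=\epsilon/3$ bounds it by $\tfrac{\epsilon^2}{6}+\tfrac{\epsilon^2}{9}<\tfrac{\epsilon^2}{3}$, and Jensen's inequality then gives $\E\|\nabla f(\x_N)\|\le\epsilon$, i.e.\ $\x_N$ is expected $\epsilon$-critical.

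\textbf{Query complexity.} Restricting the QSGO $O_\g$ (\defn{Og}) to a fixed query point $\x_t$ is precisely a quantum sampling oracle (\defn{OX}) for the $d$-dimensional random variable $\g(\x_t)$, whose variance is at most $\sigma^2$. Hence each step's call to \algo{unbiased-Q} with target variance $\hat{\sigma}^2=\epsilon^2/9$ is an instance of \prob{variance-reduction} with $\sigma$ playing the role of $L$, so by \thm{unbiased-estimator} it uses an expected $\tO(\sigma\sqrt d/\hat{\sigma})=\tO(\sigma\sqrt d\,\epsilon^{-1})$ queries. Since $N\le\mathcal{T}$ is drawn up front, the total expected number of queries is $O(\mathcal{T})\cdot\tO(\sigma\sqrt d\,\epsilon^{-1})=\tO(\Delta\ell\sigma\sqrt d\,\epsilon^{-3})$.

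I do not anticipate a genuine obstacle; this is routine adaptation. The only points needing a little care are: (i) verifying that the ``slice'' of $O_\g$ at a fixed $\x_t$ legitimately meets the hypotheses of \thm{unbiased-estimator} with variance bound $\sigma^2$ — note that no Lipschitz or norm bound on $\nabla f$ is available in the non-convex bounded-variance setting, so one must use the variance rather than a magnitude bound; (ii) the expected‑query bookkeeping over the random per-call cost and the random iteration count $N$ (handled by independence of $N$ or the tower property); and (iii) the boundary regime $\epsilon\ge 3\sigma$, where $\hat{\sigma}\le\sigma$ may fail — but there a single classical sample of $\g(\x_t)$ already has variance $\le\sigma^2\le\hat{\sigma}^2$, so one query per step suffices and the stated bound only improves.
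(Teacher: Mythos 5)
Your proposal is correct and follows essentially the same route as the paper: the paper simply cites \cite[Theorem 1]{ghadimi2013stochastic} to get $\E\|\nabla f(\x_N)\|^2\le 2\Delta\ell/\mathcal{T}+\hat{\sigma}^2$ and then multiplies the iteration budget $\mathcal{T}$ by the $\tO(\sigma\sqrt{d}/\hat{\sigma})$ per-call cost of \algo{unbiased-Q}, exactly as you do. Your only deviations---re-deriving that classical bound from the smoothness descent lemma rather than citing it, and the extra care about the oracle slice at a fixed $\x_t$, Jensen's inequality, and the $\epsilon\ge 3\sigma$ edge case---are sound refinements, not a different argument.
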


The proof of \thm{Q-SGD} is based on the following result from~\cite{ghadimi2013stochastic}.

\begin{theorem}[{\cite[Theorem 1]{ghadimi2013stochastic}}]\label{thm:SGD-convergence}
Consider the bounded variance setting in \prob{nonconvex}, the output $\x_{\mathrm{out}}$ of \algo{Q-SGD} is an expected $\frac{2\Delta\ell}{\mathcal{T}}+\hat{\sigma}^2$-critical point.
\end{theorem}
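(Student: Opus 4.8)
The plan is to reproduce the classical randomized-SGD analysis of Ghadimi--Lan, specialized to the constant step size $1/\ell$ and to the gradient estimates supplied by \algo{unbiased-Q}. Let $\mathcal{F}_t$ denote the history of the algorithm up to and including the iterate $\x_t$ but before the call to \algo{unbiased-Q} at iteration $t$. Since each such call uses fresh randomness, \prob{variance-reduction} and \thm{unbiased-estimator} guarantee that $\tilde{\g}_t$ is \emph{conditionally} unbiased, $\E[\tilde{\g}_t\mid\mathcal{F}_t]=\nabla f(\x_t)$, with conditional mean-squared error at most $\hat{\sigma}^2$; combined with the bias--variance identity $\E\norm{Y}^2=\norm{\E Y}^2+\E\norm{Y-\E Y}^2$, this yields
\[
\E[\langle\nabla f(\x_t),\tilde{\g}_t\rangle\mid\mathcal{F}_t]=\norm{\nabla f(\x_t)}^2,\qquad
\E[\norm{\tilde{\g}_t}^2\mid\mathcal{F}_t]\le\norm{\nabla f(\x_t)}^2+\hat{\sigma}^2 .
\]

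Next I would substitute the update $\x_{t+1}=\x_t-\tilde{\g}_t/\ell$ into the descent lemma for $\ell$-smooth functions, $f(\x_{t+1})\le f(\x_t)+\langle\nabla f(\x_t),\x_{t+1}-\x_t\rangle+\tfrac{\ell}{2}\norm{\x_{t+1}-\x_t}^2$, which gives $f(\x_{t+1})\le f(\x_t)-\tfrac{1}{\ell}\langle\nabla f(\x_t),\tilde{\g}_t\rangle+\tfrac{1}{2\ell}\norm{\tilde{\g}_t}^2$, and then take $\E[\cdot\mid\mathcal{F}_t]$. By the two displayed identities the first-order term becomes exactly $-\tfrac{1}{\ell}\norm{\nabla f(\x_t)}^2$ and the second-order term is at most $\tfrac{1}{2\ell}(\norm{\nabla f(\x_t)}^2+\hat{\sigma}^2)$, so the gradient-norm contributions combine to $-\tfrac{1}{2\ell}\norm{\nabla f(\x_t)}^2$ and I obtain the one-step inequality
\[
\E[f(\x_{t+1})\mid\mathcal{F}_t]\le f(\x_t)-\frac{1}{2\ell}\norm{\nabla f(\x_t)}^2+\frac{\hat{\sigma}^2}{2\ell},
\qquad\text{equivalently}\qquad
\norm{\nabla f(\x_t)}^2\le 2\ell\big(f(\x_t)-\E[f(\x_{t+1})\mid\mathcal{F}_t]\big)+\hat{\sigma}^2 .
\]

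I would then take full expectations, sum the right-hand inequality over $t=0,\dots,\mathcal{T}-1$, telescope, and use $\E f(\x_0)=f(\0)$, $\E f(\x_{\mathcal{T}})\ge f^*$, and the hypothesis $f(\0)-f^*\le\Delta$ of \prob{nonconvex} to get $\sum_{t=0}^{\mathcal{T}-1}\E\norm{\nabla f(\x_t)}^2\le 2\ell\Delta+\mathcal{T}\hat{\sigma}^2$. Finally, because $N$ is drawn uniformly from $\{1,\dots,\mathcal{T}\}$ independently of the algorithm's internal randomness---and the trajectory up to any fixed number of oracle queries does not depend on $N$---the returned iterate is (up to the one-index shift discussed below) a uniformly random one of the $\mathcal{T}$ iterates at which a stochastic gradient was queried, whence $\E\norm{\nabla f(\x_{\mathrm{out}})}^2=\tfrac{1}{\mathcal{T}}\sum_{t=0}^{\mathcal{T}-1}\E\norm{\nabla f(\x_t)}^2\le\tfrac{2\Delta\ell}{\mathcal{T}}+\hat{\sigma}^2$, which is the asserted bound read in the mean-squared sense; the bound on $\E\norm{\nabla f(\x_{\mathrm{out}})}$ then follows by Jensen's inequality.

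The argument is essentially routine, so the ``hard part'' is really bookkeeping. The two points I would be careful about are: (i) invoking the guarantee of \algo{unbiased-Q} \emph{conditionally} on the current iterate and correctly converting its mean-squared-error bound into the second-moment bound $\E[\norm{\tilde{\g}_t}^2\mid\mathcal{F}_t]\le\norm{\nabla f(\x_t)}^2+\hat{\sigma}^2$---it is important that nothing beyond \thm{unbiased-estimator} is used, since \algo{unbiased-Q} returns a genuinely unbiased estimate; and (ii) the handling of the random stopping index $N$---verifying the independence needed to express the output's gradient as a uniform average of the per-iteration gradients, and reconciling the algorithm's ``\textbf{Return} $\x_N$'' with the telescoped sum over $\x_0,\dots,\x_{\mathcal{T}-1}$ (a harmless one-step reindexing up to lower-order terms).
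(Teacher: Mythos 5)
The paper does not actually prove this statement: it is imported verbatim as \cite[Theorem 1]{ghadimi2013stochastic} and used as a black box in the proof of \thm{Q-SGD}, so there is no internal proof to compare against. Your proposal reconstructs the standard Ghadimi--Lan argument correctly: conditional unbiasedness and the variance bound from \thm{unbiased-estimator} give $\E[\|\tilde{\g}_t\|^2\mid\mathcal{F}_t]\le\|\nabla f(\x_t)\|^2+\hat{\sigma}^2$, the descent lemma with step $1/\ell$ yields the one-step inequality, and telescoping plus the independent uniform index gives the claimed bound; this is exactly the mechanism behind the cited theorem, and your observation that only the interface of \algo{unbiased-Q} (unbiasedness and variance $\hat{\sigma}^2$) is needed is the right one. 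Two small caveats, both of which you already flagged and which are harmless: (i) \algo{Q-SGD} returns $\x_N$ with $N$ uniform on $\{1,\ldots,\mathcal{T}\}$ while your telescoped sum runs over $t=0,\ldots,\mathcal{T}-1$, so a literal bookkeeping gives $\frac{1}{\mathcal{T}}\sum_{t=1}^{\mathcal{T}}\E\|\nabla f(\x_t)\|^2\le\frac{2\Delta\ell}{\mathcal{T}}+\bigl(1+\frac{1}{\mathcal{T}}\bigr)\hat{\sigma}^2$, i.e.\ an extra $\hat{\sigma}^2/\mathcal{T}$ that is absorbed by the slack in the parameter choices of \thm{Q-SGD}; and (ii) the theorem's phrase ``expected $\frac{2\Delta\ell}{\mathcal{T}}+\hat{\sigma}^2$-critical'' is, under the paper's definition, a bound on $\E\|\nabla f(\x_{\mathrm{out}})\|$, whereas what you (and Ghadimi--Lan, and the paper's own proof of \thm{Q-SGD}) establish is the mean-squared bound $\E\|\nabla f(\x_{\mathrm{out}})\|^2\le\frac{2\Delta\ell}{\mathcal{T}}+\hat{\sigma}^2$; Jensen then gives $\E\|\nabla f(\x_{\mathrm{out}})\|\le\sqrt{2\Delta\ell/\mathcal{T}+\hat{\sigma}^2}$, which is how the result is actually used downstream, so your reading is the correct one.
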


\begin{proof}[Proof of \thm{Q-SGD}]
By \thm{SGD-convergence} and the associated parameter setting, the output $\x_{\mathrm{out}}$ of \algo{Q-SGD} satisfies
\begin{align*}
\E\|\nabla f(\x_{\mathrm{out}})\|^2\leq \frac{2\Delta\ell}{\mathcal{T}}+\hat{\sigma}^2\leq\epsilon^2,
\end{align*}
implying that \algo{Q-SGD} solves \prob{nonconvex} with probability at least $2/3$ under the bounded variance setting, and the remaining thing would be to analyze the number of queries it makes to $O_\g$. By \thm{unbiased-estimator}, at each iteration one can run \algo{unbiased-Q} to obtain an unbiased estimate $\tilde{\g}_t$ of $\nabla f(\x_t)$ with variance at most $\hat{\sigma}^2$ using
\begin{align*}
\tO\left(\frac{\sigma\sqrt{d}}{\hat{\sigma}}\right)=\tO\left(\frac{\sigma\sqrt{d}}{\epsilon}\right)
\end{align*}
queries to $O_\g$. Then, the total number of queries to $O_\g$ is at most
\begin{align*}
\mathcal{T}\cdot\tO\left(\frac{\sigma\sqrt{d}}{\epsilon}\right)=\tO\left(\frac{\sigma\Delta\ell\sqrt{d}}{\epsilon^3}\right).
\end{align*}
\end{proof}

\begin{algorithm2e}
	\caption{Q-SPIDER}
	\label{algo:Q-SPIDER}
	\LinesNumbered
	\DontPrintSemicolon
    \KwInput{Function $f\colon\R^d\to\R$, precision $\epsilon$, variance $\sigma$, smoothness $\ell$}
    \KwParameter{$q=\frac{20\sigma}{\epsilon}$, $\hat{\sigma}_1=\frac{\epsilon}{40}$, $\hat{\sigma}_2=\frac{\epsilon}{40}\sqrt{\frac{\epsilon}{10\sigma}}$, total iteration budget $\mathcal{T}=\frac{1600\ell\Delta}{\sigma^2}$}
    \KwOutput{An $\epsilon$-critical point of $f$}
    Set $\x_0\leftarrow \0$\;
    \For{$t=0,1,2,\ldots,\mathcal{T}$}{
        \If{$\mod(t,q)=0$}{
            Call \algo{unbiased-Q} to obtain an unbiased estimate $\tilde{\g}_t$ of $\nabla f(\x_t)$ with variance $\hat{\sigma}_1^2$
            \;
            $\vect{v}_{t}\leftarrow\tilde{\g}_t$\label{lin:outer-QVR}\;
        }
        \Else{
            Call \algo{unbiased-Q} to obtain 
            an unbiased estimate $\tilde{\g}_t$ of $\nabla f(\x_t)-\nabla f(\x_{t-1})$ with variance $\hat{\sigma}_2^2$\;\label{lin:inner-QVR}
            $\vect{v}_t\leftarrow\tilde{\g}_t+\vect{v}_{t-1}$\;
        }
        \lIf{$\|\vect{v}_t\|\leq 2\epsilon$}{
            \Return $\x_t$;
        }
        \lElse{
            $\x_{t+1}\leftarrow\x_t-\frac{\epsilon}{\ell}\cdot\frac{\vect{v}_t}{\|\vect{v}_t\|}$
        }
    }
    \Return $\x_{\mathcal{T}}$\;
\end{algorithm2e}

\subsection{Algorithm for the mean-squared smoothness setting}\label{sec:mean-squared-smoothness-algorithm}

To solve \prob{nonconvex} in the mean-squared smoothness setting, we leverage the  SPIDER algorithm introduced in~\cite{fang2018spider}, which is a variance reduction technique that allows us to estimate the gradient of a function with lower cost by utilizing the smoothness structure and reuse the stochastic gradient samples at nearby points. Our algorithm is a specialization of the SPIDER algorithm, where we replace the classical variance reduction step by quantum variance reduction (\algo{unbiased-Q}). The query complexity of our quantum algorithm is given in the following theorem.

\begin{theorem}[Formal version of \thm{Q-SPIDER-informal}, mean-squared smoothness setting]\label{thm:Q-SPIDER}
For any $0\leq\epsilon\leq\sigma$, \algo{Q-SPIDER} solves \prob{nonconvex} in the mean-squared smoothness setting using an expected $\tO\left(\ell\Delta\sqrt{d\sigma}\epsilon^{-2.5}\right)$ number of queries. 
\end{theorem}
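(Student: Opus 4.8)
The plan is to follow the standard SPIDER analysis from \cite{fang2018spider}, but replacing the classical mini-batch gradient estimates by the unbiased, low-variance estimates produced by \algo{unbiased-Q}, and then carefully accounting for the per-iteration query cost. First I would establish the error-control invariant: at each iteration $t$ the estimator $\v_t$ maintained in \algo{Q-SPIDER} satisfies $\E\|\v_t - \nabla f(\x_t)\|^2 \leq \epsilon^2/100$ (or a similarly small constant fraction of $\epsilon^2$). This follows by induction on the epoch structure. At the start of an epoch ($\mod(t,q)=0$), \lin{outer-QVR} calls \algo{unbiased-Q} to get an unbiased estimate of $\nabla f(\x_t)$ with variance $\hat\sigma_1^2 = (\epsilon/40)^2$; since the QSGO has variance $\sigma^2$, \thm{unbiased-estimator} gives this using $\tO(\sigma\sqrt d/\hat\sigma_1) = \tO(\sigma\sqrt d/\epsilon)$ queries. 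Within the epoch, \lin{inner-QVR} adds an unbiased estimate of the increment $\nabla f(\x_t) - \nabla f(\x_{t-1})$; by the mean-squared smoothness hypothesis \eqn{mss} and the fact that the step size forces $\|\x_t - \x_{t-1}\| = \epsilon/\ell$, the random variable whose mean we are estimating has variance at most $\ell^2\|\x_t-\x_{t-1}\|^2 = \epsilon^2$, so \algo{unbiased-Q} targeting variance $\hat\sigma_2^2$ costs $\tO(\epsilon\sqrt d/\hat\sigma_2) = \tO(\sqrt d \cdot \sqrt{\sigma/\epsilon})$ queries. Since increments are unbiased and the per-step errors are independent (fresh oracle calls), the accumulated variance over an epoch of length $q = 20\sigma/\epsilon$ is at most $\hat\sigma_1^2 + q\hat\sigma_2^2 = (\epsilon/40)^2 + (20\sigma/\epsilon)(\epsilon/40)^2(\epsilon/(10\sigma)) = (\epsilon/40)^2 + (\epsilon/40)^2/(1/2) = O(\epsilon^2)$, giving the desired invariant with an appropriate constant.

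Second, I would import the descent lemma for SPIDER: when $\v_t$ is a good estimate of $\nabla f(\x_t)$ and $\|\v_t\| > 2\epsilon$, the step $\x_{t+1} = \x_t - (\epsilon/\ell)\v_t/\|\v_t\|$ decreases $f$ by at least $\Omega(\epsilon^2/\ell)$ in expectation (using $\ell$-smoothness and $\|\nabla f(\x_t)\| \geq \|\v_t\| - \|\v_t - \nabla f(\x_t)\| \geq 2\epsilon - O(\epsilon) = \Omega(\epsilon)$, which needs the constants tuned so the residual is, say, at most $\epsilon/2$, holding with good probability by Markov applied to the variance bound). Since $f(\0) - f^* \leq \Delta$, the algorithm cannot run for more than $O(\Delta\ell/\epsilon^2)$ productive steps before it must hit the stopping condition $\|\v_t\| \leq 2\epsilon$; combined with $\|\v_t - \nabla f(\x_t)\| \leq \epsilon$ (in expectation / with high probability), the returned point is $O(\epsilon)$-critical, i.e. expected $\epsilon$-critical after rescaling. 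The choice $\mathcal T = 1600\ell\Delta/\sigma^2$ should be checked to dominate this bound; note $\mathcal T = (80\ell\Delta/\epsilon^2)(\epsilon^2/(20\sigma^2)) \cdot 400$, so one needs $\epsilon \leq \sigma$ for $\mathcal T$ to actually exceed the $O(\Delta\ell/\epsilon^2)$ threshold — which is exactly the hypothesis.

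Third, I would total the queries. Out of $\mathcal T$ iterations, a $1/q$ fraction are epoch-starts costing $\tO(\sigma\sqrt d/\epsilon)$ each and the rest cost $\tO(\sqrt{d\sigma/\epsilon})$ each, so the expected total is
\[
\tO\!\left(\frac{\mathcal T}{q}\cdot\frac{\sigma\sqrt d}{\epsilon} + \mathcal T\cdot\sqrt{\frac{d\sigma}{\epsilon}}\right)
= \tO\!\left(\frac{\ell\Delta}{\sigma^2}\cdot\frac{\epsilon}{20\sigma}\cdot\frac{\sigma\sqrt d}{\epsilon} + \frac{\ell\Delta}{\sigma^2}\cdot\sqrt{\frac{d\sigma}{\epsilon}}\right)
= \tO\!\left(\frac{\ell\Delta\sqrt d}{\sigma^2} + \frac{\ell\Delta\sqrt{d\sigma}}{\sigma^2\sqrt\epsilon}\right).
\]
Under $\epsilon \leq \sigma$ the second term dominates and equals $\tO(\ell\Delta\sqrt{d\sigma}\,\epsilon^{-1/2}\sigma^{-2}\cdot\sigma^2)$... careful: $\ell\Delta\sqrt{d\sigma}/(\sigma^2\sqrt\epsilon) = \ell\Delta\sqrt d/(\sigma^{3/2}\sqrt\epsilon)$, which does not match the claimed $\ell\Delta\sqrt{d\sigma}\,\epsilon^{-5/2}$ — so I would expect the actual iteration count that governs the analysis to be the $\Theta(\ell\Delta/\epsilon^2)$ descent bound rather than the nominal $\mathcal T$; substituting $\ell\Delta/\epsilon^2$ for the step count in the display gives $\tO(\ell\Delta\epsilon^{-2}\sqrt{d\sigma/\epsilon}) = \tO(\ell\Delta\sqrt{d\sigma}\,\epsilon^{-5/2})$ from the inner steps and a lower-order $\tO(\ell\Delta\sqrt d\,\epsilon^{-2})$ from the epoch-starts, matching the theorem.

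\emph{Main obstacle.} The delicate point is not the query arithmetic but making the probabilistic/expectation bookkeeping rigorous: SPIDER's analysis is usually stated with high-probability martingale bounds on $\sum\|\v_t - \nabla f(\x_t)\|^2$, whereas \algo{unbiased-Q} only guarantees a bound on the \emph{expected} squared error of each call (and its number of queries is itself a random variable). I would handle this by working with the stopped process, using that increments are conditionally unbiased given the past so that $\v_t - \nabla f(\x_t)$ is a martingale-type quantity whose conditional second moment telescopes to the sum of the per-call variances, then applying Markov's inequality (or a truncation argument) to convert the expected-error guarantee into a statement that the descent lemma applies at a constant fraction of steps — which suffices for an expected-$\epsilon$-critical output. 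Coupling the random query cost with the random stopping time, and ensuring their product has the claimed expectation via independence of the oracle calls from the trajectory-determining randomness up to each step, is the part that needs the most care.
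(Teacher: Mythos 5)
Your query accounting is essentially the paper's: per-call costs $\tO(\sigma\sqrt d/\hat\sigma_1)=\tO(\sigma\sqrt d/\epsilon)$ at epoch starts and $\tO(\epsilon\sqrt d/\hat\sigma_2)=\tO(\sqrt{d\sigma/\epsilon})$ inside epochs (using \eqn{mss} and $\|\x_t-\x_{t-1}\|=\epsilon/\ell$), multiplied by $\mathcal T/q$ and $\mathcal T$ respectively with $\mathcal T=\Theta(\ell\Delta/\epsilon^2)$, which matches the paper's final arithmetic. Where you diverge is on correctness: the paper simply invokes \cite[Theorem 1]{fang2018spider} as a black box (the convergence analysis depends only on the variance of the unbiased estimates fed to the iteration, not on how they are produced), whereas you propose to re-derive the SPIDER analysis from scratch --- the martingale accumulation of per-call variances across an epoch ($\hat\sigma_1^2+q\hat\sigma_2^2=O(\epsilon^2)$, which your constants do verify), the descent lemma for the normalized step, and the stopping/counting argument. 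That route is more self-contained and, as you note, requires the expectation-versus-high-probability bookkeeping that the black-box citation sidesteps; it is doable but is genuinely the harder half of your write-up, and as presented it is a plan rather than a proof. You also correctly flag that the stated parameter $\mathcal T=1600\ell\Delta/\sigma^2$ is inconsistent with the $\Theta(\ell\Delta/\epsilon^2)$ count that the final bound needs --- the paper's own proof silently uses the latter, so your resolution is the right one.

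One concrete omission: \thm{unbiased-estimator} requires a quantum sampling oracle in the sense of \defn{OX}, and a nontrivial part of the paper's proof is constructing such oracles from the $\sigma$-SQ-QSGO of \defn{OgS} --- one query to $O_\g^S$ (applied to a superposition over seeds, then uncomputing the seed-dependent registers) for the full-gradient estimate, and two queries with the \emph{same} seed register to build the sampling oracle for the increment $\g(\x_t,\omega)-\g(\x_{t-1},\omega)$. You implicitly rely on the shared-seed structure when you bound the increment's variance by $\ell^2\|\x_t-\x_{t-1}\|^2$ via \eqn{mss}, but you never say how \algo{unbiased-Q} gets quantum sampling access to that difference variable; this is exactly the step where the simultaneous-query oracle is exercised and where the factor-two query overhead per inner step comes from. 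It is a mechanical fix, but it should be stated for the argument to go through.
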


The proof of \thm{Q-SPIDER} is based on the following result from~\cite{fang2018spider}.

\begin{theorem}[{\cite[Theorem 1]{fang2018spider}}]\label{thm:SPIDER-convergence}
Consider the mean-squared smoothness setting in \prob{nonconvex}, the output $\x_{\mathrm{out}}$ of \algo{Q-SPIDER} is an expected $\epsilon$-critical point.
\end{theorem}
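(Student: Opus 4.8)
The plan is to reduce \thm{SPIDER-convergence} to the convergence analysis of SPIDER-SFO in \cite{fang2018spider}, whose proof interacts with the path‑integrated gradient estimator $\v_t$ only through two conditional guarantees: at an ``anchor'' iteration ($t \bmod q = 0$), $\v_t$ is an unbiased estimate of $\nabla f(\x_t)$ with $\E\|\v_t - \nabla f(\x_t)\|^2 \le \hat{\sigma}_1^2$; at every other iteration, $\v_t - \v_{t-1}$ is an unbiased estimate of $\nabla f(\x_t) - \nabla f(\x_{t-1})$ with $\E\|(\v_t - \v_{t-1}) - (\nabla f(\x_t) - \nabla f(\x_{t-1}))\|^2 \le \hat{\sigma}_2^2$. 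In the classical algorithm these come from minibatch averaging together with mean‑squared smoothness; in \algo{Q-SPIDER} they are exactly what \algo{unbiased-Q}, i.e. \thm{unbiased-estimator}, provides when applied conditionally on the history. So the first step is to check the hypotheses of \thm{unbiased-estimator} along the trajectory: at an anchor iteration we run \algo{unbiased-Q} on the random variable $\g(\x_t,\cdot)$, whose variance is at most $\sigma^2$ by the bounded‑variance clause of \defn{OgS}, so targeting variance $\hat{\sigma}_1^2$ is valid; at a non‑anchor iteration we run it on $\g(\x_t,\cdot) - \g(\x_{t-1},\cdot)$ formed from a single shared seed $\omega$ (this is where simultaneous queriability is used), whose variance is at most $\ell^2\|\x_t - \x_{t-1}\|^2 = \epsilon^2$ by \eqn{mss} together with the fact that every accepted step of \algo{Q-SPIDER} has length exactly $\epsilon/\ell$, so targeting variance $\hat{\sigma}_2^2$ is valid.

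Given these two properties I would reproduce the three ingredients of the SPIDER-SFO analysis. \emph{Error accumulation:} writing $\mathbf{e}_t := \v_t - \nabla f(\x_t)$ and unrolling within the epoch containing $t$, $\mathbf{e}_t$ is the anchor error plus a sum of conditionally mean‑zero increments, so by conditional orthogonality $\E\|\mathbf{e}_t\|^2 \le \hat{\sigma}_1^2 + q\,\hat{\sigma}_2^2$; with the parameters of \algo{Q-SPIDER} ($\hat{\sigma}_1 = \epsilon/40$, $q\hat{\sigma}_2^2 = \epsilon^2/800$) this is a small constant times $\epsilon^2$, hence $\E\|\mathbf{e}_t\| \le \epsilon/20$. \emph{Descent:} $\ell$‑smoothness and the update $\x_{t+1} = \x_t - \frac{\epsilon}{\ell}\v_t/\|\v_t\|$ give $f(\x_{t+1}) \le f(\x_t) - \frac{\epsilon}{\ell}\|\v_t\| + \frac{\epsilon}{\ell}\|\mathbf{e}_t\| + \frac{\epsilon^2}{2\ell}$, so while the algorithm has not returned, i.e. $\|\v_t\| > 2\epsilon$, the expected one‑step decrease of $f$ is at least $\frac{3\epsilon^2}{2\ell} - \frac{\epsilon}{\ell}\E\|\mathbf{e}_t\| = \Omega(\epsilon^2/\ell)$. \emph{Telescoping and stopping:} summing the descent inequality up to the random iteration $\tau$ at which \algo{Q-SPIDER} returns and using $f(\x_\tau) \ge f(\0) - \Delta$, the number of iterations with $\|\v_t\| > 2\epsilon$ is $O(\ell\Delta/\epsilon^2)$ in expectation, so with the iteration budget $\mathcal{T}$ chosen a large enough constant multiple of $\ell\Delta/\epsilon^2$ the algorithm returns at a point with $\|\v_\tau\| \le 2\epsilon$ except on a low‑probability event; on the complementary event $\|\nabla f(\x_\tau)\| \le \|\v_\tau\| + \|\mathbf{e}_\tau\| \le 2\epsilon + \|\mathbf{e}_\tau\|$, and taking expectations gives $\E\|\nabla f(\x_{\mathrm{out}})\| = O(\epsilon)$, which is the claim after absorbing the absolute constants into the parameter choices.

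The step I expect to be the main obstacle is making the stopping‑time bookkeeping rigorous, in particular bounding the contribution of the ``budget‑exhausted'' event to $\E\|\nabla f(\x_{\mathrm{out}})\|$. The subtlety is that the events $\{\tau > t\}$ are correlated with the errors $\|\mathbf{e}_t\|$, so the telescoped inequality must be taken in expectation with care (e.g. bounding $\mathbf{1}[\tau > t]\,\|\mathbf{e}_t\| \le \|\mathbf{e}_t\|$ and summing over $t \le \mathcal{T}$, or Cauchy--Schwarz against $\Pr[\tau > t]$), and on the bad event one must control $\|\nabla f(\x_{\mathrm{out}})\|$ through the accumulated potential increase rather than a uniform estimate. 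This has to be carried out using only second moments, because the estimate returned by \algo{unbiased-Q} is heavy‑tailed — its MLMC de‑biasing step (\lin{de-biasing} of \algo{unbiased-Q}) multiplies a difference by $2^J$ with $J$ geometric — so exponential concentration of $\sum_t\|\mathbf{e}_t\|$ is unavailable and one must argue, as in \cite{fang2018spider}, via the monotone progress of $f$ together with Markov‑type tail bounds, tracking the constants in \algo{Q-SPIDER} so the final bound lands at $\epsilon$.
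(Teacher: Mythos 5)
Your proposal is correct and follows essentially the same route as the paper, which proves nothing new here but simply invokes \cite[Theorem 1]{fang2018spider}, relying on the fact that the SPIDER-SFO analysis interacts with the estimator only through conditional unbiasedness and the variance bounds $\hat{\sigma}_1^2$, $\hat{\sigma}_2^2$ that \algo{unbiased-Q} supplies (exactly the reduction in your first paragraph, with the same use of mean-squared smoothness and the fixed step length $\epsilon/\ell$ at non-anchor iterations). Your reconstruction of the internal argument (martingale error accumulation, normalized-gradient descent lemma, stopping-time telescoping) and your parameter checks are consistent with the intended setting, noting only that the iteration budget should be read as $\Theta(\ell\Delta/\epsilon^2)$ (as your sketch assumes and as the query-complexity computation in the paper's proof of \thm{Q-SPIDER} requires), the stated $\mathcal{T}=1600\ell\Delta/\sigma^2$ being a typo.
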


\begin{proof}[Proof of \thm{Q-SPIDER}]
\thm{SPIDER-convergence} shows that \algo{Q-SPIDER} can solve \prob{nonconvex} under the mean-squared smoothness setting and the remaining thing would be to analyze the number of queries it makes to $O_\g^S$. 

For each iteration $t$ with $\mathrm{mod}(t,q)=0$ such that \lin{outer-QVR} is executed, note that we can prepare the following quantum oracle
\begin{align*}
\hat{O}_\g\ket{\x}\otimes\ket{0}\to\ket{\x}\otimes\int_{\omega}\sqrt{p(\omega)\d\omega}\ket{\g(\x,\omega)}\otimes\ket{\mathrm{garbage}(\x,\omega)},
\end{align*}
where $p(\omega)$ is the probability distribution of the random seed $\omega$, by first applying $O_\g^{S}$ to the state
\begin{align*}
\ket{\x}\otimes\left(\int_{\omega}\sqrt{p(\omega)\d\omega}\ket{\omega}\right)\otimes\ket{0}
\end{align*}
and then uncompute the second register. This procedure uses one query to the $\sigma$-SQ-QSGO $O_\g^S$ (\defn{OgS}). By \thm{unbiased-estimator}, one can run \algo{unbiased-Q} to obtain an unbiased estimate $\tilde{\g}_t$ of $\nabla f(\x_t)$ with variance at most $\hat{\sigma}_1^2$ using
\begin{align*}
\tO\left(\frac{\sigma\sqrt{d}}{\hat{\sigma}_1}\right)=\tO\left(\frac{\sigma\sqrt{d}}{\epsilon}\right)
\end{align*}
queries to $\hat{O}_\g$ and thus the same number of queries to $O_{\g}^{S}$.

Similarly, for each iteration $t$ with $\mathrm{mod}(t,q)=0$ such that \lin{inner-QVR} is executed, note that one can prepare the following quantum oracle
\begin{align*}
\hat{O}_{\g}^t\ket{\x_t}\otimes\ket{\x_{t-1}}\otimes\ket{0}&\to\ket{\x_t}\otimes\ket{\x_{t-1}}\\
&\qquad\otimes\int_\omega\sqrt{p(\omega)\d\omega}\ket{\g(\x_t,\omega)-\g(\x_{t-1},\omega)}\otimes\ket{\mathrm{garbage}(\x_t,\x_{t-1},\omega)}
\end{align*}
by first applying $O_\g^{S}$ twice to obtain the state
\begin{align*}
&\ket{\x}\otimes\ket{\x_{t-1}}\otimes\int_\omega\sqrt{p(\omega)\d\omega}\ket{\omega}\otimes\ket{\g(\x,\omega)}\otimes\ket{\g(\x_{t-1},\omega)}\\
&\qquad\qquad\qquad\otimes\ket{\g(\x,\omega)-\g(\x_{t-1},\omega)}\otimes\ket{\mathrm{garbage}(\x_t,\x_{t-1},\omega)}
\end{align*}
and then uncompute the fourth and the fifth register. Observe that
\begin{align*}
\underset{\omega}{\E}\|\g(\x_t,\omega)-\g(\x_{t-1},\omega)\|^2\leq\ell^2\|\x_t-\x_{t-1}\|\leq\epsilon^2,
\end{align*}
then by \thm{unbiased-estimator}, one can run \algo{unbiased-Q} to obtain an unbiased estimate $\tilde{\g}_t$ of $\nabla f(\x_t)-\nabla f(\x_{t-1})$ with variance at most $\hat{\sigma}_1^2$ using
\begin{align*}
\tO\left(\frac{\epsilon\sqrt{d}}{\hat{\sigma}_2}\right)=\tO\left(\sqrt{\frac{d\sigma}{\epsilon}}\right)
\end{align*}
queries to $\hat{O}_\g^{t}$ and thus twice the number of queries to $O_{\g}^S$. Then, the total number of queries to $O_{\g}^S$ can be expressed as
\begin{align*}
\frac{\mathcal{T}}{q}\cdot\tO\left(\frac{\sigma\sqrt{d}}{\epsilon}\right)+\mathcal{T}\cdot\tO\left(\sqrt{\frac{d\sigma}{\epsilon}}\right)=\tO\left(\frac{\ell\Delta\sqrt{d}}{\epsilon^2}\left(1+\sqrt{\frac{\sigma}{\epsilon}}\right)\right).
\end{align*}
\end{proof}


\section{Lower bounds}\label{sec:lower}

In this section we present two quantum lower bounds for solving quantum variance reduction (\prob{variance-reduction}) and stochastic convex optimization (\prob{SCO}), respectively. 

\subsection{Quantum lower bound for variance reduction}

We first establish the following quantum lower bound for the variance reduction problem (\prob{variance-reduction}) which shows that our \algo{unbiased-Q} is optimal up to a poly-logarithmic factor when $\hat{\sigma}=\mO\left(d^{-1/2}\right)$. By Markov's inequality, \prop{QVR-lower} equivalently states that any quantum algorithm that solves \prob{variance-reduction} must make an expected $\Omega(L\sqrt{d}\hat{\sigma}^{-1})$ queries. This matches our algorithmic result provided in \thm{unbiased-estimator}, up to a poly-logarithmic factor. 

\begin{proposition}\label{prop:QVR-lower} 
There is a constant $\alpha$ such that for any $\hat{\sigma}\leq\frac{\sigma}{\alpha\sqrt{d}}$, any quantum algorithm that solves \prob{variance-reduction} with success probability at least $2/3$ must make at least $\Omega(L\sqrt{d}\hat{\sigma}^{-1})$ queries in the worst case.
\end{proposition}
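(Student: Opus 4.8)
The plan is to prove \prop{QVR-lower} by reducing \prob{variance-reduction} (in its bounded-success-probability form) from a hard instance of distributed coin estimation, following the standard template for mean-estimation lower bounds. Fix a parameter $\epsilon\in(0,1/2]$ to be chosen later, and for each hidden string $b\in\{0,1\}^d$ consider the $d$-dimensional random variable $X_b:=L(2c_J-1)\,\e_J$, where $J$ is uniform on $[d]$ and, conditioned on $J=j$, $c_j\in\{0,1\}$ is the outcome of a coin of bias $\tfrac12+\epsilon(2b_j-1)$. Then $\|X_b\|=L$ always, so $\Var[X_b]\le\E\|X_b\|^2=L^2$, and a short computation gives $(\mu_b)_j:=\E[X_{b,j}]=\tfrac{2L\epsilon}{d}(2b_j-1)$, so $(\mu_b)_j$ has the sign of $2b_j-1$ and $|(\mu_b)_j|=2L\epsilon/d$. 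The first thing I would check is that the quantum sampling oracle $O_{X_b}$ of \defn{OX} can be implemented with $O(1)$ queries to the natural quantum oracle for the $d$ coins: prepare a uniform superposition over $j$, apply the controlled coin unitary once, append the value $L(2c-1)\e_j$, and keep the index/coin registers as garbage (and similarly for $O_{X_b}^{-1}$).

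Next I would carry out the decoding step. Suppose $\mathcal{A}$ solves \prob{variance-reduction} at target variance $\hat{\sigma}^2$ with $T$ queries and success probability $\ge 2/3$; applying Markov's inequality to the variance guarantee we may assume $\mathcal{A}$ outputs $\hat{\mu}$ with $\|\hat{\mu}-\mu_b\|\le\sqrt{3}\,\hat{\sigma}$ with probability at least $2/3$. Define $\hat{b}_j:=\1[\hat{\mu}_j\ge 0]$. Then $\hat{b}_j\ne b_j$ forces $|\hat{\mu}_j-(\mu_b)_j|\ge|(\mu_b)_j|=2L\epsilon/d$, so on the good event the number of incorrect bits is at most $3\hat{\sigma}^2/(2L\epsilon/d)^2=3d^2\hat{\sigma}^2/(4L^2\epsilon^2)$. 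Choosing $\epsilon:=3\sqrt{d}\,\hat{\sigma}/L$ — which lies in $(0,1/2]$ precisely when $\hat{\sigma}\le L/(6\sqrt d)$, i.e.\ exactly in the hypothesized regime $\hat\sigma\le L/(\alpha\sqrt d)$ with $\alpha=6$ (or any larger constant) — this count is at most $d/12\le d/10$. Thus $\mathcal{A}$ together with thresholding recovers at least $0.9d$ bits of an arbitrary $b$ with probability $\ge 2/3$, using $O(T)$ queries to the coin oracle. Equivalently: the task of learning a $0.9$-fraction of the bits of $b$ from $d$ independent coins each with bias gap $2\epsilon$ reduces, with only an $O(1)$ query blow-up, to solving \prob{variance-reduction} at accuracy $\hat{\sigma}$.

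It then remains to lower bound the query complexity of this distributed-coin-learning task by $\Omega(d/\epsilon)$; since $\epsilon=\Theta(\sqrt d\,\hat{\sigma}/L)$ this is $\Omega(L\sqrt{d}/\hat{\sigma})$, which finishes the proof. For a single coin, distinguishing bias $\tfrac12+\epsilon$ from $\tfrac12-\epsilon$ with success $2/3$ requires $\Omega(1/\epsilon)$ quantum queries, by a standard hybrid argument (equivalently, the matching lower bound for quantum approximate counting / amplitude estimation, cf.\ \cite{brassard2002quantum}). The main obstacle — the technical heart of the argument — is upgrading this per-coin bound to $\Omega(d/\epsilon)$ for learning a constant fraction of $d$ coins: a naive hybrid/adversary argument across the $d$ coordinates only yields $\Omega(\sqrt d/\epsilon)$, because it does not capture that the coherent queries used to resolve each individual coin must be "concentrated" and hence cannot be amortised across coordinates (spreading query amplitude uniformly over the $d$ coins destroys the quadratic amplitude-estimation speedup per coin). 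Closing this gap is exactly a direct-product phenomenon, and I would close it by invoking a strong direct-product (or XOR) theorem for quantum query complexity applied to the $d$-fold repetition of the single-coin distinguishing problem, which converts the per-instance bound $\Omega(1/\epsilon)$ into $\Omega(d/\epsilon)$ for succeeding on a $0.9$-fraction of the instances.

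As an alternative route that avoids the direct-product machinery, I would note that the Markov conversion in the second paragraph already places $\mathcal{A}$ in the form of an $\ell_2$-approximate multivariate mean estimator, so the matching (tight) lower bound of Cornelissen et al.~\cite{cornelissen2022near} can be invoked directly in the regime $\hat{\sigma}\le L/(\alpha\sqrt d)$ to conclude $T=\Omega(L\sqrt{d}/\hat{\sigma})$; the self-contained coin-learning reduction above is the more instructive argument but relies on the direct-product step as its one nontrivial ingredient.
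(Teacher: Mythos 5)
Your fallback route---use Markov's inequality to turn the variance guarantee into an $\ell_2$-error bound of $O(\hat{\sigma})$ holding with probability $2/3$, and then invoke the mean-estimation lower bound of \cite{cornelissen2022near}---is exactly the paper's proof of \prop{QVR-lower}. The one detail you gloss over there is the oracle model: \lem{mean-estimation-lower} is stated for algorithms querying the seed oracles $O_\Omega$ and $\mathcal{B}_X$, not the sampling oracle $O_X$ of \defn{OX}, so one must observe (as the paper does) that a single query to $O_X$ can be simulated by one query to each of $O_\Omega$ and $\mathcal{B}_X$ (apply $\mathcal{B}_X O_\Omega$ to $\ket{0}\otimes\ket{0}$ and treat the seed register as garbage). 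With that, a $T$-query algorithm for \prob{variance-reduction} yields an $O(T)$-query mean estimator with error $O(\hat{\sigma})$ and success probability $2/3$, contradicting \lem{mean-estimation-lower} when $T=o(L\sqrt{d}/\hat{\sigma})$; the regime $\hat{\sigma}\leq \sigma/(\alpha\sqrt{d})$ corresponds to the condition $n>\alpha d$ in that lemma. Your Markov step and its constants are fine.

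Your primary, self-contained route, however, has a genuine gap exactly where you flag it: upgrading the per-coin bound $\Omega(1/\epsilon)$ to $\Omega(d/\epsilon)$ for recovering a $0.9$-fraction of the $d$ coin biases. The strong direct-product/XOR theorems you propose to invoke are proved for (possibly partial) Boolean functions accessed through standard bit/phase oracles, whereas your coins are accessed through a state-preparation (sampling) oracle with bias gap $\epsilon$, and the per-coin $\Omega(1/\epsilon)$ hardness is an amplitude-estimation-type bound rather than an adversary bound for a fixed Boolean function; no off-the-shelf SDPT applies without first re-encoding each bias as a Boolean quantity (e.g., the parity/Hamming weight of an $M$-bit row with $M=\Theta(1/\epsilon)$) and then proving a threshold-composition lower bound for that encoding. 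That re-encoding plus composition bound is precisely what Cornelissen et al.\ carry out via the $\search^N\circ\parity^M$ problem (\lem{search-parity}) to obtain their Theorem 3.8, so making your primary route rigorous essentially amounts to reproving their lower bound. Citing their result directly, as in your fallback and as the paper does, is the correct and efficient way to finish.
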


Note that \prob{variance-reduction} is strictly harder than the multivariate mean estimation problem considered in~\cite{cornelissen2022near}. In the latter, the objective is to produce an estimate of the expected value within a bounded $\ell_2$ error, without the requirement of unbiasedness. A query lower bound for solving quantum mean estimation was established in~\cite{cornelissen2022near}.

\begin{lemma}[{\cite[Theorem 3.8]{cornelissen2022near}}]\label{lem:mean-estimation-lower}
Let $\sigma\geq 0$ and denote $P_\sigma$ to be the set of all $d$-dimensional quantum random variables with covariance matrix $\Sigma$ such that $\Tr(\Sigma)=\sigma^2$. Suppose every $X\in P_\sigma$ is indexed by some random seeds $\omega$, i.e., 
\begin{align*}
X=X(\omega),\quad\omega\sim p_{\Omega},
\end{align*}
where $p_{\Omega}$ denotes the probability distribution of the random seed $\omega$. Then there exists a constant $\alpha$ such that for any $n>\alpha d$ and any quantum algorithm that uses at most $n$ queries to the following two oracles
\begin{align*}
O_{\Omega}\ket{0}\to\int_\omega\sqrt{p_\Omega(\omega)\d\omega}\ket{\omega}
\qquad
\text{ and }
\qquad
\mathcal{B}_X\ket{\omega}\ket{0}\to\ket{\omega}\ket{X(\omega)},
\end{align*}
there exists an instance $X\in P_\sigma$ such that the quantum algorithm returns a mean estimate $\tilde{\mu}$ of the mean $\mu$ of $X$ that satisfies 
\begin{align*}
\|\tilde{\mu}-\mu\|\geq\Omega\left(\frac{\sigma\sqrt{d}}{n}\right)
\end{align*}
with probability at least $2/3$.
\end{lemma}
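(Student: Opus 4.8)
This statement is \cite[Theorem 3.8]{cornelissen2022near}; for completeness we sketch the proof strategy. The plan is to (i) build a family of hard instances in $P_\sigma$ indexed by the codewords of a good binary code, (ii) show that any sufficiently accurate mean estimate decodes the hidden codeword, and (iii) lower bound the quantum query complexity of this decoding task by the quantum adversary method. For (i), fix a binary code $S\subseteq\{-1,+1\}^d$ with $|S|\geq 2^{\Omega(d)}$ and minimum Hamming distance at least $d/4$ (Gilbert--Varshamov), and a bias parameter $\theta\in(0,1)$ to be tuned. For each $s\in S$ let $X^{(s)}$ have independent coordinates, the $i$-th equal to $\sigma/\sqrt{d}$ with probability $(1+\theta s_i)/2$ and to $-\sigma/\sqrt{d}$ otherwise; equivalently one may take the Gaussian variant $X^{(s)}\sim\mathcal{N}(\tfrac{\theta\sigma}{\sqrt d}s,\tfrac{\sigma^2}{d}I_d)$, which makes the overlap computations in step (iii) cleaner. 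Then $\E X^{(s)}=\tfrac{\theta\sigma}{\sqrt d}s=:\mu_s$ and the covariance is (up to the $1-\theta^2$ factor, which we absorb by a harmless rescaling) $\tfrac{\sigma^2}{d}I_d$, so $\Tr$ of the covariance equals $\sigma^2$ and $X^{(s)}\in P_\sigma$. Using the vector of $d$ coin outcomes (resp.\ a standard Gaussian seed) as $\omega$, the oracles $O_\Omega$ and $\mathcal{B}_X$ of the statement are realized. Crucially, $\|\mu_s-\mu_{s'}\|=\tfrac{\theta\sigma}{\sqrt d}\|s-s'\|\geq\theta\sigma$ for distinct $s,s'\in S$.

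For (ii), suppose an $n$-query quantum algorithm outputs, on every instance, an estimate $\tilde\mu$ with $\|\tilde\mu-\mu_s\|\leq\theta\sigma/4$ with probability at least $2/3$. Since distinct mean vectors are $\theta\sigma$-separated, rounding $(\sqrt d/(\theta\sigma))\,\tilde\mu$ to the nearest codeword recovers $s$ with probability at least $2/3$. Thus the algorithm identifies the hidden codeword $s$ from $n$ queries to the sampling oracle of $X^{(s)}$ --- equivalently, it learns $\Omega(d)$ bits of $s$, each encoded only through a coordinate that is a $\Theta(\theta)$-biased coin. Choosing $\theta\asymp\epsilon/\sigma$ makes the target accuracy $\theta\sigma/4\asymp\epsilon$, and the hypothesis $\theta\ll 1/\sqrt d$ needed below is precisely the regime $\epsilon\lesssim\sigma/\sqrt d$, i.e.\ $n\gtrsim d$ in the notation of the statement. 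So, by contraposition, to prove the lemma it suffices to show that identifying $s\in S$ in this model requires $\Omega(\sqrt d/\theta)=\Omega(\sqrt d\,\sigma/\epsilon)$ queries.

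Step (iii) is the heart of the matter. Because one query to $\mathcal{B}_X O_\Omega$ returns a whole sample of $X^{(s)}$ --- noisy information about all $d$ coordinates simultaneously --- naive arguments lose a full factor of $d$: the oracle states of two far codewords have overlap $1-\Theta(\theta^2)$, so a single pair is distinguished with $O(1/\theta)$ queries, and a hybrid or Holevo argument that only uses pairwise trace distances yields at best $\Omega(1/(\theta\sqrt d))$. Recovering the missing factor $d$ requires exploiting that $2^{\Omega(d)}$ mutually ``crowded'' codewords must all be told apart. I would do this with the (negative-weights) quantum adversary method for unitary-input oracles: pick an adversary matrix $\Gamma$ on $S\times S$ supported on pairs at Hamming distance $\asymp k$ for a tunable $k$, with weights from the associated Johnson/hypercube scheme, evaluate $\|\Gamma\|$ spectrally, and bound the per-query quantity $\|\Gamma\circ\Delta\|$ using that the oracle perturbation between instances at Hamming distance $d_H$ has magnitude $\Theta(\theta\sqrt{d_H})$. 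Optimizing over $k$ interpolates between the noiseless ``learn $d$ bits'' cost ($\Theta(d)$) and the ``amplify a $\theta$-biased coin'' cost ($\Theta(1/\theta)$) and yields the geometric-mean rate $\Omega(\sqrt d\cdot 1/\theta)$, completing the proof. The main obstacle is exactly this step: pairwise/hybrid reasoning is off by a factor $d$, so the global combinatorics of the code must be used, and choosing $\Gamma$ while simultaneously controlling $\|\Gamma\|$ and $\|\Gamma\circ\Delta\|$ within the unitary-input-oracle adversary framework is where essentially all of the work lies.
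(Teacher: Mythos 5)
The paper does not actually prove this lemma: it is imported verbatim as \cite[Theorem 3.8]{cornelissen2022near}, and the proof there constructs hard instances from the composed problem $\mathsf{Search}^N\circ\mathsf{Parity}^M$ (exactly the problem the present paper restates as \prob{parity-search} and \lem{search-parity} for its SCO lower bound), so that the $\Omega(NM)$ bound follows from known adversary/composition results for Search and Parity rather than from a bespoke adversary matrix. Your proposal takes a genuinely different route --- a Gilbert--Varshamov code of biased-coin/Gaussian product distributions plus a Johnson-scheme negative-weights adversary --- and your steps (i) and (ii) (the packing of means and the decoding reduction, with the correct regime $\theta\lesssim 1/\sqrt{d}$) are fine.

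However, step (iii) is a genuine gap, not a proof: the adversary matrix $\Gamma$ is never constructed, the spectral bounds on $\|\Gamma\|$ and the per-query quantity $\|\Gamma\circ\Delta\|$ are asserted rather than computed, and the optimization over the distance parameter $k$ that is supposed to recover the missing factor of $d$ (pairwise/hybrid arguments only give $\Omega(1/(\theta\sqrt{d}))$, as you note) is exactly where the entire difficulty of the theorem sits --- you acknowledge this yourself, which means the claimed bound $\Omega(\sqrt{d}/\theta)$ is not established. Two further issues compound this. First, a quantitative slip: the oracle states of two far codewords have overlap $1-\Theta(d_H\theta^2)$, not $1-\Theta(\theta^2)$, since the per-coordinate fidelity $\sqrt{1-\theta^2}$ multiplies over the $d_H$ differing coordinates; your naive pairwise estimate $\Omega(1/(\theta\sqrt{d}))$ is right, but the stated reason is not. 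Second, the standard (negative-weights) adversary bound is formulated for classical functions accessed through bit/phase oracles, whereas here the input is a state-preparation/sampling unitary $\mathcal{B}_X O_\Omega$ (and its inverse); one needs a state-conversion or unitary-input variant of the adversary framework, and controlling the per-query progress measure in that setting for your instance family is an additional unresolved obstacle. This is precisely the machinery that the Search$\circ$Parity reduction in \cite{cornelissen2022near} is designed to avoid, which is why that proof goes through a standard phase oracle $O_A$ and simulates the sampling oracle from it.
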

By employing an oracle reduction argument, we obtain our lower bound result based on \lem{mean-estimation-lower}.
\begin{proof}[Proof of \prop{QVR-lower}]
We first show that one query to the quantum sampling oracle $O_X$ defined in \defn{OX} can be implemented using one query to $O_{\Omega}$ and one query to $\mathcal{B}_X$. In particular, observe that
\begin{align*}
\mathcal{B}_XO_\Omega\ket{0}\otimes\ket{0}=\int_\omega\sqrt{p_{\Omega}(\omega)\d\omega}\ket{\omega}\otimes\ket{X(\omega)}.
\end{align*}
By measuring the first register in the computational basis spanned by all the random seeds $\{\ket{\omega}\}$, we can obtain the desired output of $O_X$. Hence, if there exists a quantum algorithm making at most $\mO(L\sqrt{d}/\hat{\sigma})$ queries to the oracle $O_X$ that solves \prob{variance-reduction} with success probability at least $2/3$, there also exists a quantum mean estimation algorithm that uses at most $\mO(L\sqrt{d}/\hat{\sigma})$ queries to $O_\Omega$ and $\mathcal{B}_X$ and for any $X\in P_\sigma$ it outputs an estimate $\tilde{\mu}$ of the mean $\mu$ of $X$ that satisfies $\|\tilde{\mu}-\mu\|\leq \mO(\hat{\sigma})$ with probability at least $2/3$, which contradicts to \lem{mean-estimation-lower}.
\end{proof}

\subsection{Quantum lower bound for stochastic convex optimization}

Next, we establish the following quantum lower bounds for stochastic convex optimization (\prob{SCO}) in the low-dimension regime and the high-dimension regime, respectively, which show that our quantum stochastic cutting plane method in \sec{QSCPM} is optimal up to a poly-logarithmic factor when the dimension $d$ is a constant, and there is no quantum speedup over SGD when $d\geq\Omega\left(\epsilon^{-2}\right)$.

\begin{theorem}\label{thm:SCO-lower}
For any $\epsilon\leq\frac{RL}{100\sqrt{d}}$, any quantum algorithm that solves \prob{SCO} with success probability at least $2/3$ must make at least $\Omega(\sqrt{d}RL/\epsilon)$ queries in the worst case. For any $\frac{RL}{100\sqrt{d}}\leq\epsilon\leq 1$, any quantum algorithm that solves \prob{SCO} with success probability at least $2/3$ must make at least $\Omega(R^2L^2/\epsilon^2)$ queries in the worst case.
\end{theorem}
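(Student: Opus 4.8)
The plan is to obtain both bounds from the quantum mean estimation lower bound, \lem{mean-estimation-lower}, through a single family of convex instances indexed by a hidden sign vector, the two regimes of the theorem corresponding to the two arguments of a $\min$ in the effective dimension. Set $k\defeq\min\{d,\,c\,(LR/\epsilon)^2\}$ for a small absolute constant $c$ (e.g. $c=10^{-4}$, so that $k=d$ exactly when $\epsilon\le RL/(100\sqrt d)$), and a ``bias'' parameter $\delta$ equal to a fixed small constant times $\epsilon/(LR)$; one checks that then $\delta\le \tfrac14$ and $\delta\le 1/\sqrt k$ throughout the range of the theorem. For $\v\in\{-1,+1\}^k$ define
\[
f_\v(\x)\;\defeq\;\langle\mu_\v,\x\rangle+\tfrac{L}{4}\,H_\rho\big(\|\x\|\big),\qquad \mu_\v\defeq\tfrac{\delta L}{2\sqrt k}\,\v,\qquad \rho\defeq \frac{R}{2\delta},
\]
where $H_\rho$ is the Huber function ($H_\rho(t)=t^2/(2\rho)$ for $t\le\rho$ and $H_\rho(t)=t-\rho/2$ otherwise), which does \emph{not} depend on $\v$. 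Since $H_\rho$ is convex and increasing, $f_\v$ is convex and (for $\delta\le\tfrac14$) $L$-Lipschitz; on $\B_R(\0)\subseteq\B_\rho(\0)$ it reduces to a quadratic and is $\Theta(\delta L/R)$-strongly convex, with unique minimizer $\x^*_\v=-R\v/\sqrt k$ satisfying $\|\x^*_\v\|=R$. Thus each $f_\v$ is a legal \prob{SCO} instance (after fixing absolute constants so the stochastic gradient below is $L$-bounded), and it is nontrivial because $f_\v(\0)-f_\v^*=\Theta(\delta LR)>\epsilon$.

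\textbf{The reduction.} The stochastic gradient is engineered so that its only dependence on $\v$ is a fixed coordinate-sparse mean estimation instance. At a query point $\x$, the QSGO of $f_\v$ returns the deterministic, $\v$-independent vector $\tfrac{L}{4}H_\rho'(\|\x\|)\,\x/\|\x\|$ plus an independent sample of $X_\v\defeq \tfrac{L}{2}\,b_I\,\e_I$, where $I$ is uniform on $[k]$ and $b_I\in\{\pm1\}$ with $\E[b_I\mid I]=\delta\sqrt k\,v_I$; then $\E X_\v=\mu_\v=\nabla f_\v(\x)-\tfrac{L}{4}H_\rho'(\|\x\|)\x/\|\x\|$ and $\Var[X_\v]=\Theta(L^2)=:\sigma^2$. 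Because the additive radial term is a known function of the (classical) query register, one query to this QSGO $O_\g$ can be simulated coherently by one query to the quantum sampling oracle $O_{X_\v}$ -- hence, by the reduction in the proof of \prop{QVR-lower}, by one query each to $O_\Omega$ and $\mathcal B_{X_\v}$ -- followed by reversibly adding the known offset; $O_\g^{-1}$ is simulated the same way. Consequently any $T$-query quantum algorithm solving \prob{SCO} on $\{f_\v\}_{\v}$ with success probability $\ge 2/3$ yields, for every $\v$, a $T$-query quantum procedure estimating $\x^*_\v$ (equivalently $\mu_\v$) with probability $\ge 2/3$.

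\textbf{Accuracy-to-queries and the two regimes.} By strong convexity of $f_\v$ on $\B_R(\0)$ and its at-least-linear growth outside (which forces an $\epsilon$-optimal point into $\B_\rho(\0)$ for small $\epsilon$), an $\epsilon$-optimal output $\hat\x$ obeys $\|\hat\x-\x^*_\v\|\lesssim\sqrt{R\epsilon/(\delta L)}$; hence $-\tfrac{\sqrt k}{R}\hat\x$ estimates $\v$ to error $\lesssim\sqrt{k\epsilon/(R\delta L)}$ and $\tfrac{\delta L}{2\sqrt k}\cdot(-\tfrac{\sqrt k}{R}\hat\x)$ estimates $\mu_\v$ to error $\Delta\defeq\Theta\!\big(\sqrt{\delta L\epsilon/R}\big)$. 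Applying \lem{mean-estimation-lower} with dimension $k$ and variance $\sigma^2=\Theta(L^2)$: a $T$-query estimator achieving error $\Delta$ with probability $\ge 2/3$ must satisfy $\Delta\ge\Omega(\sigma\sqrt k/T)$, i.e. $T\ge\Omega(\sigma\sqrt k/\Delta)=\Omega\!\big(\sqrt k\,\sqrt{LR/(\delta\epsilon)}\big)=\Omega(\sqrt k\,LR/\epsilon)$ after substituting $\delta=\Theta(\epsilon/(LR))$. For $\epsilon\le RL/(100\sqrt d)$ this is $k=d$, giving $\Omega(\sqrt d\,RL/\epsilon)$; for $RL/(100\sqrt d)\le\epsilon\le 1$ this is $k=\Theta((RL/\epsilon)^2)$, giving $\Omega((RL/\epsilon)^2)$. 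The hypothesis $T>\alpha k$ of \lem{mean-estimation-lower} is handled by the standard dichotomy: if $T\le\alpha k$, rerun the construction with dimension lowered to $k'=\lfloor T/(2\alpha)\rfloor\le k$ (still $\delta\le1/\sqrt{k'}$), which yields $T\ge\Omega(\sqrt{k'}\,LR/\epsilon)=\Omega(\sqrt T\,LR/\epsilon)$ and hence $T=\Omega((LR/\epsilon)^2)$, a bound that dominates the claimed one in both regimes; the range $\epsilon=\Theta(LR)$ is trivial since the claimed bound is then $O(1)$.

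\textbf{Main obstacle.} The step I expect to require the most care is the legitimacy of the reduction to \lem{mean-estimation-lower}: that lemma exhibits a hard instance somewhere in the class $P_\sigma$ of \emph{all} trace-$\sigma^2$-covariance random variables, whereas our reduction only controls instances of the structured coordinate-sparse form $X_\v$. Closing this gap requires either verifying that the hard instances in the proof of~\cite{cornelissen2022near} are already coordinate-supported with biased signs -- so they lie in $\{X_\v\}_{\v}$ up to relabeling -- or re-running their argument restricted to this sub-family; I would pursue the former. A secondary point needing care is the geometric claim that an $\epsilon$-suboptimal point is $\Theta(\sqrt{R\epsilon/(\delta L)})$-close to $\x^*_\v$ with no a priori bound on $\|\hat\x\|$ (immediate from the explicit shape of $H_\rho$ together with a one-line convexity argument along the segment $[\x^*_\v,\hat\x]$), and the bookkeeping of absolute constants guaranteeing simultaneously that the QSGO is $L$-bounded, that $\delta\le 1/\sqrt k$, and that each $f_\v$ is nontrivial.
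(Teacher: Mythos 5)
Your high-level construction (a hidden linear term plus a known, $\v$-independent radial convex term, with optimization accuracy converted into mean-estimation accuracy) is sound in spirit and parallels the paper's hard instances $\bar f^A(\x)=-\tfrac13\langle\x,\bar\g\rangle+\tfrac{2L}{3}\max\{0,\|\x\|-R/2\}$; your parameter bookkeeping (Lipschitzness, $L$-boundedness of the stochastic gradient, $\|\x^*\|=R$, and the error scale $\Delta=\Theta(\epsilon/R)$) checks out up to constants. However, the central step is a genuine gap, and it is exactly the one you flag: \lem{mean-estimation-lower} is an \emph{existential} statement over the entire class $P_\sigma$ of trace-$\sigma^2$-covariance random variables — for a given algorithm and query budget there exists \emph{some} hard $X\in P_\sigma$ — so it cannot be invoked as a black box to lower-bound mean estimation restricted to your structured sub-family $\{X_\v\}$. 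Your chain ``SCO easy on $\{f_\v\}$ $\Rightarrow$ mean estimation easy on $\{X_\v\}$ $\Rightarrow$ contradiction'' breaks at the second implication, since the lemma's hard instance need not be any $X_\v$. Your proposed repair (checking that the hard instances of Cornelissen et al.\ already lie in $\{X_\v\}$ ``up to relabeling'') does not go through as stated: their hard instances are uniform distributions over signed indicator vectors $\propto(-1)^{A_{ij}}\e_i$ determined by rows of a near-balanced Boolean matrix, with per-coordinate means patterned by $\vect{b}^{(A)}\in\{0,1\}^N$ under a fixed Hamming-weight constraint — not independent $\pm1$-biased coins with bias $\delta\sqrt{k}\,v_I$ — and their support size scales with the query budget. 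Making the families match forces you to rebuild the SCO instances directly from the matrices $A$ and reduce from the $\search\circ\parity$ problem (\lem{search-parity}) rather than from the mean-estimation statement.

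That is precisely what the paper does, and it explains why: the paper states it reduces to $\search\circ\parity$ instead of to mean estimation because the hard instances require tight control of both the norms of the individual vectors and the norm of their mean. Concretely, the paper defines $\g_{i,j}$ from $A$, shows (\lem{barf-properties-combined}) that any $\epsilon$-optimum of $\bar f^A$ yields $\widetilde{\vect b}=\sqrt{d/2}\,\x/\|\x\|$ with $\|\widetilde{\vect b}-\vect b^{(A)}\|\le\sqrt d/2$, and implements the QSGO of $\bar f^A$ with one query to $O_A$, so \lem{search-parity} directly gives $\Omega(\sqrt d\,RL/\epsilon)$ and $\Omega(R^2L^2/\epsilon^2)$ in the two regimes. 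So your proposal, once the gap is closed along the route you suggest, collapses into the paper's argument; as written, the reduction to \lem{mean-estimation-lower} is not a valid step.
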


The proof of \thm{SCO-lower} follow a similar approach to the previous proof of the quantum mean estimation lower bound~\cite{cornelissen2022near}. Specifically, our results are derived through a reduction to a specific multivariate mean estimation problem. In this problem, all vectors are sampled from a set of orthonormal bases and there might be duplicated vectors. We choose to reduce to this particular problem instead of directly reducing to mean estimation problem because the construction of our hard instances require a tighter control over the norm of the vectors as well as the norm of their expected value. This problem can be equivalently represented as the following composition problem, as defined in~\cite{cornelissen2022near}. 

\begin{problem}[$\search^N\circ \parity^M$]\label{prob:parity-search}
Let $N,M\geq 1$ be two integers. Let $\mathcal{A}_{N,M}$ denote the set of matrices $A\in\{0,1\}^{N\times M}$ such that $\lfloor N/2\rfloor$ rows have Hamming weights $\lfloor M/2\rfloor$, and the other rows have Hamming weights $\lfloor M/2\rfloor+1$. Define the vector $\vect{b}^{(A)}$ such that,
\begin{align}\label{eqn:bA-defn}
b_i^{(A)}=
\begin{cases}
0,\quad\text{if the }i\text{-th row of }A\text{ has Hamming weight }\lfloor M/2\rfloor,\\
1,\quad\text{if the }i\text{-th row of }A\text{ has Hamming weight }\lfloor M/2\rfloor+1,
\end{cases}
\end{align}
for each $i\in[N]$. Then, the $\search^N\circ \parity^M$ problem consists of finding a vector $\widetilde{\vect{b}}\in\R^N$ that minimizes $\|\widetilde{\vect{b}}-\vect{b}^{(A)}\|_1$ given a quantum oracle 
\begin{align}\label{eqn:OA-defn}
O_A\ket{i,j}\to(-1)^{A_{ij}}\ket{i,j}.
\end{align}
\end{problem}

\cite{cornelissen2022near} showed that $\Omega(NM)$ queries are needed to approximate the vector $\vect{b}^{(A)}$ with small error.

\begin{lemma}[{\cite[Lemma 3.6]{cornelissen2022near}}]\label{lem:search-parity}
There exists a constant $\alpha>1$ such that, for any quantum algorithm for the $\search^N\circ\parity^M$ problem that uses at most $NM/\alpha$ queries there exists an $A\in\mathcal{A}_{N,M}$ such that this algorithm returns a vector $\widetilde{\vect{b}}$ satisfying $\big\|\widetilde{\vect{b}}-\vect{b}^{(A)}\big\|\geq\sqrt{N}/2$ with probability at least $2/3$.
\end{lemma}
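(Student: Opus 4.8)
The final statement is a quantum query lower bound for the composed problem $\search^N\circ\parity^M$, and the plan is to decouple a single-block lower bound from a composition (direct-sum) argument. The guiding intuition is that recovering $\vect{b}^{(A)}\in\{0,1\}^N$ below Euclidean error $\sqrt{N}/2$ forces an algorithm to correctly identify the ``weight bit'' $b^{(A)}_i$ of a constant fraction of the $N$ rows, and that each such bit is a Hamming-weight-distinguishing problem on $M$ bits requiring $\Omega(M)$ queries; multiplying yields $\Omega(NM)$.

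\textbf{Single block.} First I would show that, for a fixed row, deciding from the phase oracle $\ket{i,j}\mapsto(-1)^{A_{ij}}\ket{i,j}$ (restricted to that row) whether its Hamming weight is $\lfloor M/2\rfloor$ (so $b^{(A)}_i=0$) or $\lfloor M/2\rfloor+1$ (so $b^{(A)}_i=1$) needs $\Omega(M)$ queries to succeed with probability $2/3$, and more precisely that achieving bias $\delta$ over a fair coin needs $\Omega(\delta M)$ queries. This follows from the positive-weight quantum adversary method applied with $0$-inputs the weight-$\lfloor M/2\rfloor$ strings, $1$-inputs the weight-$(\lfloor M/2\rfloor+1)$ strings, and a $0$-input adjacent to a $1$-input iff they differ in a single coordinate: each $0$-input has $\lceil M/2\rceil$ neighbours, each $1$-input has $\lfloor M/2\rfloor+1$ neighbours, and for any fixed coordinate each input has at most one neighbour differing there, so the adversary value is $\Omega(\sqrt{\lceil M/2\rceil(\lfloor M/2\rfloor+1)})=\Omega(M)$. (Symmetrizing and invoking Paturi's bound for univariate polynomials distinguishing the two weights gives the same conclusion, including the bias-$\delta$ refinement.)

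\textbf{Composition.} Next I would take the hard distribution to be uniform over $\mathcal{A}_{N,M}$, under which the rows are (nearly) i.i.d.\ copies of the single-block instance. Given a $q$-query algorithm I would assign to each row $i$ a nonnegative ``query load'' $q_i$ with $\sum_{i\in[N]}q_i\le q$, defined in the BBBV/hybrid style as the total squared amplitude that queries place on the $i$-th block, summed over all queries. A hybrid argument that reroutes the algorithm's queries to rows $\ne i$ onto frozen copies, combined with the single-block bias bound, shows the algorithm's guess for row $i$ has bias $O(q_i/M)$ over a coin flip; hence the expected number of misdecoded rows is at least $N/2-O(q/M)$, which exceeds $N/4+cN$ for an absolute constant $c>0$ once $q\le NM/\alpha$ with $\alpha$ a large enough constant. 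An Azuma-type concentration bound over the rows (needed because the per-row guesses come from one quantum process rather than independent runs) then upgrades this to: the number of misdecoded rows is at least $N/4$, i.e.\ $\|\widetilde{\vect{b}}-\vect{b}^{(A)}\|\ge\sqrt{N}/2$, with probability at least $2/3$; since this holds on average over $A$ it holds for some $A\in\mathcal{A}_{N,M}$. A slicker alternative would be to invoke multiplicativity of the negative-weight adversary bound, $\mathrm{ADV}^{\pm}(\search^N\circ\parity^M)=\Omega(\mathrm{ADV}^{\pm}(\search^N)\cdot\mathrm{ADV}^{\pm}(\parity^M))=\Omega(N\cdot M)$, once one checks the partial-function/relation composition theorem applies to the ``recover a $3/4$-fraction of $N$ bits'' relation and that $\mathrm{ADV}^{\pm}(\search^N)=\Omega(N)$.

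\textbf{Main obstacle.} The single-block bound is routine; the crux is the composition. Because a quantum algorithm may adaptively concentrate its budget on a few rows, ``$q=o(NM)$'' does not literally say ``every row gets $o(M)$ queries'', so one must argue through the query-load decomposition and a careful hybrid reduction; and one must bridge ``small expected per-row bias'' to ``$\ell_2$ error $\ge\sqrt{N}/2$ with probability $2/3$'', which requires either the adversary-composition machinery or a hybrid-plus-concentration argument. I expect essentially all the technical effort to be here.
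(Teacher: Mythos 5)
First, a point of comparison: the paper does not prove this lemma at all --- it is imported verbatim as Lemma~3.6 of Cornelissen et al.~\cite{cornelissen2022near}, so there is no in-paper proof to match your reconstruction against; it can only be judged on its own merits.

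On the merits, your single-block step is fine (Paturi/Bernstein or the adversary method does show that distinguishing central Hamming weights $\lfloor M/2\rfloor$ versus $\lfloor M/2\rfloor+1$ with bias $\delta$ costs $\Omega(\delta M)$ queries), but the composition step you actually lean on has a genuine gap. A BBBV-style hybrid that freezes all rows other than $i$ does not give per-row bias $O(q_i/M)$: since flipping a \emph{single} entry of row $i$ flips $b^{(A)}_i$, the disturbance bound one gets (averaged over which coordinate is flipped) is of the form $\sum_t\norm{P_{ij}\psi_t}\leq\sqrt{T\,q_{ij}}$, where $P_{ij}$ projects onto queries to entry $(i,j)$ and $T$ is the \emph{total} query count; summing over rows with $T=NM/\alpha$ gives a total advantage bound of order $N^{3/2}\sqrt{M}$, which is vacuous against the needed $o(N)$. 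The statement you want --- advantage on block $i$ growing linearly in the query mass $q_i$ placed on that block, uniformly over adaptive allocations --- is precisely a strong-direct-product/adversary-composition type statement and does not follow from the routine hybrid you describe; similarly, the Azuma step is unjustified, because the per-row correctness indicators produced by a single coherent computation are not a martingale difference sequence in any evident filtration. Your ``slicker alternative'' (multiplicativity of the negative-weight adversary applied to the relation of recovering a constant fraction of the $N$ bits, with $\mathrm{ADV}^{\pm}(\parity^M)=\Omega(M)$) is the viable route and is in the spirit of how such bounds for $\search^N\circ\parity^M$ are actually established, but it is exactly the part you defer to ``once one checks the composition theorem applies.'' So as written, the crux is either flawed (the hybrid-plus-concentration version) or assumed (the adversary-composition version); within this paper the intended justification is simply the citation to \cite{cornelissen2022near}.
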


We establish our quantum lower bounds for stochastic convex optimization by establishing a correspondence between the $\search^N\circ\parity^M$ problem and \prob{SCO}. Specifically, for any input $A\in\mathcal{A}_{N\times M}$ to the $\search^N\circ\parity^M$ problem, we design the following convex function whose optimal point corresponds to the solution of the $\search^N\circ\parity^M$ problem under certain sets of parameters,
\begin{align}\label{eqn:barf-defn-combined}
\bar{f}^A(\x)\coloneqq\underset{i,j}{\E}[f_{i,j}(\x)],
\end{align}
where
\begin{align*}
f^A_{i,j}(\x)\coloneqq
-\frac{1}{3}\left\<\x,\g_{i,j}\right\>+\frac{2L}{3}\cdot\max\left\{0,\|\x\|-\frac{R}{2}\right\}
\end{align*}
for some $\g_{i,j}\in\R^d$ satisfying $\|\g_{i,j}\|\leq L$ for all $i\in[N]$ and $j\in[M]$.

\begin{lemma}\label{lem:barf-properties-combined}
Denote $\bar{\g}\coloneqq\underset{i,j}{\E}[\g_{i,j}]$. Then, the function $\bar{f}^A$ defined in \eqn{barf-defn-combined} has the following properties if $\bar{\g}$ is a non-zero vector.
\begin{enumerate}
\item $\bar{f}^A$ is convex.
\item $\bar{f}^A$ is minimized at $\x^*=\frac{R}{2}\frac{\bar{\g}}{\|\bar{\g}\|}$.
\item Every $\epsilon$-optimum $\x$ of $\bar{f}^A$ satisfies
\begin{align*}
\left\<\frac{R}{2}\cdot\frac{\x}{\|\x\|},\x^*\right\>\geq 1-\frac{R\epsilon}{2\|\bar{\g}\|}
\end{align*}
\item For 
\begin{align}\label{eqn:hatg-defn}
\hat{\g}_{i,j}(\x)\coloneqq -\frac{1}{3}\cdot\g_{i,j}+\frac{2L}{3}\cdot\max\left\{\|\x\|-\frac{R}{2},0\right\}\cdot\frac{\x}{\|\x\|}
\end{align}
it is the case that
\begin{align*}
\underset{i,j}{\E}[\hat{\g}_{i,j}(\x)]\in\partial\bar{f}^A(\x)
\quad\text{ and }\quad\|\hat{\g}_{i,j}(\x)\|\leq L,\qquad\forall\x\in\R^d.
\end{align*}
\end{enumerate}
\end{lemma}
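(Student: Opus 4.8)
The plan is to reduce all four parts to the explicit closed form of $\bar{f}^A$ and then carry out elementary convex analysis. Since the penalty term $\frac{2L}{3}\max\{0,\|\x\|-R/2\}$ in $f^A_{i,j}$ does not depend on $(i,j)$, averaging \eqn{barf-defn-combined} gives
\begin{align}\label{eqn:barf-plan-cf}
\bar{f}^A(\x)=-\tfrac13\<\x,\bar{\g}\>+\tfrac{2L}{3}\max\{0,\|\x\|-R/2\},
\end{align}
and likewise $\E_{i,j}[\hat{\g}_{i,j}(\x)]$ equals $-\tfrac13\bar{\g}$ plus the radial term appearing in \eqn{hatg-defn}, which is $(i,j)$-independent and hence unaffected by the average. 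Two facts are used throughout: $\|\bar{\g}\|\leq\max_{i,j}\|\g_{i,j}\|\leq L$, and in particular $\|\bar{\g}\|<2L$, which is what makes the penalty slope strictly dominate the slope of the linear part.

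For item 1, $\x\mapsto-\tfrac13\<\x,\bar{\g}\>$ is affine and $\x\mapsto\max\{0,\|\x\|-R/2\}$ is the composition of the nondecreasing convex scalar map $t\mapsto\max\{0,t-R/2\}$ with the convex function $\|\cdot\|$, hence convex; so \eqn{barf-plan-cf} is convex. For item 2, I would split on whether $\|\x\|\leq R/2$. On the closed ball $\B_{R/2}(\0)$ we have $\bar{f}^A(\x)=-\tfrac13\<\x,\bar{\g}\>\geq-\tfrac13\|\x\|\,\|\bar{\g}\|\geq-\tfrac{R}{6}\|\bar{\g}\|$, with equality iff $\x=\x^*$. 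Outside the ball, Cauchy--Schwarz gives $\bar{f}^A(\x)\geq\tfrac13\|\x\|(2L-\|\bar{\g}\|)-\tfrac{LR}{3}$, which, since $2L-\|\bar{\g}\|>0$, is strictly increasing in $\|\x\|$ on $(R/2,\infty)$ and hence strictly above its limiting value $-\tfrac{R}{6}\|\bar{\g}\|$ at $\|\x\|=R/2$. Therefore $\bar{f}^A$ is minimized precisely at $\x^*=\tfrac{R}{2}\bar{\g}/\|\bar{\g}\|$, with minimum value $-\tfrac{R}{6}\|\bar{\g}\|$.

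For item 3, let $\x$ be $\epsilon$-optimal. Substituting the minimum value into $\bar{f}^A(\x)\leq\bar{f}^A(\x^*)+\epsilon$ and discarding the nonnegative penalty yields a lower bound on $\<\x,\bar{\g}\>$; the case $\|\x\|>R/2$ yields the same bound on $\<\x/\|\x\|,\bar{\g}\>$, since there the coefficient $2L$ of $\|\x\|$ strictly exceeds $\|\bar{\g}\|$. Dividing by $\|\x\|$ and by $\|\bar{\g}\|$ then lower-bounds the cosine between $\x$ and $\bar{\g}$ (equivalently, between $\x$ and $\x^*$), and multiplying back through by the relevant norms gives an inequality of the stated form; the precise constant is arithmetic bookkeeping on \eqn{barf-plan-cf} and the minimum value.

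For item 4, I would compute $\partial\bar{f}^A(\x)$ by the sum rule together with the subdifferential of $p(\x):=\max\{0,\|\x\|-R/2\}$: by the chain rule $\partial p(\x)=\{\0\}$ for $\|\x\|<R/2$, $\partial p(\x)=\{\x/\|\x\|\}$ for $\|\x\|>R/2$, and $\partial p(\x)=\{\lambda\x/\|\x\|:\lambda\in[0,1]\}$ for $\|\x\|=R/2$, so every element of $\partial p(\x)$ is a radial vector of norm at most $1$ and $\partial\bar{f}^A(\x)=-\tfrac13\bar{\g}+\tfrac{2L}{3}\,\partial p(\x)$. One checks that the radial term of \eqn{hatg-defn} is, at every $\x$, a valid selection from $\tfrac{2L}{3}\,\partial p(\x)$, whence $\E_{i,j}[\hat{\g}_{i,j}(\x)]\in\partial\bar{f}^A(\x)$; and since that term has norm at most $\tfrac{2L}{3}$ while $\|\tfrac13\g_{i,j}\|\leq\tfrac{L}{3}$, the triangle inequality gives $\|\hat{\g}_{i,j}(\x)\|\leq\tfrac{L}{3}+\tfrac{2L}{3}=L$. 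The one place needing genuine care is the kink on the sphere $\|\x\|=R/2$: in item 2 one must confirm $\bar{f}^A$ does not dip below its boundary value just outside the ball (exactly where $\|\bar{\g}\|<2L$ enters), and in item 4 one must work with the set-valued subdifferential there rather than a one-sided gradient. Beyond that, and the constant-tracking in item 3, none of the four claims presents a real obstacle.
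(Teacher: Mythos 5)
Your proposal is correct and follows essentially the same route as the paper's proof: the same closed form for $\bar{f}^A$, the same inside/outside-the-ball split using $\|\bar{\g}\|\leq L$ to locate the minimizer, the same suboptimality-of-the-linear-part argument for item 3 (the paper packages the outside case via the radial projection of $\x$ onto $\B_{R/2}(\0)$, while you argue directly from the penalty slope $2L>\|\bar{\g}\|$, which is equivalent), and the same subgradient-plus-triangle-inequality argument for item 4. If anything, you are slightly more explicit than the paper about the set-valued subdifferential at the kink $\|\x\|=R/2$ and about why the minimum cannot lie outside the ball; the deferred constant bookkeeping in item 3 is at the same level of precision as the paper's own derivation.
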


\begin{proof}
Since  $\bar{f}^A$ is the sum of a linear function and a maximum function, both of which are convex, $\bar{f}^A$ itself is convex.

Observe that the minimum of $\bar{f}^A$ cannot be achieved when $\|\x\|>R/2$ for otherwise the function value can be further decreased when moving towards $\0$ given that 
\begin{align*}
\|\bar{\g}\|=\left\|\frac{\sum_{i,j}\g_{i,j}}{NM}\right\|\leq\frac{\sum_{i,j}\|\g_{i,j}\|}{NM}\leq L.
\end{align*}
When $\|\x\|\leq R/2$, $\bar{f}^A$ can be expressed as
\begin{align*}
\bar{f}^A(\x)=-\frac{1}{3}\left\<\x,\bar{\g}\right\>,
\end{align*}
which is minimized at
\begin{align*}
\frac{R}{2}\cdot\frac{\bar{\g}}{\|\bar{\g}\|}=\frac{R\vect{b}^{(A)}}{\sqrt{2d}}.
\end{align*}
For any $\x$ that is an $\epsilon$-optimum of $\bar{f}^A$, we define 
\begin{align*}
\x'=\begin{cases}
\x,\quad\|\x\|\leq\frac{R}{2},\\
\frac{R}{2}\cdot\frac{\x}{\|\x\|},\quad\text{otherwise},
\end{cases}
\end{align*}
which satisfies $f(\x')\leq f(\x)$ and is thus also an $\epsilon$-optimum of $f$. Then we can derive that
\begin{align*}
-\frac{1}{3}\left\<\x',\bar{\g}\right\>+\frac{1}{3}\left\<\x^*,\bar{\g}\right\>\leq\epsilon,
\end{align*}
and
\begin{align*}
\left\<\x',\x^*\right\>\geq \left\<\x^*,\x^*\right\>-\epsilon\cdot\frac{\|\x^*\|}{\|\bar{\g}\|}\geq 1-\frac{R\epsilon}{2\|\bar{\g}\|},
\end{align*}
which leads to
\begin{align*}
\left\<\frac{R}{2}\cdot\frac{\x}{\|\x\|},\x^*\right\>\geq 1-\frac{R\epsilon}{2\|\bar{\g}\|}
\end{align*}
given that $\frac{R}{2}\frac{\x}{\|\x\|}\geq\|\x'\|$.
As for the last entry, note that
\begin{align*}
\underset{i,j}{\E}[\hat{\g}_{i,j}(\x)]&=-\frac{1}{3}\cdot\underset{i,j}{\E}\g_{i,j}+\frac{2L}{3}\cdot\max\left\{\|\x\|-\frac{R}{2}\right\}\cdot\frac{\x}{\|\x\|}\\
&=-\frac{1}{3}\cdot\bar{\g}+\frac{2L}{3}\cdot\max\left\{\|\x\|-\frac{R}{2}\right\}\cdot\frac{\x}{\|\x\|},
\end{align*}
which is the (sub-)gradient of $\bar{f}^A$ at $\x$. Moreover, for any $i,j$ we have
\begin{align*}
\|\hat{\g}_{i,j}(\x)\|\leq\frac{1}{3}\|\g_{i,j}\|+\frac{2L}{3}\leq L.
\end{align*}
\end{proof}

Equipped with \lem{barf-properties-combined}, we first prove our quantum lower bound for \prob{SCO} in the low-dimensional regime by encoding the $\search\circ\parity$ problem into the task of finding an $\epsilon$-optimal point of $\bar{f}^A$ defined in \eqn{barf-defn-combined}. 

\begin{proof}[Proof of \thm{SCO-lower} when $\epsilon\leq\frac{RL}{100\sqrt{d}}$]
Our proof proceeds by contradiction. Denote $n=\frac{RL\sqrt{d}}{100\alpha\epsilon}$. Assume for simplicity that $d$ is even, and $\alpha n$ is an even multiple of $d$ (the other cases can be handled by padding arguments). Then, we show that $\search^d\circ\parity^{\alpha n/d}$ on this instance $A$ can be solved by finding an $\epsilon$-optimum of $\bar{f}^A$ defined in \eqn{barf-defn-combined}. In particular, we define the set of vectors $\{\g_{i,j}\}$ to be
\begin{align}\label{eqn:g-low}
\g_{i,j}=\frac{\alpha L n}{\sqrt{(2\alpha n)^2-d^2}}(-1)^{1+A_{i,j}}\e_i,
\end{align}
where $i\in[d]$, $j\in [\alpha n/d]$, and $\e_i\in\R^d$ is the $i$-th indicator vector. Consider the function $\bar{f}^A$ defined in \eqn{barf-defn-combined} with the set of vectors $\{\g_{i,j}\}$ having the values in \eqn{g-low}, we have
\begin{align*}
\bar{\g}=\underset{i,j}{\E}[\g_{i,j}]
&=\frac{1}{\alpha n}\cdot\frac{\alpha L n}{\sqrt{(2\alpha n)^2-d^2}}\sum_{i=1}^d\e_i \sum_{j=1}^{\alpha n/d}(-1)^{1+A_{i,j}}\\
&=\frac{2L}{\sqrt{(2\alpha n)^2-d^2}}\sum_{i=1}^d b_i^{(A)}=\frac{2L}{\sqrt{(2\alpha n)^2-d^2}}\vect{b}^{(A)},
\end{align*}
where the vector $\vect{b}^{(A)}$ is defined in \eqn{bA-defn}. By \lem{barf-properties-combined}, every $\epsilon$-optimal point $\x$ of $\bar{f}^A$ satisfies 
\begin{align*}
\left\<\frac{R}{2}\frac{\x}{\|\x\|},\x^*\right\>\geq 1-\frac{R\epsilon}{2\|\bar{\g}\|},
\end{align*}
by which we can derive that
\begin{align*}
\left\<\frac{R}{2}\frac{\x}{\|\x\|},\vect{b}^{(A)}\right\>\geq \frac{\|\vect{b}^{(A)}\|}{\|\x^*\|}\cdot\left(1-\frac{R\epsilon}{2\|\bar{\g}\|}\right)\geq \frac{R}{2}\sqrt{\frac{d}{2}}-\frac{3\alpha n\epsilon}{L}=\frac{R\sqrt{d}}{4},
\end{align*}
which leads to
\begin{align*}
\left\<\sqrt{\frac{d}{2}}\cdot\frac{\x}{\|\x\|},\vect{b}^{(A)}\right\>\geq\frac{d}{2\sqrt{2}}.
\end{align*}
Given that
\begin{align*}
\left\|\sqrt{\frac{d}{2}}\cdot\frac{\x}{\|\x\|}\right\|=\left\|\vect{b}^{(A)}\right\|=\sqrt{\frac{d}{2}},
\end{align*}
we have
\begin{align*}
\left\|\sqrt{\frac{d}{2}}\cdot\frac{\x}{\|\x\|}-\vect{b}^{(A)}\right\|\leq\frac{\sqrt{d}}{2},
\end{align*}
indicating that we can obtain an estimate $\widetilde{\vect{b}}=\sqrt{\frac{d}{2}}\frac{\x}{\|\x\|}$ of $\vect{b}^{(A)}$ with $\ell_2$-error at most $\sqrt{d}/2$ by obtaining an $\epsilon$-optimal point of $\bar{f}^A$.

Next, we show that we can implement a quantum stochastic gradient oracle of $\bar{f}^A$ using only one query to the oracle $O_A$, given that one can use one query to $O_A$ to implement the following oracle
\begin{align*}
O^A_{\g}\ket{i}\ket{j}\ket{0}\to\ket{i}\ket{j}\ket{\g_{i,j}}
\end{align*}
and use one query to $O^A_{\g}$ to implement 
\begin{align*}
O^A_{\hat{\g}}\ket{\x}\ket{i}\ket{j}\ket{0}\to\ket{\x}\ket{i}\ket{j}\ket{\hat{\g}_{i,j}(\x)},
\end{align*}
where $\hat{\g}_{i,j}(\x)$ is defined in \eqn{hatg-defn}. Then by applying $O^A_{\hat{\g}}$ to the state
\begin{align*}
\frac{1}{\sqrt{\alpha n}}\ket{\x}\sum_{i\in[d]}\sum_{j\in[\alpha n/d]}\ket{i}\ket{j}\ket{0}
\end{align*}
and uncompute the second and the third register, we construct the following stochastic gradient oracle $O_{\hat{\g}}$ of $\bar{f}^A$.
\begin{align*}
O_{\hat{\g}}\ket{\x}\otimes\ket{0}\to\frac{1}{\sqrt{\alpha n}}\ket{\x}\otimes\sum_{i\in[d]}\sum_{j\in[\alpha n/d]}\ket{\hat{\g}_{i,j}(\x)}.
\end{align*}
Hence, any quantum algorithm that can find an $\epsilon$-optimum of $\bar{f}^A$ using $T$ queries to $O_{\hat{\g}}$ can be transformed into a quantum algorithm for the $\search^d\circ\parity^{\alpha n/d}$ problem that uses $T$ queries to the oracle $O_A$ defined in \eqn{OA-defn} and returns an estimate $\widetilde{\vect{b}}$ with $\ell_2$-error at most $\sqrt{d}/2$. Then by \lem{search-parity}, for any quantum algorithm making less than
\begin{align*}
d\cdot\frac{\alpha n}{d}\cdot\frac{1}{\alpha}=\frac{RL}{2\epsilon\alpha}\sqrt{\frac{d}{2}}
\end{align*}
queries to the quantum stochastic gradient oracle $O_{\hat{\g}}$, there exists an $A\in\mathcal{A}_{d,\alpha n/d}$ and corresponding $\bar{f}^A$ such that the output of the algorithm is not an $\epsilon$-optimum of $\bar{f}^A$ with probability at least $2/3$.
\end{proof}

Similarly, we can obtain our quantum lower bound in the high-dimensional regime by encoding the $\search\circ\parity$ problem into the task of finding an $\epsilon$-optimal point of $\bar{f}^A$ defined in \eqn{barf-defn-combined}. 

\begin{proof}[Proof of \thm{SCO-lower} when $\frac{RL}{100\sqrt{d}}\leq\epsilon\leq 1$]
Our proof proceeds by contradiction. Denote $n=\frac{R^2L^2}{10000\alpha\epsilon^2}$. Consider an instance $A\in\mathcal{A}_{\alpha n,1}$ of the $\search^{\alpha n}\circ\parity^1$ problem where $d$ is a power of 2 and $\alpha$ is even. Then, we show that $\search^d\circ\parity^{\alpha n/d}$ on this instance $A$ can be solved by finding an $\epsilon$-optimum of $\bar{f}^A$ defined in \eqn{barf-defn-combined}. In particular, we define the set of vectors $\{\g_{i,j}\}$ where $j\equiv 1$ to be
\begin{align}\label{eqn:g-high}
\g_{i,1}=\alpha LA_{i,1}\sqrt{\frac{n}{2\alpha^2n-2\alpha}}\e_i,
\end{align}
where $i\in[\alpha n]$ and $\e_i\in\R^d$ is the $i$-th indicator vector. Consider the function $\bar{f}^A$ defined in \eqn{barf-defn-combined} with the set of vectors $\{\g_{i,1}\}$ having the values in \eqn{g-high}, we have
\begin{align*}
\bar{\g}=\underset{i}{\E}[\g_{i,1}]
=\frac{\alpha L}{\alpha n}\sqrt{\frac{n}{2\alpha^2n-2\alpha}}\sum_{i=1}^dA_{i,1}\e_i
=\frac{L}{n}\sqrt{\frac{n}{2\alpha^2n-2\alpha}}\vect{b}^{(A)},
\end{align*}
where the vector $\vect{b}^{(A)}$ is defined in \eqn{bA-defn}. By \lem{barf-properties-combined}, every $\epsilon$-optimal point $\x$ of $\bar{f}^A$ satisfies 
\begin{align*}
\left\<\frac{R}{2}\frac{\x}{\|\x\|},\x^*\right\>\geq 1-\frac{R\epsilon}{2\|\bar{\g}\|},
\end{align*}
by which we can derive that
\begin{align*}
\left\<\frac{R}{2}\frac{\x}{\|\x\|},\vect{b}^{(A)}\right\>\geq \frac{\|\vect{b}^{(A)}\|}{\|\x^*\|}\cdot\left(1-\frac{R\epsilon}{2\|\bar{\g}\|}\right)\geq \frac{R}{2}\sqrt{\frac{\alpha n}{2}}-\frac{6\alpha n\epsilon}{L}=\frac{R\sqrt{\alpha n}}{4},
\end{align*}
which leads to
\begin{align*}
\left\<\sqrt{\frac{\alpha n}{2}}\cdot\frac{\x}{\|\x\|},\vect{b}^{(A)}\right\>\geq\frac{\alpha n}{2\sqrt{2}}.
\end{align*}
Given that
\begin{align*}
\left\|\sqrt{\frac{\alpha n}{2}}\cdot\frac{\x'}{\|\x'\|}\right\|=\left\|\vect{b}^{(A)}\right\|=\sqrt{\frac{\alpha n}{2}},
\end{align*}
we have
\begin{align*}
\left\|\sqrt{\frac{\alpha n}{2}}\cdot\frac{\x}{\|\x\|}-\vect{b}^{(A)}\right\|\leq\frac{\sqrt{\alpha n}}{2}.
\end{align*}
indicating that we can obtain an estimate $\widetilde{\vect{b}}=\sqrt{\frac{\alpha n}{2}}\frac{\x}{\|\x\|}$ of $\vect{b}^{(A)}$ with $\ell_2$-error at most $\sqrt{d}/2$ by obtaining an $\epsilon$-optimal point of $\bar{f}^A$.

Next, we show that we can implement a quantum stochastic gradient oracle of $\bar{f}^A$ using only one query to the oracle $O_A$, given that one can use one query to $O_A$ to implement the following oracle
\begin{align*}
O^A_{\g}\ket{i}\ket{0}\to\ket{i}\ket{\g_{i,1}}
\end{align*}
and use one query to $O^A_{\g}$ to implement 
\begin{align*}
O^A_{\hat{\g}}\ket{\x}\ket{i}\ket{0}\to\ket{\x}\ket{i}\ket{\hat{\g}_{i,1}(\x)},
\end{align*}
where $\hat{\g}_{i,1}(\x)$ is defined in \eqn{hatg-defn}. Then by applying $O^A_{\hat{\g}}$ to the state
\begin{align*}
\frac{1}{\sqrt{\alpha n}}\ket{\x}\sum_{i\in[\alpha n]}\ket{i}\ket{0}
\end{align*}
and uncompute the second and the third register, we construct the following stochastic gradient oracle $O_{\hat{\g}}$ of $\bar{f}^A$.
\begin{align*}
O_{\hat{\g}}\ket{\x}\otimes\ket{0}\to\frac{1}{\sqrt{\alpha n}}\ket{\x}\otimes\sum_{i\in[\alpha n]}\ket{\hat{\g}_{i,1}(\x)}.
\end{align*}
Hence, any quantum algorithm that can find an $\epsilon$-optimum of $\bar{f}^A$ using $T$ queries to $O_{\hat{\g}}$ can be transformed into a quantum algorithm for the $\search^{\alpha n}\circ\parity^{1}$ problem that uses $T$ queries to the oracle $O_A$ defined in \eqn{OA-defn} and returns an estimate $\widetilde{\vect{b}}$ with $\ell_2$-error at most $\sqrt{d}/2$. Then by \lem{search-parity}, for any quantum algorithm making less than
\begin{align*}
\alpha n=\frac{R^2L^2}{100\epsilon^2}
\end{align*}
queries to the quantum stochastic gradient oracle $O_{\hat{\g}}$, there exists an $A\in\mathcal{A}_{\alpha n,1}$ and corresponding $\bar{f}^A$ such that the output of the algorithm is not an $\epsilon$-optimum of $\bar{f}^A$ with probability at least $2/3$.
\end{proof}

\section{Conclusion}
\label{sec:conclusion}

We presented improved quantum algorithms for stochastic optimization. We developed a new technical tool which we call \emph{quantum variance reduction} and show how to use it to improve upon the query complexity for stochastic convex optimization and for critical point computation in smooth, stochastic, non-convex functions. Further, we provided lower bounds which establish both the optimality of our quantum variance reduction technique and of one of our stochastic convex optimization algorithms in low dimensions. A natural open problem suggested by our work is to establish the optimal complexity of the problems we study, e.g., stochastic convex optimization and stochastic non-convex optimization with quantum oracle access, in higher dimensions. We hope this paper fuels further study of these problems. 

\section*{Acknowledgement}
We thank Adam Bouland, Yair Carmon, Andr\'as Gily\'en, Yujia Jin, and Tongyang Li for helpful discussions. A.S.\ was supported in part by a Microsoft Research Faculty Fellowship, NSF CAREER Award CCF-1844855, NSF Grant CCF-1955039, a PayPal research award, and a Sloan Research Fellowship. C.Z.\ was supported in part by the Shoucheng Zhang Graduate Fellowship.

\bibliographystyle{myhamsplain}
\bibliography{quantum-sco}

\end{document}